\newtheorem{theorem}{Theorem}
\newtheorem{lemma}{Lemma}
\newtheorem{corollary}{Corollary}
\newtheorem{proposition}{Proposition}
\theoremstyle{definition}
\newtheorem{remark}{Remark}
\newtheorem{example}{Example}
\newtheorem{definition}{Definition}
\newcommand{\cal}{\mathcal}
\newcommand{\N}{\mathbb N}
\newcommand{\R}{\mathbb R}
\newcommand{\C}{\mathbb C}
\newcommand{\T}{\mathbb T}
\newcommand{\hi}{\mathcal{H}} %Hilbert space
\newcommand{\hd}{{\mathcal{H}_\oplus}} %direct integral Hilbert space?
\newcommand{\ki}{\mathcal{K}} %Hilbert space
\newcommand{\li}{\mathcal{L}} %other Hilbert space
\newcommand{\lh}{\mathcal{L(H)}} %bounded linear operators
\newcommand{\lk}{\mathcal{L(K)}} %bounded linear operators
\renewcommand{\th}{\mathcal{T(H)}} %trace class operators
\newcommand{\tk}{\mathcal{T(K)}} %trace class operators
\newcommand{\sh}{\mathcal{S(H)}} %states
\newcommand{\ph}{\mathcal{P(H)}} %projections
\newcommand{\tr}[1]{\mathrm{tr}\left[#1\right]} %trace
\newcommand{\ket}[1]{|#1\rangle} %ket
\newcommand{\kb}[2]{|#1\,\rangle\langle\,#2|} %ketbra
\newcommand{\Mo}{\mathsf{M}} %generic observable
\newcommand{\No}{\mathsf{N}} %generic observable
\newcommand{\Eo}{\mathsf{E}} %generic observable
\newcommand{\Fo}{\mathsf{F}} %generic observable
\newcommand{\sfe}{\mathsf{E}} %generic observable
\newcommand{\Po}{\mathsf{P}} %generic observable
\newcommand{\Jo}{\mathsf{J}} %joint observable
\newcommand{\Qo}{\mathsf{Q}}
\newcommand{\Ro}{\mathsf{R}}
\newcommand{\M}{\mathcal M} %projection
\renewcommand{\O}{\mathrm{Obs}} %observables
\newcommand{\A}{\mathsf{A}}
\newcommand{\Si}\Sigma
\newcommand{\si}\sigma
\newcommand{\mo}{\mathsf{m}}
\def\<{\langle}
\def\>{\rangle}
\def\d{{\mathrm d}}
\newcommand{\bo}[1]{\mathcal{B}(#1)} %Borel sigma-algebra
\newcommand{\fii}{\varphi}
\newcommand{\CHI}[1]{\ensuremath{ \chi\raisebox{-1ex}{$\scriptstyle #1$} }}
\newcommand{\CHII}[1]{\ensuremath{{\hat\chi}\raisebox{-1ex}{$\scriptstyle #1$} }}
\newcommand{\ov}{\overline}
\newcommand{\lin}{{\rm lin}}
\newcommand{\e}{{\bf h}}
\newcommand\Om\Omega
\begin{document}

\title{Optimal quantum observables}

\author{Erkka Theodor Haapasalo}
\address{Department of Nuclear Engineering, Kyoto University, 6158540 Kyoto, Japan}

\author{Juha-Pekka Pellonp\"a\"a}
\address{Turku Centre for Quantum Physics, Department of Physics and Astronomy, University of Turku, FI-20014 Turku, Finland}

\begin{abstract}
Various forms of optimality for quantum observables described as normalized positive operator valued measures (POVMs) are studied in this paper. We give characterizations for observables that determine the values of the measured quantity with probabilistic certainty  or a state of the system before or after the measurement. We investigate observables which are free from noise caused by classical post-processing, mixing, or pre-processing of quantum nature. Especially, a complete characterization of pre-processing and post-processing clean observables is given, and necessary and sufficient conditions are imposed on informationally complete POVMs within the set of pure states. We also discuss joint and sequential measurements of optimal quantum observables.
\newline

\noindent
PACS numbers: 03.65.Ta, 03.67.--a
\end{abstract}

\maketitle

\section{Introduction}\label{sec:intro}

A normalized positive operator valued measure (POVM) describes the statistics of the outcomes of a quantum measurement and thus we call them as observables of a quantum system. However, some observables can be considered better than the others according to different criteria: The observable may be powerful enough to differentiate between any given initial states of the system or it may be decisive enough to completely determine the state after the measurement no matter how we measure this observable. The observable may also be free from different types of noise either of classical or quantum nature or its measurement cannot be reduced to a measurement of a more informational observable from the measurement of which it can be obtained by modifying either the initial state or the outcome statistics.

We study these various notions of optimality for quantum observables and investigate how they are interrelated. An extensive review of optimal observables and new results especially dealing with post- and pre-processing are given. In this introduction, we approach these problems within a simple setting only considering discrete observables of finite-dimensional quantum systems, formally define the optimality properties outlined above, and characterize observables associated with these properties. In the rest of this paper, we give definitions of optimality in the general case involving also `continuous' observables of infinite-dimensional systems and characterize the optimal observables. However, one can obtain valuable insight in this general case first by looking at the mathematically simpler discussion as follows.

As advertised, let us first consider a POVM $\Mo$ with finitely many values (or outcomes) $\Om=\{x_1,\,x_2\,\ldots,\,x_N\}$ on a finite-dimensional quantum system with the associated Hilbert space $\hi$ (denote $d=\dim\hi<\infty$). This means that we do not need to go into measure theoretical or functional analytical details in this introduction. The POVM $\Mo$ can be viewed as a collection $(\Mo_1,\,\Mo_2,\ldots,\,\Mo_N)$ of  positive semidefinite $d\times d$--matrices $\Mo_i$ such that $\sum_{i=1}^N\Mo_i$ is the identity matrix when (by fixing an orthonormal basis) we identify $\hi$ with $\C^d$ and the bounded operators on $\hi$ with elements of the matrix algebra $\M_d(\C)$. A state of the system is represented as a density matrix $\rho$, that is, a positive semidefinite matrix of trace 1, and the number $p_i=\tr{\rho\Mo_i}\in[0,1]$ is interpreted as the probability of getting an outcome $x_i$ when a measurement of $\Mo$ is performed and the system is in the (initial or input) state $\rho$. Actually, $\Mo$ is a map which assigns to each subset $X$ of $\Om$ a positive matrix $\Mo(X)=\sum_{x_i\in X}\Mo_i$ so that $\tr{\rho\Mo(X)}$ is the probability of getting an outcome belonging to the set $X$. Especially, $\Mo\big(\{x_i\}\big)=\Mo_i$.

We fix a POVM $\Mo$ as above and study its different optimality criteria (in the categories of discrete POVMs in finite dimensions). For that we will need another discrete POVM $\Mo'$, or $(\Mo'_1,\,\Mo'_2,\ldots,\,\Mo'_{N'})$, which acts in a $d'$-dimensional Hilbert space $\hi'\cong\C^{d'}$. Without restricting generality, we will assume that the matrices $\Mo_i$ and $\Mo'_j$ are nonzero.\footnote{If, for instance, $\Mo_i=0$ then $p_i=0$ regardless of the state $\rho$ so the outcome $x_i$ is never obtained and we may replace $\Omega$ by $\Omega\setminus\{x_i\}$ and similarly remove all outcomes related to zero matrices.}

Write $\Mo_i=\sum_{k=1}^{m_i}\lambda_{ik}\kb{\fii_{ik}}{\fii_{ik}}=\sum_{k=1}^{m_i}\kb{d_{ik}}{d_{ik}}$ where the eigenvectors $\fii_{ik}$, $k=1,\ldots,\,m_i$, form an orthonormal set, the eigenvalues $\lambda_{ik}$  are positive (and bounded by 1), and $d_{ik}=\sqrt{\lambda_{ik}}\fii_{ik}$. We say that $m_i$ is the multiplicity of the outcome $x_i$ or the rank of $\Mo_i$, and $\Mo$ is of rank 1 if $m_i=1$ for all $i=1,\ldots,N$. Recall that rank-1 observables have many important properties \cite{BuKeDPeWe2005,HeWo,Part2,Pell,Pell',Pell2}. For example, their measurements break entanglement completely between the system and its environment \cite{Pell}. One can define a (maximal) rank-1 refinement POVM $\Mo^{\bm1}$ of $\Mo$ via $\Mo^{\bm1}_{ik}=\kb{d_{ik}}{d_{ik}}$, $i=1,\ldots,N$, $k=1,\ldots,m_i$. Now $\Mo^{\bm1}$ has $N^1=\sum_{i=1}^N m_i$ outcomes and $p_i=\tr{\rho\Mo_i}=\sum_{k=1}^{m_i}\tr{\rho\Mo^{\bm1}_{ik}}$ (i.e.\ $\Mo$ is a relabeling of $\Mo^{\bm1}$) thus showing that any measurement of $\Mo^{\bm1}$ can be viewed as a measurement of $\Mo$, the so-called complete measurement, since the value space of $\Mo^{\bm1}$ `contains' also the multiplicities $k\le m_i$ of the measurement outcomes $x_i$ of $\Mo$, see \cite{Pell,Pell',Pell2} for further properties of complete measurements.

Let then $\hd$ be a Hilbert space spanned by an orthonormal basis $e_{ik}$ where $i=1,\ldots,N$ and $k=1,\ldots,m_i$. Obviously, $\dim\hd=N^1$.
Define a discrete normalized projection valued measure (PVM) $\Po=(\Po_1,\ldots,\Po_N)$ of $\hd$ via
$
\Po_i=\sum_{k=1}^{m_i}\kb{e_{ik}}{e_{ik}}
$
so that  $\Po_i\hd$ is spanned by the vectors $e_{ik}$, $k=1,\ldots,m_i$, and we may write (the direct sum)
$
\hd=\bigoplus_{i=1}^N(\Po_i\hd).
$
Define an isometry $J:\,\hi\to\hd$, 
$
J=\sum_{i=1}^N\sum_{k=1}^{m_i}\kb{e_{ik}}{d_{ik}}
$
for which
$
J^*\Po_iJ=\Mo_i.
$
Hence, $\big(\hd,J,\Po\big)$ is a Na\u{\i}mark dilation of $\Mo$.\footnote{The dilation is minimal, that is, the span of vectors $\Po_i J\phi$, $i=1,\ldots,N$, $\phi\in\hi$, is the whole $\hd$. Indeed, this follows immediately from equation $\psi=\sum_{i=1}^N\sum_{k=1}^{m_i}\<e_{ik}|\psi\>e_{ik}=
\sum_{i=1}^N\sum_{k=1}^{m_i}\<e_{ik}|\psi\>\lambda_{ik}^{-1}\Po_i J d_{ik}$ where $\psi\in\hd$.} Note that one can identify $\hi$ with a (closed) subspace $J\hi$ of $\hd$, equipped with the projection $JJ^*$ from $\hd$ onto $J\hi$, and we may briefly write $\hd=\hi\oplus\hi^\perp$. Especially, any state $\rho$ of $\hi$ can be viewed as a state $J\rho J^*$ of the bigger space $\hd$. By using this interpretation, any measurement of $\Po$ in the subsystem's state can be viewed as a measurement of $\Mo$ via $p_i=\tr{\rho\Mo_i}=\tr{J\rho J^*\Po_i}$. Finally, we note that $\Mo$ is a PVM if and only if $J$ is unitary (i.e.\ $\{d_{ik}\}_{i,k}$ is an orthonormal basis of $\hi$). In this case one can identify $\hd$ with $\hi$ and $\Po$ with $\Mo$ e.g.\ by setting $e_{ik}=d_{ik}$.

\begin{remark}\rm\label{remu1}
Let $\big(\hd,J,\Po\big)$ be a (minimal) Na\u{\i}mark dilation of the POVM $\Mo$ as above.
Without restricting generality, one can pick any orthonormal basis $\{e_n\}_{n=1}^\infty$ of an infinite-dimensional  Hilbert space $\hi_\infty$ and choose $e_{ik}=e_i\otimes e_k\in\hi_\infty\otimes\hi_\infty$ so that $\hd$ becomes a (closed) subspace of $\hi_\infty\otimes\hi_\infty$
and $\Po_i=\kb{e_{i}}{e_{i}}\otimes\sum_{k=1}^{m_i}\kb{e_{k}}{e_{k}}\le\Po'_i\otimes I_M$ where
$\Po'_i=\kb{e_{i}}{e_{i}}$, $i=1,\ldots,N$, constitutes a rank-1 PVM $\Po'$ in an $N$-dimensional space $\hi_N=\lin\{e_i\,|\,i=1,\ldots,N\}$ and $I_M=\sum_{k=1}^{M}\kb{e_{k}}{e_{k}}$ is the identity operator of $\hi_M=\lin\{e_i\,|\,i=1,\ldots,M\}$ where $M=\max_{i\le N}\{m_i\}$. 
In addition, $J$ can be interpreted as an isometry from $\hi$ into $\hi_N\otimes\hi_M$ by the same formula $J=\sum_{i=1}^N\sum_{k=1}^{m_i}\kb{e_i\otimes e_k}{d_{ik}}$ and we have\footnote{Hence, $\big(\hi_N\otimes\hi_M,J,\Po'\otimes I_M\big)$ is a 
Na\u{\i}mark dilation of $\Mo$, which is minimal if and only if $m_i=M$ for all $i=1,\ldots,N$.} 
$\Mo_i=J^*(\Po'_i\otimes I_M)J=\Phi_J(\Po_i')$ where 
$\Phi_J$ is a (completely positive) Heisenberg channel, $\Phi_J(B)=J^*(B\otimes I_M)J=\sum_{s=1}^M\A_s^*B\A_s$ where $B$ is an $N\times N$--matrix (i.e.\ $B\in\M_N(\C)$).\footnote{Note that the Kraus operators $\A_s=\sum_{i=1}^N\kb{e_i}{d_{is}}$ are linearly independent, i.e.\ the Kraus decomposition of $\Phi_J$ is minimal. In addition,  the corresponding Schr\"odinger channel $(\Phi_J)_*$ transforms a $d\times d$--state $\rho$ to the $N\times N$--state $\rho'=\sum_{s=1}^M\A_s\rho\A_s^*$ and $p_i=\tr{\rho\Mo_i}=\tr{\rho'\Po_i'}=\<e_i|\rho'|e_i\>$ holds.}
\end{remark}

Davies and Lewis \cite{DaLe} introduced the concept of instrument which turned out to be crucial in developing quantum measurement theory since,
besides measurement statistics, it also describes the conditional state changes due to a quantum measuring process. For example, if the measurement outcome set is finite, $\Omega=\{x_1,\ldots,x_N\}$,
then any (Schr\"odinger) instrument ${\mathcal I}$ describing a measurement of $\Mo$ (with the outcomes $\Omega$), can be viewed as a collection of (completely positive) operations\footnote{For any $i$, ${\mathcal I}_i(\rho)=\sum_s \A_{is}\rho\A_{is}^*$ (a Kraus decomposition) and the dual (Heisenberg) operation is $\mathcal J_i(B)={\mathcal I}^*_i(B)=\sum_s \A_{is}^*B\A_{is}$ where $B$ is any $d\times d$--matrix.}  ${\mathcal I}_i$ on $\M_d(\C)$ such that $\sum_{i=1}^N\tr{{\mathcal I}_i(\rho)}=1$ for any state $\rho$. Now $\mathcal I$ transforms an input state $\rho$ to a (nonnormalized) output state ${\mathcal I}_i(\rho)$ if $x_i$ is obtained. In addition, $\mathcal I$ defines the measurement outcome probabilities $p_i$ and the corresponding POVM $\Mo$ via $p_i=\tr{{\mathcal I}_i(\rho)}=\tr{\rho\Mo_i}$.\footnote{In other words, using the Kraus decompositions of the operations, $\Mo_i=\sum_s \A_{is}^*\A_{is}$.} Note that $\rho\mapsto \sum_{i=1}^N{\mathcal I}_i(\rho)$ is a (Schr\"odinger) channel which transforms any state of the system to another state of the same system. More generally, a quantum channel is a completely positive trace-preserving (cptp) linear map between state spaces associated to quantum systems (with possibly different Hilbert spaces $\hi$, $\hi'$) so that channels transmit quantum information between different systems. Similarly, with possibly different input and output spaces $\hi\cong\C^d$ and $\hi'\cong\C^{d'}$, one may also assume an initial state of $\hi$ to transform into conditional states of $\hi'$ as a result of the measurement prompting to describe the measurement through an instrument $\mathcal I$ with Schr\"odinger operations $\mathcal I_i:\M_d(\C)\to\M_{d'}(\C)$.

Now we are ready to introduce the following six optimality criteria for $\Mo$:
\begin{enumerate}
%%%FUTURE eli RANK-1
\item[(1a)] {\it $\Mo$ determines the future of the system (completely)} if each instrument ${\mathcal I}$ implementing $\Mo$ is nuclear (or preparatory), i.e.,\ of the form ${\mathcal I}_i(\rho)=p_i\sigma_i$ where $\sigma_i$'s are density matrices (of any fixed output Hilbert space $\hi'$) which do no depend on the input state $\rho$. If the outcome $x_i$ is obtained with the nonzero probability $p_i=\tr{\rho\Mo_i}$ then the output system is in the $\rho$-independent state $\sigma_i$ after the measurement. It can be shown that {\it $\Mo$ determines the future if and only if $\Mo$ is of rank 1,} i.e.\ each $\Mo_i$ is of the form $\kb{d_i}{d_i}$ where $d_i\in\hi$ \cite{HeWo,Part2}.
%%%POST-PROC eli RANK-1
\item[(1b)] {\it $\Mo$ is post-processing maximal (post-processing clean)} if the condition $\Mo_i=\sum_{j=1}^{N'}p'_{ji}\Mo'_j$ for all $i$ (where $(p'_{ji})$ is $N'\times N$--probability matrix and $\Mo'=(\Mo'_1,\ldots,\Mo'_{N'})$ is a POVM of the same Hilbert space $\hi'=\hi$) implies that $\Mo_j'=\sum_{i=1}^N p_{ij}\Mo_i$ for all $j$ where
$(p_{ij})$ is $N\times N'$--probability matrix.\footnote{Recall that $(p_{ij})$ is a probability (or stochastic or Markov) matrix  if $p_{ij}\ge 0$ and $\sum_j p_{ij}=1$ for all $i$. The numbers $p_{ij}$ are transition probabilities and $\Mo'$ is said to be a smearing of $\Mo$ if $\Mo_j'=\sum_i p_{ij}\Mo_i$ holds. In this case, $\Mo$ and $\Mo'$ are jointly measurable, a joint observable being $\No_{ij}=p_{ij}\Mo_i$.} The condition $\Mo_i=\sum_j p'_{ji}\Mo'_j$ yields $p_i=\tr{\rho\Mo_i}=\sum_j p'_{ji}\tr{\rho\Mo'_j}$ thus showing that, instead of measuring $\Mo$, one can measure $\Mo'$ in the same state $\rho$ and then classically post-process the data by using the matrix $(p'_{ij})$. Post-processing clean POVMs are free from this type of classical noise and it is easy to show that {\it $\Mo$ is post-processing clean if and only if $\Mo$ is of rank 1} \cite[Theorem 3.4]{DoGr97}.
%%%PAST eli INFOCOMP
\item[(2a)] {\it $\Mo$ determines the past of the system} if it is {\it informationally complete,} i.e.\ the measurement outcome statistics $(p_i)_{i=1}^N$ determines the input state $\rho$, i.e.\ the condition $\tr{\rho\Mo_i}=\tr{\rho'\Mo_i}$ for all $i$ implies that $\rho'=\rho$. Clearly, {\it $\Mo$ determines the past of the system if and only if $N\ge d^2$ and any $d\times d$--matrix $B$ can be written as a linear combination of matrices $\Mo_i$, $i=1,\ldots,N$} \cite[Prop.\ 18.1]{kirja}. We will construct later an informationally complete (extreme) rank-1 POVM $\Mo$ with the minimum number $N=d^2$ of outcomes. Generally, an informationally complete POVM need not be rank-1 but {\it if $\Mo$ is informationally complete then its rank-1 refinement $\Mo^{\bm1}$ is also informationally complete} \cite{Pell}.
%%%EXTREME 
\item[(2b)] {\it $\Mo$ is extreme} if it is an extremal point of the convex set of all (discrete) POVMs of $\hi$, i.e.\ $\Mo=\frac12\Mo'+\frac12\Mo''$ implies $\Mo'=\Mo$. Thus, extreme observables describe statistics of the pure quantum measurements, free from any classical randomness
due to fluctuations in the measuring procedure (in the same way as pure states describe preparation procedures without classical randomness). One can show that {\it $\Mo$ is extreme if and only if the matrices $\kb{d_{ik}}{d_{i\ell}}$, $i=1,\ldots,N$, $k,\ell=1,\ldots,m_i$, are linearly independent} \cite[Theorem 2.4]{Partha}, \cite[Theorem 2]{DALoPe}. Especially, if $\Mo$ is rank-1, i.e.\ $\Mo_i=\kb{d_i}{d_i}$, $d_i\ne0$, then it is extreme if and only if the matrices $\Mo_i$ are linearly independent. In this case, it is informationally complete if and only if $N=d^2$. Trivially, {\it if $\Mo$ is extreme then its rank-1 refinement $\Mo^{\bm1}$ is also extreme} thus\footnote{Since $d^2=N^1=\sum_{i=1}^N m_i\ge N$ and $N\ge d^2$ imply $m_i\equiv1$ and $N=d^2$.} showing that {\it any extreme informationally completely POVM is necessarily of rank 1.} Finally, we note that PVMs are automatically extreme. 
%%%VALUE eli NORM-1
\item[(3a)] {\it $\Mo$ determines its values} $x_i$ if each $\Mo_i$ is of (operator) norm 1, i.e. $\Mo_i$ has the eigenvalue $1$ (with the unit eigenvector $\fii_i$). In this case, for any outcome $x_j$ one can pick a state $\rho=\kb{\fii_j}{\fii_j}$ such that $p_i=\tr{\rho\Mo_i}=\delta_{ij}$ for all $i$, i.e.\ the observable $\Mo$ has the value $x_j$ in the state $\rho$ with probabilistic certainty.\footnote{Note that this holds only in finite dimensions. Generally, a norm-1 (effect) operator can have a fully continuous spectrum (i.e.\ no eigenvalues at all). However, even in such a case, for each $j$ and any $\varepsilon\in(0,1)$, there is a state $\rho=\kb{\fii_j}{\fii_j}$ such that $\tr{\rho\Mo_j}=1-\varepsilon$.} Clearly, a rank-1 norm-1 POVM is a PVM\footnote{This holds for discrete observables. As a counterexample, consider the canonical phase observable which is rank-1 norm-1 `continuous' POVM but not projection valued.} and any PVM is of norm-1.
%%%PRE-PROC eli PVM+POVM
\item[(3b)] {\it $\Mo$ is pre-processing maximal (pre-processing clean)} if the condition $\Mo_i=\Phi(\Mo'_i)$ for all $i$ (where $\Phi:\M_{d'}(\C)\to\M_d(\C)$ is a Heisenberg channel and $\Mo'$ is a POVM on the possibly different Hilbert space $\hi'\cong\C^{d'}$ with $N'=N$ outcomes) implies that $\Mo'_i=\Theta(\Mo_i)$ for all $i$ where $\Theta:\M_d(\C)\to\M_{d'}(\C)$ is some Heisenberg channel. The condition $\Mo_i=\Phi(\Mo'_i)$ can be written in the form $p_i=\tr{\rho\Mo_i}=\tr{\rho\Phi(\Mo'_i)}=\tr{\Phi_*(\rho)\Mo'_i}$ so that to get the probabilities $p_i$ one can equally well measure $\Mo'$ in the state $\Phi_*(\rho)$, i.e.\ $\Mo'$ is `better' measurement in this sense and $\Mo$ is obtained from it by adding quantum noise in $\rho$ (characterized by the channel $\Phi$). Hence,
pre-processing clean observables are free from this type of quantum noise. Since, using the Na\u{\i}mark dilation $(\hi_\oplus,\Po,J)$ of $\Mo$, $\Mo_i=J^*\Po_iJ=\Phi(\Po_i)$, where $\Phi$ is the (rank-1) isometry channel $J^*(\,\cdot\,)J$, to show that $\Mo$ is pre-processing clean, one must find a channel $\Theta$ such that $\Po_i=\Theta(\Mo_i)$ holds\footnote{Actually, Remark \ref{remu1} shows that $\Mo_i=\Phi_J(\Po_i')$ must be connected to a {\it rank-1} PVM $\Po'$ via some channel.} and thus\footnote{$1=\|\Po_i\|=\|\Theta(\Mo_i)\|\le\|\Theta\|\,\|\Mo_i\|=\|\Mo_i\|\le1$ implies $\|\Mo_i\|=1$.} each $\Mo_i$ is of norm 1, i.e.\ $\Mo$ determines its values. We will show that, in finite dimensions, {\it pre-processing clean POVMs are exactly norm-1 POVMs} and exactly of the form $\Mo_i=\Eo_i\oplus\Fo_i$ where $\Eo$ is a PVM ($\Eo_i\ne 0$ for all $i$) and $\Fo$ a POVM ($\Theta(\Fo_i)=0$ for all $i$) acting on orthogonal subspaces of $\hi$. Hence, for any pre-processing clean POVM $\Mo$, there exists a projection (onto a subspace) such that the projected POVM $\Eo$ is projection valued. Especially, PVMs are pre-processing clean \cite{Pe11}.
\end{enumerate}

We have seen that `optimal observables' must be of rank 1, see (1a) and (1b) above. Moreover, observables satisfying (2a) or (2b) can be maximally refined into rank-1 observables which share the same optimality criteria as the original POVMs. If $\Mo$ is rank-1 then the map $\C^N\ni(c_1,\ldots,c_2)\mapsto\sum_{i=1}^Nc_i\Mo_i$ is surjective iff (2a) holds and injective iff (2b) holds. We will construct a POVM for which all conditions (1a), (1b), (2a), and (2b) hold. PVMs are optimal observables in the sense of (3a) and (3b). In addition, the rank-1 refinement of a PVM is also projection valued (and of norm-1). However, it is easy to construct a norm-1 POVM whose rank-1 refinement is not of norm 1. For example, in $\C^3$ (with the basis $\ket0$, $\ket1$, $\ket2$), one can define 2-valued norm-1 POVM $\Mo_1=\kb11+\frac13\kb00$, $\Mo_2=\kb22+\frac23\kb00$ whose refinement $\Mo^{\bm1}$ has an effect $\Mo^{\bm1}_{12}=\frac13\kb00$ of norm $\frac13$. Note that $\Mo_i=\Eo_i\oplus\Fo_i$ where $\Eo_1=\kb11$, $\Eo_2=\kb22$ constitutes a PVM in a 2-dimensional space, and $\Fo_1=\frac13\kb00$, $\Fo_2=\frac23\kb00$. 

To conclude, there are essentially two sorts of  optimal observables: rank-1 PVMs and extreme informationally complete POVMs. Since they are extreme (2b) and rank-1, i.e.\ post processing clean (1b), they are free from classical noise due to the mixing of measurement schemes or data processing.
Moreover, they determine the future of the system (1a). It is easy to show that a pre-processing clean POVM (e.g.\ a PVM) cannot be informationally complete and vice versa,\footnote{Since a pre-processing clean POVM has at most $N=d$ outcomes and an informationally complete POVM has at least $N=d^2$ outcomes.} i.e.\ an informationally complete POVM is never free from quantum noise. Moreover, {\it the determination of the past (2a) and the values (3a) are complementary properties.} However, when one assumes that only a restricted class of states (related to a subspace $\hi\subseteq\hd$) can be determined completely then these complementary properties can be combined as follows:

One can pick a $d^2$-outcome extreme informationally complete rank-1 POVM $\Mo_i=\kb{d_i}{d_i}$, $i=1,\ldots,d^2$, and its minimal Na\u{\i}mark dilation with the rank-1 PVM $\Po_i=\kb{e_i}{e_i}$
acting in a $d^2$-dimensional space $\hd=\hi\oplus\hi^\perp$ with the orthogonal basis $\{e_i\}_{i=1}^{d^2}$ (recall that $d=\dim\hi$). Now, for any subsystem's state $\rho$, one gets $p_i=\<d_i|\rho|d_i\>=\<e_i|J\rho J^*|e_i\>$ and $\Po_i$ is informationally complete only within the set of states of the subspace $\hi$. Instead of measuring $\Mo$ one can prepare a state of $\hi\cong J\hi$ and then perform a measurement of $\Po$ to get probabilities $p_i$ and posterior states $\sigma_i$ (see item (1a) above) since the nuclear instrument ${\mathcal I}_i(\rho)=\<d_i|\rho|d_i\>\sigma_i$ implementing $\Mo$ can be trivially extended to an instrument $\overline{\mathcal I}$ of $\Po$ via $\overline{\mathcal I}_i(\overline\rho)=\<e_i|\overline\rho|e_i\>\sigma_i$ where $\overline\rho$ is a state of $\hd$. Below we study sequential and joint measurements of optimal POVMs with other observables.

Let $\Mo=(\Mo_i)_{i=1}^N$ and $\Mo'=(\Mo'_j)_{j=1}^{N'}$ be POVMs as in the beginning of this introduction, and let $\cal I=(\cal I_i)$ be an instrument implementing $\Mo$ (i.e.\ any $\mathcal I_i:\M_d(\C)\to\M_{d'}(\C)$ is of the form ${\mathcal I}_i(\rho)=\sum_s \A_{is}\rho\A_{is}^*$ and $\Mo_i=\sum_s \A_{is}^*\A_{is}$). Suppose then that one measures first $\Mo$ in the state $\rho$ (described by $\cal I$) and then $\Mo'$ in the transformed (conditional) state $p_i^{-1}{\cal I}_i(\rho)$ if the outcome $x_i$ is obtained (with the probability $p_i>0$) in the first measurement of $\Mo$. This sequential measurement can be described by a joint POVM $\Jo=(\Jo_{ij})$ where
$\Jo_{ij}={\cal I}^*_i(\Mo'_j)$ since the conditional probability is $\tr{\rho\Jo_{ij}}=\tr{p_i^{-1}{\cal I}_i(\rho)\Mo'_j} p_i$. Hence, {\it a sequential measurement of $\Mo$ and $\Mo'$ can be interpreted as a joint measurement} of $\Mo$ and the disturbed POVM $\Mo''$, $\Mo''_j=\sum_iJ_{ij}=\Phi(\Mo'_j)$ of the same Hilbert space (here $\Phi=\sum_i{\cal I}^*_i$ is the total Heisenberg channel of $\cal I$).

Indeed, any POVMs $\Mo=(\Mo_i)$ and $\Mo''=(\Mo''_j)$ (of the same Hilbert space $\hi''=\hi$) are jointly measurable if there exists a POVM 
$\No=(\No_{ij})$ such that $\Mo$ and $\Mo''$ are the margins of $\No$, i.e.,\ $\Mo_i=\sum_{j=1}^{N''} \No_{ij}$ and $\Mo''_j=\sum_{i=1}^N \No_{ij}.$
If $\big(\hd,J,\Po\big)$ is a (minimal) Na\u{\i}mark dilation of $\Mo$ then it is easy to show\footnote{Since $\No_{ij}\le \sum_{j=1}^{N''} \No_{ij}=J^*\Po_i J$ implies the existence of $\Po_{ij}\le\Po_i$ ($m_i\times m_i$--identity matrix).} that $\No_{ij}=J^*\Po_{ij}J$ where $\Po_{ij}$ is a (unique) positive semidefinite $m_i\times m_i$--matrix such that $\sum_j\Po_{ij}=\Po_i$. Hence, for each $i\le N$, the map $j\mapsto \Po_{ij}$ is a POVM\footnote{This POVM acts in the subspace $\Po_i\hd$ and is normalized there.} with $N''$ values so that there is a channel $\Phi_i$ such that $\Phi_i(\Po'_j)=\Po_{ij}$, see Remark \ref{remu1}. Here $\Po'=(\Po'_j)_{j=1}^{N''}$ is a fixed rank-1 PVM acting in a minimal\footnote{If, say, $\Mo''_k=0$ then $\No_{ik}\le\sum_{i'=1}^N \No_{i'k}= \Mo''_k$ yields $\No_{ik}=0$ and thus $\Po_{ik}=0$ for all $i\le N$.  Hence, $\Po'_k=0$ and $N''$ is the number of nonzero effects $\Mo_j''$ since usually we assume that $\Mo_j''\ne 0$ for all $j=1,\ldots,N''$.} $N''$-dimensional Hilbert space $\hi_{N''}\cong\C^{N''}$. Define an instrument $\cal I$ by $\cal I_i^*(B)=J^*\Phi_i(B)J$, $B\in\M_{N''}(\C)$. Clearly, $\No_{ij}=\cal I_i^*(\Po_j')$ and, thus, any {\it joint measurement of $\Mo$ and $\Mo''$ can be interpreted as a sequential measurement} of $\Mo$ followed by a rank-1 PVM $\Po'$. Note that $\Mo''_j=\Phi'(\Po'_j)$
where $\Phi'(B)=J^*\sum_i\Phi_i(B)J$.

In conlusion, a sequential measurement of $\Mo$ and $\Mo'$ defines a joint obsevable $\Jo$ with the margins $\Mo$ and $\Mo''=\Phi(\Mo')$. If we put $\No=\Jo$ above, we see that this measurement of $\Jo$ can be interpreted as a new  sequential measurement of $\Mo$ and a rank-1 PVM $\Po'$. In addition, $\Mo''=\Phi'(\Po')$. Thus, the latter observable in a sequential set-up can be assumed to be very optimal: free from both classical and quantum noise. Next we study how the optimality criteria (1a)---(3b) affect the joint measurability of an optimal observable with other observables.

\begin{enumerate}
\item
If $\Mo$ is rank-1 ($m_i\equiv1$) then any $\Po_{ij}$ is a $1\times1$--matrix, i.e.\ a number $p_{ij}$, and $\No_{ij}=p_{ij}\Mo_i$ is rank-1 (and a post-processing of $\Mo$). Since $(p_{ij})$ is a probability matrix, also $\Mo''$ is a smearing (post-processing) of $\Mo$, i.e.\ $\Mo_j''=\sum_i p_{ij}\Mo_i$, see item (1b) above. Moreover, the $\Mo$-compatible instrument $\cal I$ is nuclear (1a) and $\No_{ij}=\Jo_{ij}=\tr{\sigma_i\Mo_j'}\Mo_i$ from where one can read the transition probabilities $p_{ij}=\tr{\sigma_i\Mo_j'}$ (where the states $\sigma_i$ determines $\cal I$ completely) showing that, if one gets $x_i$ in the first $\Mo$-measurement, then the instrument `prepares' the post-measurement state $\sigma_i$ which is the input state for the second $\Mo'$-measurement giving the probability distribution $p_{ij}$, $j=1,\ldots,N'$, and $\tr{\rho\Jo_{ij}}=p_ip_{ij}$ where $p_i=\tr{\rho\Mo_i}$. Hence, after a measurement of a rank-1 POVM there is no need to perform any extra measurements to get more information. It should be stressed that, even if $\Mo''_j=\Phi'(\Po'_j)$, the entanglement breaking channel $\Phi'$ (associated to a nuclear instrument\footnote{Since $\Po'_j=\kb{e_j}{e_j}$ one can define states $\sigma'_i=\sum_j p_{ij}\kb{e_j}{e_j}$ and a nuclear instrument ${\mathcal I}'_i(\rho)=\tr{\rho\Mo_i}\sigma'_i$ (or ${{\mathcal I}'_i}^*(B)=\tr{B\sigma'_i}\Mo_i$) such that ${{\mathcal I}'_i}^*(\Po_j')=p_{ij}\Mo_i=\No_{ij}$ and $\Phi'(B)=\sum_i{{\cal I}'_i}^*(B)=\sum_i\tr{B\sigma'_i}\Mo_i$.} ${\mathcal I}'$ of $\Mo$) \cite{Part2} adds so much quantum noise to the rank-1 PVM $\Po'$ that it becomes the fuzzy version $\Mo''$ of $\Mo$. Hence, in this sequential measurement, the latter observable $\Mo''$, which arises as the second marginal of the joint observable $\Jo$, is obtained both through adding classical noise to the observable $\Mo$ first measured (i.e.,\ as a post-processing $\Mo_j''=\sum_ip_{ij}\Mo_i$) and through adding quantum noise to the observable $\Mo'$ actually measured after $\Mo$ in the form of the pre-processing $\Mo''_j=\Phi(\Mo'_j)=\sum_i\tr{\sigma_i\Mo'_j}\Mo_i$. The same results naturally apply in the situation where we modify the measurement of the first observable $\Mo$ and measure some rank-1 PVM $\Po'$ after it to obtain $\Mo''$ as the second marginal. In this case $\Mo''$ is a classical smearing of the rank-1 $\Mo$ and a quantum smearing of the rank-1 PVM $\Po'$.

\item
If $\Mo$ is informationally complete (2a) then $\No=\Jo$ is also informationally complete.\footnote{If one gets probabilites $\tr{\rho\No_{ij}}$ then one can solve $\rho$ from the probabilities $p_i=\tr{\rho\Mo_i}=\sum_{j=1}^{N''} \tr{\rho\No_{ij}}$.}
Hence, trivially, if already $\Mo$ determines the past, then its subsequent measurements cannot increase the (already maximal) state distinguishing power. Suppose now that $\No_{ij}=\Jo_{ij}=\mathcal I_i^*(\Mo'_j)$ for some instrument $\mathcal I$ measuring the informationally complete $\Mo$ and some subsequently measured $\Mo'$ giving rise to the second marginal $\Mo''_j=\Phi(\Mo'_j)$ for the total channel $\Phi=\sum_i\mathcal I_i^*$. If we also assume that $\Mo''$ is informationally complete, i.e.,\ we jointly measure two informationally complete observables in a sequential setting, then the Heisenberg channel $\Phi$ is surjective (in this finite-dimensional case) and the corresponding Schr\"odinger channel $\Phi_*=\sum_i\mathcal I_i$ is injective.\footnote{Since $\Mo''$ is informationally complete, the map $\M_d(\C)\ni\rho\mapsto\big(\tr{\rho\Mo''_j}\big)_j$ is injective, and, since this map is the composition of the maps $\Phi_*$ and $\sigma\mapsto\big(\tr{\sigma\Mo'_j}\big)_j$, as the first of these maps, $\Phi_*$ has to be injective.}

If $\Mo$ is extreme (2b) then $\No$ is the unique joint POVM which has the margins $\Mo$ and $\Mo''$.\footnote{Since, for $\No_{ij}=J^*\Po_{ij}J$ and $\No'_{ij}=J^*\Po'_{ij}J$, the condition $\Mo''_j=\sum_{i=1}^N \No_{ij}=\sum_{i=1}^N \No'_{ij}$ can be written in the form $J^*D_jJ=0$ where $D_j=\oplus_{i=1}^N(\Po_{ij}-\Po'_{ij})$. If $\Mo$ is extreme then $D_j\equiv0$, i.e.\ $\Po_{ij}\equiv\Po'_{ij}$, and $\No=\No'$.} If, in addition, $\Mo$ is rank-1 then $\No_{ij}=p_{ij}\Mo_i$ and $\Mo_j''=\sum_i p_{ij}\Mo_i$ and we have the chain of bijections: $\No\mapsto\Mo''\mapsto(p_{ij})\mapsto\No$.

\item
If $\Mo=\Po$ is a PVM, i.e.\ $\Po_k\Po_\ell\equiv\delta_{k\ell}\Po_k$, (and thus $\hd=\hi$ and $J$ is the identity map) then $\No_{ij}=\Po_{ij}$ where each map $j\mapsto\Po_{ij}$ is a (subnormalized) POVM which commutes with $\Po$, i.e.\ $\Po_{ij}\Po_k\equiv\Po_k\Po_{ij}$, since $\Po_{ij}\le\Po_i$. Thus, any $\Mo''$ compatible with a PVM $\Po$ commutes with $\Po$. If, moreover, $\Po$ is of rank-1 then $\Po_{ij}=p_{ij}\Po_i$. Note that $\No$ (or $\Mo''$) needs not to be a PVM or even of norm 1 (e.g.\ consider $\Po_1=\kb11$, $\Po_2=\kb22=\Po_{22}$, $\Po_{11}=\Po_{12}=\frac12\kb11=\Mo_2''$, $\Mo_1''=\frac12\kb11+\kb22$ in $\C^2$).
\end{enumerate}

In this paper, we generalize the above results to the case of arbitrary observables (with sufficiently `nice' value spaces) acting in separable Hilbert spaces. For example, consider the single-mode optical field with the Hilbert space $\hi$ spanned by the photon number states $|n\>$, $n=0,1,2,\ldots,$ associated with the number operator $N=a^*a=\sum_{n=0}^\infty n\kb n n$ where $a=\sum_{n=0}^\infty\sqrt{n+1}\kb n{n+1}$. Define the position and momentum operators $Q=\frac{1}{\sqrt{2}}{(a^*+a)}$ and $P=\frac{i}{\sqrt{2}}{(a^*-a)}$ which, in the position representation, are the usual multiplication and differentiation operators,
$(Q\psi)(x) = x\psi(x)$ and $(P\psi)(x)= -i\,{\d\psi(x)}/{\d x}$ (we set $\hbar=1$). Define the Weyl operator (or the displacement operator of the complex plane)\footnote{Recall that Weyl operators are associated to a unitary representation of the Heisenberg group $\mathbb H$, or to a projective representation of the additive group $\C\cong\R^2$.}  
$D(z)=e^{{z a^*-\overline z a}}$, $z\in\C$. Let $q,\,p\in\R$ and $z=(q+i p)/\sqrt2$. Then
$$
D(q,p)=D(z)=e^{ipQ-iqP}=e^{-iqp/2}e^{ipQ}e^{-iqP}=e^{iqp/2}e^{-iqP}e^{ipQ},
$$
i.e.,\ for all $\psi\in\hi\cong L^2(\R)$, $(e^{ipQ}\psi)(x)=e^{ipx}\psi(x)$, $(e^{iqP}\psi)(x)=\psi(x+q)$, and
$$
\big(D(q,p)\psi\big)(x)=e^{-iqp/2}e^{ipx}\psi(x-q).
$$
One can measure the following physically relevant POVMs:
\begin{itemize}
\item Rotated quadrature operators $Q_\theta=(\cos\theta) Q+(\sin\theta) P$ where $\theta\in[0,2\pi)$ so that $Q_0=Q$ and $Q_{\pi/2}=P$. In the position representation, the spectral measure of $Q$ is the canonical spectral measure, $[\Qo(X)\psi](x)=\CHI X(x)\psi(x)$, $X\subseteq\R$ (Borel set), so that
 the spectral measure (rank-1 PVM) of $Q_\theta$ is $\Qo_\theta(X)=R(\theta)\Qo(X)R(\theta)^*$ where $R(\theta)=e^{i\theta N}=\sum_{n=0}^\infty e^{in\theta}\kb n n$ is the (unitary) rotation operator.  Rotated quadaratures can be measured by a balanced homodyne detector where the phase shift $\theta$ is caused by a phase shifter. A single $\Qo_\theta$ cannot be informationally complete (as a PVM) but the whole measurement assemblage $\{\Qo_\theta(X)\}_{\theta\in[0,\pi)\atop X\subseteq\R}$ forms an informationally complete set of effects. Actually, a rank-1 POVM ${\mathsf G}_{\rm ht}(\Theta\times X)=\frac1{\pi}\int_{\Theta}\Qo_\theta(X)\d\theta$) determines the input state completely (optical homodyne tomography, OHT). Note that $\|{\mathsf G}_{\rm ht}(\Theta\times X)\|\le\frac1{\pi}\int_{\Theta}\d\theta<1$ if $\Theta\subseteq[0,\pi)$ is not of `length' $\pi$.
\item The number operator $N=\sum_{n=0}^\infty n\kb n n$ whose spectral measure (rank-1 PVM) is $n\mapsto \mathsf N_n=\kb n n$ (an ideal photon detector with the 100 \% efficiency).
\item An unsharp (rank-$\infty$) number observable (POVM) $$n\mapsto \mathsf N^\epsilon_n=\sum_{m=n}^\infty{m\choose n} \epsilon^n(1-\epsilon)^{m-n}\kb m m$$ (a nonideal photon detector with quantum efficiency $\epsilon\in[0,1)$). Now $\mathsf N^\epsilon$ is neither of norm-1 nor informationally complete, since it is commutative \cite{BuLa89}, and $\lim_{\epsilon\to1}\mathsf N^{\epsilon}_n=\mathsf N_n$ (by $0^0=1$).
\item Covariant phase space observables (POVMs)
$$
\mathsf G_S(Z)=\frac1\pi\int_ZD(z)SD(z)^*\d^2z=\frac1{2\pi}\int_ZD(q,p)SD(q,p)^*\d q\d p,\qquad Z\subseteq\C,
$$
where $S$ is (essentially) the reference state of an eight-port (or double) homodyne detector. In practice $\mathsf G_S$ can be viewed as a joint measurement of unsharp rotated quadratures (e.g., unsharp position and momentum). Moreover, $\mathsf G_S$ is rank-1 if and only if $S=\kb{\psi}{\psi}$ where $\psi$ is a unit vector. Note that $\|\mathsf G_S(Z)\|\le\frac1\pi\int_Z\d^2z<1$ when the area of the set $Z$ is small enough.
\item Covariant phase observables (POVMs)
$$
\mathsf\Phi_C(\Theta)=\frac{1}{2\pi}\int_\Theta R(\theta)CR(\theta)^*\d\theta
=\sum_{n,m=0}^\infty C_{nm}\int_\Theta e^{i(n-m)\theta}\frac{\d\theta}{2\pi}\kb n m
,\qquad \Theta\subseteq[0,2\pi),
$$
where $C=\sum_{n,m=0}^\infty C_{nm}\kb n m$ is a positive sesquilinear form with the unit diagonal ($C_{nn}\equiv1$), i.e.\ a phase matrix.
If $C_{nm}\equiv1$ we get the canonical phase observable $\mathsf\Phi_{\rm can}$,
$$
\mathsf\Phi_{\rm can}(\Theta)=\sum_{n,m=0}^\infty\int_\Theta e^{i(n-m)\theta}\,\frac{\d\theta}{2\pi}|n\>\<m|,\qquad\Theta\subseteq[0,2\pi),
$$
whereas the angle margin of $\mathsf G_S$ is called a phase space phase observable. Both can be measured by double homodyne detection \cite{PeSc}. Any phase observable is never projection valued and is a preprocessed version of the canonical phase given by a (Schur type) quantum channel. Note that $\mathsf\Phi_{\rm can}$ is not informationally complete (consider number states) but it is norm-1. However, $\mathsf\Phi_{\rm can}$ is not pre-processing clean since its nontrivial effects cannot have eigenvalues \cite[Theorem 8.2]{kirja}.
\end{itemize}

\subsection{Definitions and mathematical background}

In this paper, $\N=\{1,\,2,\ldots\}$, i.e.,\ 0 is not included in the set of natural numbers. We define an empty sum $\sum_{j=1}^0(\cdots)$ to be equal to zero. When $\hi$ is a Hilbert space, we denote by $\lh$ the algebra of bounded linear operators on $\hi$ and by $I_\hi$ the unit element of this algebra (the identity operator on $\hi$); by `Hilbert space' we always mean a complex Hilbert space. The inner product of any Hilbert space will be simply denoted by $\<\,\cdot\,|\,\cdot\,\>$ since the Hilbert space in question should always be clear from the context, and the inner product is chosen to be linear in the second argument. By $\cal P(\hi)$, we denote the set of projections of $\hi$, i.e.,\ operators $P\in\lh$ such that $P=P^*=P^2$. An operator $E\in\lh$ is called effect if $0\le E\le I_\hi$ holds. Especially, any projection is an effect, the so-called sharp effect. We let $\th$ stand for the set of trace-class operators on $\hi$, i.e.,\ $\tr{|T|}<\infty$ for all $T\in\th$. We denote the set of positive trace-1 operators in $\th$ by $\sh$; in quantum physics, these normalized positive states of $\lh$ are identified with the physical states of the system described by $\hi$. Note that $\cal P(\hi)\cap\sh$ consists of rank-1 projections $\kb\psi\psi$ (where $\psi\in\hi$ is a unit vector).

When $\mu$ and $\nu$ are positive measures on a measurable space $(\Omega,\Sigma)$ (where $\Omega\neq\emptyset$ is a set and $\Sigma$ is a $\sigma$-algebra of subsets of $\Omega$) we say that $\mu$ is absolutely continuous with respect to $\nu$ and denote $\mu\ll\nu$ if $\mu(X)=0$ whenever $\nu(X)=0$. When both $\mu\ll\nu$ and $\nu\ll\mu$, we denote $\mu\sim\nu$ and say that $\mu$ and $\nu$ are equivalent.

Let $(\Omega,\Sigma)$ be a measurable space and $\hi$ be a Hilbert space. A map $\Mo:\Sigma\to\lh$ is said to be a {\it normalized positive-operator-valued measure (POVM)} if, for all $\rho\in\sh$, the set function $X\mapsto\tr{\rho\Mo(X)}$, denoted hereafter by $p_\rho^\Mo$, is a probability measure or, equivalently, $\Mo(X)\geq0$ for all $X\in\Sigma$, $\Mo(\Omega)=I_\hi$, and, for any pairwise disjoint sequence $X_1,\,X_2,\ldots\in\Sigma$, one has $\Mo(\cup_jX_j)=\sum_j\Mo(X_j)$ (ultra)weakly.
Denote the set of POVMs from $\Sigma$ to $\lh$ by $\O(\Sigma,\hi)$. When $\Po(X)\in\cal P(\hi)$ for all $X\in\Sigma$ for a  POVM $\Po:\Sigma\to\lh$, we say that $\Po$ is a {\it normalized projection-valued measure (PVM)} or a spectral measure. We extend the notions of absolute continuity and equivalence introduced above for scalar measures in the obvious way and thus may write, e.g.,\ for a POVM $\Mo\in\O(\Sigma,\hi)$ and a measure $\mu:\Sigma\to[0,\infty]$, $\Mo\ll\mu$ if $\Mo(X)=0$ whenever $\mu(X)=0$ and, for another POVM $\No:\Sigma\to\lk$, where $\ki$ is some Hilbert space, $\Mo\ll\No$ if $\No(X)=0$ implies $\Mo(X)=0$. We say that $\Mo$ is {\it discrete} if there exist distinct points $\{x_i\}_{i=1}^N\subseteq\Omega$, $N\in\N\cup\{\infty\}$, and effects $\{\Mo_i\}_{i=1}^N\subseteq\lh$ such that $\Mo=\sum_{i=1}^N\Mo_i\delta_{x_i}$ where $\delta_x$ is a Dirac (point) measure concentrated on $x$. Now $\Mo\ll\sum_{i=1}^N\delta_{x_i}$.
A discrete observable $\Mo$ can naturally be identified with the effects $\Mo_i$ and we will use the notation $(\Mo_i)_{i=1}^N$ for $\Mo$ if the outcomes $x\in\Omega$ are not relevant. Note that, if $\hi$ is separable, and $\Mo\in\O(\Sigma,\hi)$, picking any state $\rho\in\sh$ which is faithful, i.e.,\ $\tr{\rho A}=0$ implies $A=0$ for any positive $A\in\lh$ (or, equivalently, the kernel of $\rho$ is $\{0\}$), we have $\Mo\sim p_\rho^\Mo$.

In quantum physics, POVMs are associated in a one-to-one fashion with observables of the system. The observables associated with PVMs are called sharp. In this view, the number $p_\rho^\Mo(X)=\tr{\rho\Mo(X)}\in[0,1]$ is the probability of obtaining a value within the outcome set $X\in\Sigma$ when measuring $\Mo\in\O(\Sigma,\hi)$ and the system being measured is in the quantum state $\rho\in\sh$. 
In realistic physical experiments, we measure only discrete observables which in many cases can be thought as discretizations of continuous observables, i.e.\ for any $\Mo\in\O(\Sigma,\hi)$ one can choose pairwise disjoint sets $X_i\in\Sigma$ whose union is the whole $\Omega$ and define a discrete POVM by $\Mo_i=\Mo(X_i)$ (with the outcome set $\{1,\ldots,N\}$ or $\N$). In this case, one can replace $\Sigma$ with the sub-$\sigma$-algebra generated by the sets $X_i$.

Let $\cal A$ and $\cal B$ be $C^*$-algebras. We say that a linear map $\Phi:\cal A\to\cal B$ is {\it $n$-positive} ($n\in\N$) if the map
$$
\cal M_n(\cal A)\ni(a_{ij})_{i,j=1}^n\mapsto\big(\Phi(a_{ij})\big)_{i,j=1}^n\in\cal M_n(\cal B)
$$
defined between the $n\times n$-matrix algebras over the input and output algebras is positive. If $\Phi$ is $n$-positive for all $n\in\N$, $\Phi$ is said to be {\it completely positive}. Suppose that $\cal A$ and $\cal B$ are unital (with units $1_{\cal A}$ and $1_{\cal B}$) in which case $\Phi$ is called {\it unital} if $\Phi(1_{\cal A})=1_{\cal B}$. For any unital 2-positive map $\Phi:\cal A\to\cal B$ one has the Schwarz inequality, $\Phi(a)^*\Phi(a)\leq\Phi(a^*a)$ for all $a\in\cal A$. We further define ${\bf CP}(\cal A;\hi)$ as the set of completely positive unital linear maps $\Phi:\cal A\to\lh$ whenever $\cal A$ is a unital $C^*$-algebra and $\hi$ is a Hilbert space. Suppose that $\cal A$ and $\cal B$ are von Neumann algebras. We say that a positive map $\Phi:\cal A\to\cal B$ is {\it normal}, if for any increasing (equivalently, decreasing) net $(a_\lambda)_\lambda\subseteq\cal A$ of self-adjoint operators, one has
$$
\sup_\lambda\Phi(a_\lambda)=\Phi\bigg(\sup_\lambda a_\lambda\bigg),
$$
where $\sup b_\lambda$ is the supremum (ultraweak limit) of the increasing net (equivalently, with $\sup$ replaced by $\inf$, the infimum, in the case of a decreasing net).

Fix a $C^*$-algebra $\cal A$ and a Hilbert space $\hi$. For any completely positive map $\Phi:\cal A\to\lh$, there is a Hilbert space $\cal M$, a unital ${}^*$-representation $\pi:\cal A\to\cal L(\cal M)$, and an isometry $J:\hi\to\cal M$ such that $\Phi(a)=J^*\pi(a)J$ for all $a\in\cal A$ and the linear hull of vectors $\pi(a)J\fii$, $a\in\cal A$, $\fii\in\hi$, forms a dense subspace of $\cal M$. Such a triple $(\cal M,\pi,J)$ is called as a {\it minimal Stinespring dilation for $\Phi$} and it is unique up to unitary equivalence, i.e.,\ if $(\cal M',\pi',J')$ is another minimal dilation, then there is a unitary operator $U:\cal M\to\cal M'$ such that $U\pi(a)=\pi'(a)U$ for all $a\in\cal A$ and $UJ=J'$.

Let $\hi$ and $\ki$ be Hilbert spaces. We call normal completely positive maps $\Phi:\lk\to\lh$ satisfying $\Phi(I_\ki)\le I_\hi$ as {\it operations}. When $\Phi$ is in addition unital, i.e.,\ $\Phi(I_\ki)=I_\hi$, we call $\Phi$ as a {\it channel}. For any normal linear map $\Phi:\lk\to\lh$ there exists a (unique) predual map $\Phi_*:\cal T(\hi)\to\cal T(\ki)$ such that
$$
\tr{\Phi_*(T)A}=\tr{T\Phi(A)},\qquad T\in\cal T(\hi),\quad A\in\lk.
$$
The version $\Phi:\lk\to\lh$ is said to be in the {\it Heisenberg picture} and the version $\Phi_*:\cal T(\hi)\to\cal T(\ki)$ is said to be in the {\it Schr\"odinger picture}. For a channel $\Phi$, the Schr\"odinger channel $\Phi_*$, when restricted onto $\cal S(\hi)$, describes how the system associated with $\hi$ transforms under $\Phi$ into another system associated with $\ki$.

Let $(\Omega,\Sigma)$ be a measurable space and $\hi$ and $\ki$ Hilbert spaces. We say that a map $\mathcal J:\lk\times\Sigma\to\lh$ is a (Heisenberg) {\it instrument} if
\begin{itemize}
\item[(i)] $\mathcal J(\cdot,X):\lk\to\lh$ is an operation for all $X\in\Sigma$,
\item[(ii)] $\mathcal J(\cdot,\Omega)$ is a channel, and
\item[(iii)] for any pairwise disjoint sequence $X_1,\,X_2,\ldots\in\Sigma$ and any $A\in\lk$, $\mathcal J(A,\cup_jX_j)=\sum_j\mathcal J(A,X_j)$ (ultra)weakly.
\end{itemize}
For any instrument $\mathcal J:\lk\times\Sigma\to\lh$, we define the predual (Scr\"odinger) instrument $\mathcal J_*:\th\times\Sigma\to\tk$,
$$
\mathcal J_*(T,X)=[\mathcal J(\cdot,X)_*](T),\qquad T\in\th,\quad X\in\Sigma.
$$

Note that, for an instrument $\mathcal J$, the map $\mathcal J(I_\ki,\cdot):\Sigma\to\lh$ is a POVM. On the other hand, for any $\Mo\in\O(\Sigma,\hi)$ and a Hilbert space $\ki$ there is an instrument $\mathcal J:\lk\times\Sigma\to\lh$ such that $\Mo(X)=\mathcal J(I_\ki,X)$ for all $X\in\Sigma$, i.e.,\ $p_\rho^\Mo=\tr{\mathcal J_*(\rho,\cdot)}$; we call $\mathcal J$ an {\it $\Mo$-instrument.} In a measurement of an observable associated with a POVM $\Mo$, the system transforms conditioned by registering an outcome $x\in X$. This conditional state transformation is given by the operation $\mathcal J_*(\cdot,X)$ where $\mathcal J$ is an $\Mo$-instrument. The operator $\mathcal J_*(\rho,X)$ is a subnormalized state whose trace coincides with the probability $p_\rho^\Mo(X)$ of registering an outcome in $X$. If $p_\rho^\Mo(X)>0$ then $[p_\rho^\Mo(X)]^{-1}\mathcal J_*(\rho,X)$ is the corresponding conditional state.

\section{General structure of a quantum observable}

In this section, we analyse the structure of an observable with a general value space on a system described by a separable Hilbert space. We will refer to the results reviewed in this section several times on the course of this paper.

Suppose that $\hi$ is a separable Hilbert space and let $\e=\{h_n\}_{n=1}^{\dim\hi}$ be an orthonormal (ON) basis of $\hi$ and
$$
V_\e:=\lin_\C\{h_n\,|\,1\le n<\dim\hi+1\}.
$$
Note that $V_\e$ is dense in $\hi$. Let $V_\e^\times$ be the algebraic antidual of the vector space $V_\e$, that is, $V_\e^\times$ is the linear space consisting of all antilinear functions $c:\,V_\e\to\C$ (antilinearity means that $c(\alpha\psi+\beta\fii)=\ov\alpha c(\psi)+\ov\beta c(\fii)$ for all $\alpha,\,\beta\in\C$ and $\psi,\,\fii\in V_\e$). By denoting $c_n={c(h_n)}$ one sees that $V_\e^\times$ can be identified with the linear space of formal series $c=\sum_{n=1}^{\dim\hi}c_nh_n$ where $c_n$'s are arbitrary complex numbers. Hence, $V_\e\subseteq\hi\subseteq V_\e^\times$. Denote the dual pairing $\<\psi|c\>:={c(\psi)}=\sum_{n=1}^{\dim\hi} \<\psi|h_n\>c_n$ and $\<c|\psi\>:=\overline{\<\psi|c\>}$ for all $\psi\in V_\e$ and $c\in V_\e^\times$.
Especially, $c_n=\<h_n|c\>$. We say that a mapping $c:\,\Omega\to V_\e^\times,\,x\mapsto \sum_{n=1}^{\dim\hi}c_n(x)h_n$ is {\it (weak${}^*$-)measurable} if its components $x\mapsto c_n(x)$ are measurable \cite{HyPeYl}. Note that, if $c:\,\Omega\to\hi\subseteq V_\e^\times$ is weak${}^*$-measurable then the maps $x\mapsto\<\psi|c(x)\>$ are measurable for all $\psi\in\hi$.

Let $(\Omega,\Sigma)$ be a measurable space and $\hd$ denote a direct integral $\int_\Omega^\oplus\hi(x)\,\d\mu(x)$ of {\it separable} Hilbert spaces $\hi(x)$ such that $\dim\hi(x)=m(x)\in\N\cup\{0,\infty\}$; here $\mu$ is a $\sigma$-finite nonnegative measure\footnote{Note that $\mu$ can be a probability measure everywhere in this paper; any $\sigma$-finite measure is equivalent with a probability measure.} on $(\Omega,\Sigma)$ \cite{Di}. For each $f\in L^\infty(\mu)$, we denote briefly by $\hat f$ the multiplicative (i.e.\ diagonalizable) bounded operator $(\hat f\psi)(x):=f(x)\psi(x)$ on $\hd$. Especially, one has the {\it canonical spectral measure} $\Sigma\ni X\mapsto \Po_\oplus(X):=\CHII X\in\mathcal L(\hd)$ (where $\CHI X$ is the characteristic function of $X\in\Sigma$). We say that an operator $D\in\cal L(\hd)$ is {\it decomposable} if there is a weakly $\mu$-measurable field of operators $\Omega\ni x\mapsto D(x)\in\cal L\big(\hi(x)\big)$ such that $(D\psi)(x)=D(x)\psi(x)$ for all $\psi\in\hd$ and $\mu$-a.a $x\in\Omega$; it is often denoted
$$
D=\int_\Omega^\oplus D(x)\,\d\mu(x).
$$
We have the following theorem proved in \cite{HyPeYl,Pe11}:

\begin{theorem}\label{th1}
Let $\Mo:\,\Sigma\to\lh$ be a POVM and $\mu:\,\Sigma\to[0,\infty]$ a $\sigma$-finite measure such that $\Mo\ll\mu$. Let $\e$ be an ON basis of $\hi$. There exists a direct integral $\hd=\int_\Omega^\oplus\hi(x)\,\d\mu(x)$ (with $m(x)\le\dim\hi$) such that,
for all $X\in\Sigma$,
\begin{enumerate}
\item[(i)]
$\Mo(X)=J_\oplus^*\Po_\oplus(X)J_\oplus$ where $J_\oplus:\,\hi\to\hd$ is a linear isometry such that the set of linear combinations of vectors $\Po_\oplus(X')J_\oplus \fii$, $X'\in\Sigma$, $\fii\in\hi,$ is dense in $\hd$.
\item[(ii)]
There are measurable maps $d_k:\,\Omega\to V_\e^\times$ such that, for all $x\in\Omega$, the vectors $d_k(x)\ne 0$, $k<m(x)+1$, are linearly independent, and
$$
\<\fii|\Mo(X)\psi\>=\int_X \sum_{k=1}^{m(x)} \<\fii|d_k(x)\>\<d_k(x)|\psi\>\,\d\mu(x),\hspace{0.5cm}\fii,\,\psi\in V_\e,
$$
(a minimal diagonalization of $\Mo$). In addition, 
there exist measurable maps $\Om\ni x\mapsto g_\ell(x)\in V_\e$ such that $\<d_k(x)|g_\ell(x)\>=\delta_{k\ell}$ (the Kronecker delta).
\item[(iii)]
$\Mo$ is a spectral measure if and only if $J_\oplus$ is a unitary operator and thus $\hd$ can be identified with $\hi$.
\end{enumerate}
\end{theorem}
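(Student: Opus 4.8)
The plan is to combine the minimal Na\u{\i}mark dilation of $\Mo$ with the von Neumann direct-integral decomposition of the abelian von Neumann algebra generated by the dilating spectral measure, and then to read off the diagonalization fibrewise. First I would construct the dilation concretely: on the complex vector space of $\hi$-valued $\Sigma$-simple functions $f=\sum_i\CHI{X_i}\fii_i$ put the positive semidefinite sesquilinear form $\langle f,g\rangle:=\sum_{i,j}\langle\fii_i|\Mo(X_i\cap X_j)\psi_j\rangle$ (its positivity is exactly positive operator-valuedness of $\Mo$), pass to the quotient by its kernel, and complete to a Hilbert space $\hd$, which is separable in the setting considered. The assignment $\fii\mapsto[\CHI\Omega\fii]$ then defines an isometry $J_\oplus:\hi\to\hd$, isometric because $\Mo(\Omega)=I_\hi$, and $\Po_\oplus(X)[\CHI Y\fii]:=[\CHI{X\cap Y}\fii]$ extends to a normalized PVM on $\hd$ with $J_\oplus^*\Po_\oplus(X)J_\oplus=\Mo(X)$; the vectors $\Po_\oplus(X)J_\oplus\fii$ span a dense subspace by construction, so $\big(\hd,J_\oplus,\Po_\oplus\big)$ is a minimal dilation. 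A one-line computation using minimality (and commutativity of the $\Po_\oplus(X)$) shows that $\Po_\oplus(X)=0\iff\Mo(X)=0$; in particular $\mu(X)=0$ forces $\Po_\oplus(X)=0$.

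Next I would realize $\hd$ as a direct integral. The range of $\Po_\oplus$ generates an abelian von Neumann algebra acting on the separable space $\hd$, so by the classical direct-integral (spectral-multiplicity) theorem it is, together with $\Po_\oplus$, unitarily equivalent to the canonical spectral measure $X\mapsto\CHII X$ on some $\int_\Omega^\oplus\hi(x)\,\d\nu(x)$ with $\nu$ $\sigma$-finite. Since $\mu(X)=0$ implies $\Mo(X)=0$ hence $\Po_\oplus(X)=0$ we get $\nu\ll\mu$, and a routine rescaling by $\d\nu/\d\mu$ (with $\hi(x):=\{0\}$, $m(x)=0$ where this density vanishes) lets us take $\nu=\mu$. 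Transporting $J_\oplus$ and writing $\Xi_\fii(x):=(J_\oplus\fii)(x)\in\hi(x)$ for the resulting measurable vector field, minimality forces $\hi(x)=\overline{\lin}\{\Xi_{h_n}(x)\,|\,1\le n<\dim\hi+1\}$ for $\mu$-a.a.\ $x$ (the closed span of the fields $x\mapsto f(x)\Xi_{h_n}(x)$, $f\in L^\infty(\mu)$, is all of $\hd$), whence $m(x)=\dim\hi(x)\le\dim\hi$. This establishes (i). For (iii): if $\Mo(X)=\Mo(X)^2$ for every $X$ then $Q:=J_\oplus\Mo(X)J_\oplus^*=P\Po_\oplus(X)P$ with $P:=J_\oplus J_\oplus^*$ satisfies $Q=Q^2$, and the identity $Q-Q^2=\big((I_\hi{-}P)\Po_\oplus(X)P\big)^*\big((I_\hi{-}P)\Po_\oplus(X)P\big)$ forces $(I_\hi-P)\Po_\oplus(X)P=0$, i.e.\ $\ran J_\oplus$ is $\Po_\oplus(X)$-invariant for all $X$; minimality then gives $\ran J_\oplus=\hd$, so $J_\oplus$ is unitary. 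The converse is immediate (a unitary conjugate of a PVM is a PVM), and $J_\oplus$ then identifies $\hd$ with $\hi$.

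For (ii) I would extract the $d_k$'s fibrewise, but it is essential to choose the fibre orthonormal bases with care so that the biorthogonal vectors $g_\ell$ actually land in $V_\e$. Apply a measurable Gram--Schmidt procedure to the sequence $\big(\Xi_{h_n}(x)\big)_{n}$, discarding each term lying in the linear span of its predecessors; this produces a measurable orthonormal basis field $\{e_k(x)\}_{k<m(x)+1}$ of $\hi(x)$ with each $e_k(x)$ a \emph{finite} linear combination $e_k(x)=\sum_n a_n^{(k)}(x)\Xi_{h_n}(x)$, the coefficient fields $x\mapsto\big(a_n^{(k)}(x)\big)_n$ being measurable and finitely supported. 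Define $d_k(x)\in V_\e^\times$ through its components $\langle h_n|d_k(x)\rangle:=\langle\Xi_{h_n}(x)|e_k(x)\rangle$; this map is weak${}^*$-measurable, $d_k(x)\neq0$, and for $\fii\in V_\e$ one has $\langle\fii|d_k(x)\rangle=\langle\Xi_\fii(x)|e_k(x)\rangle$. The $d_k(x)$, $k<m(x)+1$, are linearly independent because $\{e_k(x)\}$ is orthonormal and $\{\Xi_{h_n}(x)\}_n$ is total in $\hi(x)$. Parseval's identity in $\hi(x)$ then yields, for $\fii,\psi\in V_\e$,
$$
\langle\fii|\Mo(X)\psi\rangle=\int_X\langle\Xi_\fii(x)|\Xi_\psi(x)\rangle\,\d\mu(x)=\int_X\sum_{k=1}^{m(x)}\langle\fii|d_k(x)\rangle\langle d_k(x)|\psi\rangle\,\d\mu(x),
$$
the desired minimal diagonalization. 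Finally, setting $g_\ell(x):=\sum_n a_n^{(\ell)}(x)h_n\in V_\e$ gives $\Xi_{g_\ell(x)}(x)=e_\ell(x)$, hence $\langle d_k(x)|g_\ell(x)\rangle=\langle e_k(x)|e_\ell(x)\rangle=\delta_{k\ell}$, and $x\mapsto g_\ell(x)$ is measurable. After discarding a single $\mu$-null set (and trivializing the fibre data there) all pointwise assertions hold for every $x\in\Omega$.

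The main obstacle I anticipate is the measurability bookkeeping of the last step: producing the orthonormal field $\{e_k(x)\}$ and the coefficient fields $a_n^{(k)}(x)$ measurably when the multiplicity $m(x)$ is not constant (and equals $\infty$ on part of $\Omega$) requires stratifying $\Omega$ by the vanishing or non-vanishing of the Gram determinants of initial segments of $\big(\Xi_{h_n}(x)\big)_n$ and carrying out a measurable selection on each stratum. The requirement that each $e_k(x)$ --- and hence each $g_\ell(x)$ --- be a genuine finite combination of the $\Xi_{h_n}(x)$, equivalently an element of $V_\e$, and not merely a norm limit, is precisely what makes the naive ``pick any orthonormal basis of $\hi(x)$'' approach fail and what forces the Gram--Schmidt construction. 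The remaining ingredients --- the Na\u{\i}mark construction, the direct-integral theorem, the equivalence $\Po_\oplus(X)=0\iff\Mo(X)=0$, and the algebraic identity behind (iii) --- are routine.
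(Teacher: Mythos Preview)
The paper does not actually prove this theorem; it merely cites \cite{HyPeYl,Pe11} for the result. Your proposal is correct and follows what is essentially the standard route taken in those references: build the minimal Na\u{\i}mark dilation by the GNS-type construction on $\hi$-valued simple functions, invoke the direct-integral (spectral multiplicity) theorem for the abelian von Neumann algebra generated by $\Po_\oplus$ to obtain $\hd=\int_\Omega^\oplus\hi(x)\,\d\mu(x)$, derive $m(x)\le\dim\hi$ from minimality, and then run a measurable Gram--Schmidt on the fibre vectors $\Xi_{h_n}(x)=(J_\oplus h_n)(x)$ to extract the $d_k$ and the biorthogonal $g_\ell\in V_\e$. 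Your argument for (iii) via the identity $Q-Q^2=\big((I-P)\Po_\oplus(X)P\big)^*\big((I-P)\Po_\oplus(X)P\big)$ is exactly the standard one (and reappears later in the paper as Lemma~\ref{lemma:comm}).

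The one point you correctly flag as delicate---the measurable Gram--Schmidt with varying (possibly infinite) multiplicity, and the insistence that each $e_k(x)$ be a \emph{finite} combination of the $\Xi_{h_n}(x)$ so that $g_\ell(x)$ genuinely lies in $V_\e$---is indeed where all the work sits; your stratification-by-Gram-determinants description is the right way to carry it out. Nothing is missing.
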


A minimal Stinespring dilation for a POVM $\Mo:\Sigma\to\lh$ (viewed as a completely positive map $L^\infty(\mu)\to\lh$, $f\mapsto\int f\,\d\Mo$, where $\mu$ is a probability measure such that $\Mo\ll\mu$) is called as a minimal Na\u{\i}mark dilation and it consists of a Hilbert space $\cal M$, an isometry $J:\hi\to\cal M$, and a spectral measure $\Po:\Sigma\to\cal L(\cal M)$ such that $\Mo(X)=J^*\Po(X)J$ and the vectors $\Po(X)J\fii$, $X\in\Sigma$, $\fii\in\hi$, span a dense subspace of $\cal M$. The above theorem tells that, whenever $\hi$ is separable, one can choose $\cal M=\hd=\int_\Omega^\oplus\hi(x)\,\d\mu(x)$ and $\Po$ to be the canonical spectral measure $\Po_\oplus$.

\subsection{Physical outcome spaces}

It is reasonable to assume that a physically relevant outcome space $(\Omega,\Sigma)$ of an observable is regular or `nice' enough. One can often suppose that $\Sigma$ is {\it countably generated,} i.e.\ there exists a countable $S\subseteq\Sigma$ such that $\Sigma$ is the smallest $\sigma$-algebra of $\Omega$ containing $S$. We will always consider any topological space $T$ as a measurable space $\big(T,\bo T\big)$ where $\bo T$ is the Borel $\sigma$-algebra of $T$. Furthermore, we equip any subset $S$ of $T$ with its subspace topology and the corresponding Borel $\sigma$-algebra $\bo S=\bo T\cap S$. We have the following proposition \cite[Proposition 3.2]{Preston}:
\begin{proposition}
A measurable space $(\Omega,\Sigma)$ is countably generated if and only if there exists a map 
$f:\,\Omega\to\R$ such that
\begin{itemize}
\item[(i)] for all $Y\in\bo\R$ the preimage $f^{-1}(Y)\in\Sigma$ (measurability) and
\item[(ii)] for all $X\in\Sigma$ there is $Y\in\bo\R$ such that $f^{-1}(Y)=X$.
\end{itemize}
\end{proposition}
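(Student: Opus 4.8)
The plan is to reduce both implications to the single set identity $\Sigma=f^{-1}(\bo\R)$, which is exactly what (i) and (ii) jointly assert: measurability (i) says $f^{-1}(\bo\R)\subseteq\Sigma$, and (ii) says $\Sigma\subseteq f^{-1}(\bo\R)$. Throughout I would lean on two routine facts: first, taking preimages commutes with generating $\sigma$-algebras, i.e.\ $f^{-1}\big(\sigma(\mathcal C)\big)=\sigma\big(f^{-1}(\mathcal C)\big)$ for any family $\mathcal C$ (the inclusion $\supseteq$ is immediate, and $\subseteq$ holds because $\{Y\mid f^{-1}(Y)\in\sigma(f^{-1}(\mathcal C))\}$ is a $\sigma$-algebra containing $\mathcal C$); second, $\bo\R$ is itself countably generated, e.g.\ $\bo\R=\sigma(\mathcal C)$ with $\mathcal C=\{(-\infty,q)\mid q\in\Q\}$.

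For the implication ``such $f$ exists $\Rightarrow$ $(\Omega,\Sigma)$ countably generated'' I would just chain these facts: (i) and (ii) give $\Sigma=f^{-1}(\bo\R)=f^{-1}\big(\sigma(\mathcal C)\big)=\sigma\big(f^{-1}(\mathcal C)\big)$, and $f^{-1}(\mathcal C)$ is a countable subfamily of $\Sigma$, so $\Sigma$ is countably generated.

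For the converse, write $\Sigma=\sigma(S)$ with $S=\{A_1,A_2,\ldots\}$ countable and encode the joint indicator pattern of the $A_n$ into one real number by a Cantor-set ternary device. Concretely, set $g\colon\Omega\to\{0,1\}^\N$, $g(x)=\big(\chi_{A_n}(x)\big)_n$, let $\phi\colon\{0,1\}^\N\to\R$, $\phi\big((\epsilon_n)_n\big)=\sum_{n=1}^\infty 2\epsilon_n3^{-n}$, and put $f=\phi\circ g$, i.e.\ $f(x)=\sum_{n=1}^\infty 2\cdot3^{-n}\chi_{A_n}(x)$. Since $\phi$ is continuous and injective on the compact space $\{0,1\}^\N$ it is a homeomorphism onto its image, the Cantor set $C\subseteq[0,1]$. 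For (i): $\bo{\{0,1\}^\N}$ equals the $\sigma$-algebra generated by the coordinate cylinders, and the $g$-preimage of a coordinate cylinder is $A_n$ or $\Omega\setminus A_n$, hence in $\Sigma$; since $\phi$ is continuous, $f^{-1}(\bo\R)=g^{-1}\big(\phi^{-1}(\bo\R)\big)\subseteq g^{-1}\big(\bo{\{0,1\}^\N}\big)\subseteq\Sigma$. For (ii): each $A_n=g^{-1}(Z_n)$ with $Z_n=\{(\epsilon_k)\mid\epsilon_n=1\}$ clopen in $\{0,1\}^\N$, so $B_n:=\phi(Z_n)$ is clopen in $C$, hence Borel in $C$, hence Borel in $\R$ because $C$ is closed in $\R$ and $\bo C=\bo\R\cap C$; injectivity of $\phi$ then gives $A_n=f^{-1}(B_n)$, so every generator of $\Sigma$ lies in $f^{-1}(\bo\R)$ and thus $\Sigma\subseteq f^{-1}(\bo\R)$. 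Combined with (i) this is $\Sigma=f^{-1}(\bo\R)$, i.e.\ (i) and (ii) both hold.

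The only real obstacle is the converse, and within it the subtle point is manufacturing $f$ so that $f^{-1}(\bo\R)$ equals $\Sigma$ \emph{on the nose} rather than merely containing it: a naive binary expansion $\sum 2^{-n}\chi_{A_n}$ can collapse distinct indicator patterns through carrying and so destroy generators, which is why the base-$3$ encoding with digits in $\{0,2\}$ is essential (it forces $\phi$ to be injective with Borel behaviour in both directions). The one technical verification deserving care is that clopen — indeed all Borel — subsets of the Cantor set $C$ are genuinely Borel subsets of $\R$; this reduces to $C\in\bo\R$ together with the subspace identity $\bo C=\bo\R\cap C$.
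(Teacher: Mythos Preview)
Your proof is correct. The paper does not provide its own proof of this proposition; it simply quotes the result from \cite{Preston} (Proposition 3.2 there), so there is no in-paper argument to compare against. Your approach --- encoding the countable generating family into the Cantor set via the ternary map $f(x)=\sum_n 2\cdot 3^{-n}\chi_{A_n}(x)$ and exploiting that $\phi$ is a homeomorphism onto $C$ --- is the standard one and is essentially what one finds in Preston's notes as well. The care you take about base-$3$ versus base-$2$ (to guarantee injectivity of $\phi$) and about Borel subsets of $C$ being Borel in $\R$ is exactly right; these are the only places where a careless treatment could go wrong.
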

\noindent
Recall that $f$ satisfying (i) and (ii) is called exactly measurable. If $(\Omega,\Sigma)$ is countably generated and $\mu$ any $\sigma$-finite positive measure on $\Sigma$ then $L^2(\mu)$ and $\hd=\int_\Omega^\oplus\hi(x)\,\d\mu(x)$ are separable.

We say that $(\Omega,\Sigma)$ is {\it nice}\footnote{In \cite{Preston}, nice spaces correspond to  type $\mathcal B$-spaces.} if it is countably generated and  $f:\,\Omega\to\R$ of the above proposition meets the additional condition 
\begin{itemize}
\item[(iii)] 
$f(\Omega)\in\bo\R$.
\end{itemize}
Note that in this case actually $f(X)\in\bo\R$ for all $X\in\Sigma$ \cite[Lemma 4.1]{Preston}. If, in addition, $f$ is injective then the nice space $(\Omega,\Sigma)$ is a {\it standard Borel space} showing that nice spaces are generalizations of standard Borel spaces. Any Borel subset of a separable complete metric space is a standard Borel space and, indeed, any standard Borel space is $\sigma$-isomorphic\footnote{A bijective map between two measurable spaces is a $\sigma$-isomorphism if it is measurable and its inverse is also measurable.} to such a set or even to some compact metric space. Usually in physics, outcome spaces are finite-dimensional second countable Hausdorff manifolds which are (as locally compact spaces) standard Borel.

One can think of nice spaces as standard Borel spaces without the separability property (recall that $\Sigma$ is separable if $\{x\}\in\Sigma$ for all $x\in\Omega$). For any $x\in\Omega$ one can define an atom $A_x:=\bigcap\{X\in\Sigma\,|\,x\in X\}=f^{-1}(\{f(x)\})$ \cite[Lemma 3.1]{Preston} so that a nice space is standard Borel if and only if $A_x=\{x\}$ for all $x\in\Omega$. Hence, atoms of nice spaces may have an `inner structure' (compare to the case of real world atoms).

Suppose that $(\Omega,\Sigma)$ is nice with an $f$ satisfying (i)--(iii). Hence, $f(\Omega)\in\bo\R$ is a standard Borel space and, without restricting generality,\footnote{Since two standard Borel spaces are $\sigma$-isomorphic if and only if they have the same cardinality.} we can assume that
\begin{itemize} 
\item $f(\Omega)=\{1,2,\ldots,N\}$, $N\in\N$, or $f(\Omega)=\N$ (discrete case), or 
\item $f(\Omega)=\R$ (continuous case).
\end{itemize}
In the discrete case, we say that $(\Omega,\Sigma)$ is {\it discrete} and denote $X_i=f^{-1}(\{i\})=A_{x_i}$, $i=1,2,\ldots$, so that $X_i\cap X_j=\emptyset$, $i\ne j$, thus showing that $\Sigma$ is the set of all unions of sets $X_i$ and the empty set $\emptyset$. Moreover, any observable $\Mo:\Sigma\to\lh$ is discrete and, as earlier, can be identified with $(\Mo_i)_{i=1}^N$ where  $\Mo_i=\Mo(X_i)$.

\section{Joint measurability and sequential measurements}

If quantum devices can be applied simultaneously on the same system, we say that they are compatible. Simultaneously measurable observables are called jointly measurable. Let us give formal definitions for these notions.

\begin{definition}
Observables $\Mo_i:\Sigma_i\to\lh$ with outcome spaces $(\Omega_i,\Sigma_i)$, $i=1,2$, are {\it jointly measurable} if they are {\it margins} of a {\it joint observable} $\No:\Sigma_1\otimes\Sigma_2\to\hi$ defined on the product $\sigma$-algebra $\Sigma_1\otimes\Sigma_2$ (generated by sets $X\times Y$, $X\in\Sigma_1$, $Y\in\Sigma_2)$, i.e.,
$$
\Mo_1(X)=\No(X\times\Omega_2),\quad\Mo_2(Y)=\No(\Omega_1\times Y),\qquad X\in\Sigma_1,\quad Y\in\Sigma_2.
$$
\end{definition}
Especially, $\Mo_1$ and $\Mo_2$ are jointly measurable if (and only if) they are {\it functions} or {\it relabelings} of a third observable $\Mo\in\O(\Sigma,\hi)$, i.e.\ for both  $i=1,2$ one has $\Mo_i(X_i)=\Mo\big(f_i^{-1}(X_i)\big)$ for all $X_i\in\Sigma_i$ where $f_i:\,\Omega_i\to\Omega$ is a measurable function. Now a joint observable $\No$ is defined by $\No(X\times Y)=\Mo\big(f_1^{-1}(X)\cap f_2^{-1}(Y)\big)$ for all $X\in\Sigma_1$ and $Y\in\Sigma_2$. Note that, in this case, all the three measurable spaces can be arbitrary \cite[Chapter 11]{kirja}. This implies that, in particular, any observable is jointly measurable with its relabelings.

\begin{definition}
Similarly, we say that an observable $\Mo:\Sigma\to\lh$ and a channel $\Phi:\lk\to\lh$ are {\it compatible} if there exists a {\it joint instrument} $\mathcal J:\lk\times\Sigma\to\lh$ such that
$$
\Mo(X)=\mathcal J(I_\ki,X),\quad\Phi(B)=\mathcal J(B,\Omega),\qquad X\in\Sigma,\quad B\in\lk.
$$
\end{definition}

The above means, when $\Mo$ and $\Phi$ are compatible, there exists a measurement of $\Mo$ such that $\Phi_*$ is the unconditioned state transformation induced by the measurement.

It is useful to look at joint measurablility and compatibility from a more general perspective. Recall the definition of the set ${\bf CP}(\cal A;\hi)$ of unital completely positive maps $\Phi:\cal A\to\lh$. The following result, to which we will often refer, has been obtained, e.g.,\ in \cite{HaHePe14}:

\begin{theorem}\label{theor:tulokuvaus}
Let $\cal A$ and $\cal B$ be von Neumann algebras, $\hi$ a Hilbert space, and $\Psi\in{\bf CP}(\cal A\otimes\cal B;\hi)$. Define the map $\Psi_{(1)}\in{\bf CP}(\cal A;\hi)$, $\Psi_{(1)}(a)=\Psi(a\otimes 1_{\cal B})$ for all $a\in\cal A$, and pick a minimal dilation $(\cal M,\pi,J)$ for $\Psi_{(1)}$. There is a unique map $E\in{\bf CP}(\cal B;\cal M)$ such that
$$
\Psi(a\otimes b)=J^*\pi(a)E(b)J,\qquad \pi(a)E(b)=E(b)\pi(a),\qquad a\in\cal A,\quad b\in\cal B.
$$
If, additionally, $\Psi$ is normal, then both $\pi$ and $E$ are normal.
\end{theorem}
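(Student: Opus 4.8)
The plan is to construct $E$ inside a minimal Stinespring dilation of the \emph{whole} map $\Psi$ and then transport it to the prescribed minimal dilation of $\Psi_{(1)}$. First I would fix a minimal Stinespring dilation $(\mathcal{N},\rho,K)$ of $\Psi:\mathcal{A}\otimes\mathcal{B}\to\lh$, so that $\Psi(a\otimes b)=K^*\rho(a\otimes b)K$. Since $\rho$ is a unital ${}^*$-representation, $\pi'(a):=\rho(a\otimes 1_{\mathcal{B}})$ and $E'(b):=\rho(1_{\mathcal{A}}\otimes b)$ are commuting unital ${}^*$-representations of $\mathcal{A}$ and $\mathcal{B}$ on $\mathcal{N}$ with $\pi'(a)E'(b)=\rho(a\otimes b)$. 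Let $\mathcal{M}_0$ be the closure of $\lin\{\pi'(a)K\varphi\mid a\in\mathcal{A},\ \varphi\in\hi\}$. Then $\mathcal{M}_0$ is $\pi'(\mathcal{A})$-invariant and contains $K\hi$, so, writing $\pi_0$ for the restriction of $\pi'$ and $J_0$ for the corestriction of $K$, the triple $(\mathcal{M}_0,\pi_0,J_0)$ is a minimal Stinespring dilation of $\Psi_{(1)}$. Denoting by $P_0$ the projection of $\mathcal{N}$ onto $\mathcal{M}_0$, one has $P_0\in\pi'(\mathcal{A})'$ and $P_0K=K$, and I would set $E_0(b):=P_0E'(b)P_0\in\mathcal{L}(\mathcal{M}_0)$, which lies in $\mathbf{CP}(\mathcal{B};\mathcal{M}_0)$ as a compression of the representation $E'$.

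Next I would check the two identities on $\mathcal{M}_0$. For $\chi,\psi\in\mathcal{M}_0$, using $P_0\in\pi'(\mathcal{A})'$, the invariance of $\mathcal{M}_0$, and $\pi'(a)E'(b)=E'(b)\pi'(a)$, one gets $\langle\chi|\pi_0(a)E_0(b)\psi\rangle=\langle\chi|\pi'(a)E'(b)\psi\rangle=\langle\chi|E'(b)\pi'(a)\psi\rangle=\langle\chi|E_0(b)\pi_0(a)\psi\rangle$, hence $\pi_0(a)E_0(b)=E_0(b)\pi_0(a)$; and, using $K^*P_0=K^*$ and $P_0\pi'(a)=\pi'(a)P_0$, $J_0^*\pi_0(a)E_0(b)J_0=K^*\pi'(a)E'(b)K=K^*\rho(a\otimes b)K=\Psi(a\otimes b)$. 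Then I would transport this to $(\mathcal{M},\pi,J)$: by uniqueness of minimal Stinespring dilations there is a unitary $U:\mathcal{M}\to\mathcal{M}_0$ with $U\pi(a)=\pi_0(a)U$ and $UJ=J_0$, so $E(b):=U^*E_0(b)U$ lies in $\mathbf{CP}(\mathcal{B};\mathcal{M})$, satisfies $\pi(a)E(b)=E(b)\pi(a)$, and gives $J^*\pi(a)E(b)J=(UJ)^*\pi_0(a)E_0(b)(UJ)=\Psi(a\otimes b)$.

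For uniqueness I would note that any $E\in\mathbf{CP}(\mathcal{B};\mathcal{M})$ satisfying the two identities must obey
\begin{align*}
\langle\pi(a)J\varphi|E(b)\pi(a')J\varphi'\rangle
&=\langle J\varphi|\pi(a^*)E(b)\pi(a')J\varphi'\rangle
=\langle J\varphi|\pi(a^*a')E(b)J\varphi'\rangle\\
&=\langle\varphi|\Psi(a^*a'\otimes b)\varphi'\rangle,
\end{align*}
the middle equality by $E(b)\pi(a')=\pi(a')E(b)$ and the last by the factorization; since the vectors $\pi(a)J\varphi$ are total in $\mathcal{M}$, this determines $E(b)$. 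For normality, $a\mapsto a\otimes 1_{\mathcal{B}}$ is normal, so $\Psi$ normal forces $\Psi_{(1)}$ normal, and the representation in a minimal Stinespring dilation of a normal completely positive map is normal (test an increasing net on the total set $\{\pi(c)J\varphi\}$ and use normality of $x\mapsto\Psi_{(1)}(c^*xc)$); thus $\pi$ is normal. For $E$, the displayed identity exhibits $b\mapsto\langle\psi|E(b)\psi\rangle$, for $\psi=\sum_{i=1}^{n}\pi(a_i)J\varphi_i$ in the dense domain, as a composition of normal maps — the embedding $b\mapsto I_n\otimes b$, conjugation by the matrix in $\mathcal{M}_n(\mathcal{A})$ with first row $(a_1,\dots,a_n)$ and zeros elsewhere, the ampliation $\Psi^{(n)}$, and a vector functional — hence it is normal, so $\langle\psi|E(b_\lambda)\psi\rangle\uparrow\langle\psi|E(b)\psi\rangle$ whenever $b_\lambda\uparrow b$, and polarization and density then yield $\sup_\lambda E(b_\lambda)=E(b)$.

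I expect the main obstacle to be the bookkeeping around the two nested subspaces: verifying that $\mathcal{M}_0$ really is a \emph{minimal} dilation of $\Psi_{(1)}$ and, more delicately, that compressing $E'$ by $P_0$ preserves the commutation relation — which hinges on $P_0$ lying in the commutant of $\pi'(\mathcal{A})$ and must be used at essentially every step. A secondary technical point, relevant only if one prefers to build $E$ directly through the sesquilinear form $\big\langle\sum_i\pi(a_i)J\varphi_i\,\big|\,E(b)\sum_j\pi(a_j')J\varphi_j'\big\rangle:=\sum_{i,j}\langle\varphi_i|\Psi(a_i^*a_j'\otimes b)\varphi_j'\rangle$ on the total set, is establishing its well-definedness, boundedness by $\|b\|$, and complete positivity; these reduce to positivity in $\mathcal{M}_N(\mathcal{A}\otimes\mathcal{B})$ of block matrices $\big(a_{ki}^*a_{lj}\otimes b_{kl}\big)$ whenever $(b_{kl})\geq0$, which one obtains by writing $b_{kl}=\sum_m c_{mk}^*c_{ml}$ and factoring the block matrix as $D^*D$, together with the complete positivity of $\Psi$.
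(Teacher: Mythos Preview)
Your proof is correct. The paper itself does not give a proof of this theorem; it merely cites the result from \cite{HaHePe14} and moves on, so there is no in-paper argument to compare against. Your approach---dilating the full $\Psi$ first, then compressing the second-tensor-factor representation to the minimal dilation space of $\Psi_{(1)}$ and transporting via the essential uniqueness of minimal Stinespring dilations---is the standard one and is carried out cleanly; the uniqueness and normality arguments are also sound. The alternative route you sketch at the end (defining $E$ directly via the sesquilinear form on the total set $\{\pi(a)J\fii\}$) is in fact the method used in the cited reference, so both are equally legitimate; your primary construction has the advantage that well-definedness and boundedness of $E$ are automatic, at the cost of the commutant bookkeeping you correctly flagged.
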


Let us first analyse what the above means for two jointly measurable observables.

\begin{theorem}\label{theor:JMF}
Suppose that $\Mo_i:\Sigma_i\to\lh$, $i=1,2$, are jointly measurable observables on a $\hi$. Let $(\M,\Po,J)$ be any minimal Na\u{\i}mark dilation for $\Mo_1$. Fix a joint observable $\No$ for $\Mo_1$ and $\Mo_2$. There is a unique POVM $\Fo:\Sigma_2\to\cal L(\M)$ such that $\Po(X)\Fo(Y)=\Fo(Y)\Po(X)$ for all $X\in\Sigma_1$ and $Y\in\Sigma_2$ and
$$
\No(X\times Y)=J^*\Po(X)\Fo(Y)J,\qquad X\in\Sigma_1,\quad Y\in\Sigma_2.
$$
\end{theorem}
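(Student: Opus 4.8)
The plan is to build $\Fo$ out of a minimal Na\u{\i}mark dilation of the joint observable $\No$ itself; morally this is Theorem~\ref{theor:tulokuvaus} applied to the abelian function algebras of $(\Omega_1,\Sigma_1)$ and $(\Omega_2,\Sigma_2)$, but it is cleanest to argue directly with commuting spectral measures. First I would fix a minimal Na\u{\i}mark dilation $(\ki,\Ro,K)$ of $\No$, so that $\No(Z)=K^*\Ro(Z)K$ for all $Z\in\Sigma_1\otimes\Sigma_2$, with $\Ro$ a PVM and the vectors $\Ro(Z)K\fii$ total in $\ki$. Setting $\Ro_1(X):=\Ro(X\times\Omega_2)$ and $\Ro_2(Y):=\Ro(\Omega_1\times Y)$ yields two PVMs on $\ki$ that commute, since multiplicativity of $\Ro$ gives $\Ro_1(X)\Ro_2(Y)=\Ro(X\times Y)=\Ro_2(Y)\Ro_1(X)$; moreover $K^*\Ro_1(X)K=\No(X\times\Omega_2)=\Mo_1(X)$, so $(\ki,\Ro_1,K)$ is a (possibly non-minimal) Na\u{\i}mark dilation of $\Mo_1$.

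Next I would pass to the minimal part. Let $P$ be the projection of $\ki$ onto $\ki_1$, the closed linear span of the vectors $\Ro_1(X)K\fii$, $X\in\Sigma_1$, $\fii\in\hi$. Multiplicativity of $\Ro_1$ shows $\ki_1$ is invariant, hence reducing, for the self-adjoint family $\{\Ro_1(X)\}$, so $P$ commutes with every $\Ro_1(X)$; also $K\hi\subseteq\ki_1$. Then $(\ki_1,\Ro_1|_{\ki_1},K)$ is a minimal Na\u{\i}mark dilation of $\Mo_1$, and by uniqueness of minimal dilations there is a unitary $W:\M\to\ki_1$ with $WJ=K$ and $W\Po(X)=\Ro_1(X)|_{\ki_1}W$; regarding $W$ as an isometry $\M\to\ki$ this means $W^*W=I_\M$, $WW^*=P$, $J=W^*K$ and $\Po(X)=W^*\Ro_1(X)W$. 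Now define $\Fo(Y):=W^*\Ro_2(Y)W\in\cal L(\M)$. Since $\Ro_2$ is a PVM, $0\le\Fo(Y)\le W^*W=I_\M$ and $\sigma$-additivity is inherited, so $\Fo$ is a POVM. Using $W^*P=W^*$, $PW=W$, $\Ro_1(X)P=P\Ro_1(X)$ and the commutativity of $\Ro_1$ and $\Ro_2$ one computes $\Po(X)\Fo(Y)=W^*\Ro_1(X)\Ro_2(Y)W=\Fo(Y)\Po(X)$, and, feeding in $J=W^*K$ together with $PK=K$ and $K^*P=K^*$,
$$
J^*\Po(X)\Fo(Y)J=K^*\Ro_1(X)\Ro_2(Y)K=K^*\Ro(X\times Y)K=\No(X\times Y).
$$

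For uniqueness, suppose $\Fo'$ also satisfies the two stated properties and put $D(Y):=\Fo(Y)-\Fo'(Y)$, a bounded self-adjoint operator which commutes with every $\Po(X)$ and for which $J^*\Po(X)D(Y)J=0$ for all $X\in\Sigma_1$. Then for all $X,X'\in\Sigma_1$ and $\fii,\psi\in\hi$, the commutation relation gives $\<\Po(X')J\fii\,|\,D(Y)\Po(X)J\psi\>=\<J\fii\,|\,\Po(X\cap X')D(Y)J\psi\>=0$, and since the vectors $\Po(X)J\psi$ are total in $\M$ by minimality of the dilation of $\Mo_1$, we conclude $D(Y)=0$; hence $\Fo$ is unique.

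I expect the only genuine work to be keeping the three dilations $\M$, $\ki_1$, $\ki$ straight: one must check that the compression $\Fo=W^*\Ro_2(\cdot)W$ of the \emph{non-reducing} PVM $\Ro_2$ nonetheless commutes with $\Po$ (which uses only that $\ki_1$ reduces $\Ro_1$, together with the commutativity of $\Ro_1$ and $\Ro_2$), and that the compression projection disappears inside $J^*\Po(X)\Fo(Y)J$ precisely because $K\hi\subseteq\ki_1$. A shorter packaging is to invoke Theorem~\ref{theor:tulokuvaus} with $\cal A=L^\infty(\mu_1)$, $\cal B=L^\infty(\mu_2)$ for $\sigma$-finite measures $\mu_i$ dominating $\Mo_i$ and $\Psi$ the normal completely positive unital map induced by $\No$, reading off $\Fo(Y)=E(\CHI Y)$; there the point requiring care is that $\No$ really does induce a \emph{normal} map on the tensor-product von Neumann algebra, which is where the measure theory of the product outcome space enters.
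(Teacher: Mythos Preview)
Your argument is correct. The paper itself does not spell out a proof of this theorem; it is stated immediately after Theorem~\ref{theor:tulokuvaus} with the remark ``Let us first analyse what the above means for two jointly measurable observables,'' so the intended proof is simply to specialize that general result: take $\cal A=L^\infty(\mu_1)$, $\cal B=L^\infty(\mu_2)$ for $\sigma$-finite $\mu_i$ with $\Mo_i\ll\mu_i$, let $\Psi$ be the normal unital completely positive map on $L^\infty(\mu_1)\,\overline\otimes\,L^\infty(\mu_2)\cong L^\infty(\mu_1\times\mu_2)$ induced by $\No$, observe that $\Psi_{(1)}$ is the map induced by $\Mo_1$ so that its minimal Stinespring dilation is $(\M,\pi,J)$ with $\pi(f)=\int f\,\d\Po$, and read off $\Fo(Y):=E(\CHI Y)$. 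Commutation with $\Po$ and uniqueness are then part of the conclusion of Theorem~\ref{theor:tulokuvaus}. You identify exactly this route in your final paragraph, together with the one genuine technical point (normality of $\Psi$ on the von Neumann tensor product).

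Your main argument takes a different, more self-contained path: you dilate $\No$ itself to a PVM $\Ro$, split it into commuting marginal PVMs $\Ro_1,\Ro_2$, and then compress $\Ro_2$ down to the minimal $\Mo_1$-dilation. This avoids any reference to the abstract Theorem~\ref{theor:tulokuvaus} and to $L^\infty$-algebras, and the only ingredients are multiplicativity of $\Ro$ and uniqueness of minimal Na\u{\i}mark dilations. The computation that the projection $P=WW^*$ drops out of $J^*\Po(X)\Fo(Y)J$ because $K^*P=K^*$ and $\Ro_1(X)P=P\Ro_1(X)$ is handled correctly, and the uniqueness argument via totality of the vectors $\Po(X)J\psi$ is clean. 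Compared with the paper's one-line appeal to Theorem~\ref{theor:tulokuvaus}, your approach is more elementary and makes the mechanism visible; the paper's approach has the advantage of packaging existence, commutation, and uniqueness into a single citation.
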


\subsection{Connection between joint and sequential measurements}

A special case of joint observables is sequential measurements where an initial observable $\Mo:\Sigma\to\lh$ is first measured yielding some $\Mo$-instrument $\mathcal J:\lk\times\Sigma\to\lh$ with output Hilbert space $\ki$. Then some observable $\Mo':\Sigma'\to\lk$ is measured. The conditional probability for obtaining an outcome within $Y\in\Sigma'$ in the second measurement, conditioned by the first measurement observing a value in $X\in\Sigma$, is
$$
\tr{\mathcal J_*(\rho,X)\Mo'(Y)}=\tr{\rho\mathcal J\big(\Mo'(Y),X\big)}
$$
when the system is initially in the state $\rho\in\sh$. For all spaces $(\Omega,\Sigma)$ and $(\Omega',\Sigma')$, the positive operator bimeasure $(X,Y)\mapsto\mathcal J\big(\Mo'(Y),X\big)$ extends into a POVM on $\Sigma\otimes\Sigma'$ \cite{LaYl, ylinen96}. In this case, the extension $\Jo:\Sigma\otimes\Sigma'\to\lh$ is a joint observable for the initial observable $\Mo$ and a distorted version $\Mo''=\mathcal J\big(\Mo'(\,\cdot\,),\Omega\big)$ of the second observable. As shown in Section \ref{sec:intro}, any joint measurement of discrete observables can be implemented as a sequential measurement; see also \cite{HeMi14} for this fact and its generalizations in the case of discrete observables. Next we show that joint and sequential measurements are, in this sense, equivalent in a very general case. Whenever $(\Omega,\Sigma)$ is a measurable space and $\mu$ is a probability measure on $\Sigma$, we denote by $\Po_\mu$ the canonical spectral measure on $L^2(\mu)$, i.e.,\ $\big(\Po_\mu(X)\psi\big)(x)=\CHI X(x)\psi(x)$ for all $X\in\Sigma$, $\psi\in L^2(\mu)$, and $\mu$-a.a.\ $x\in\Omega$. When $\ki$ is a Hilbert space we naturally identify $L^2(\mu)\otimes\ki$ with the $L^2$-space of functions $\Omega\to\ki$.

\begin{proposition}\label{prop:jointtijonoksi}
Suppose that $(\Omega_i,\Sigma_i)$, $i=1,2$, are countably generated measurable spaces and $\hi$ is a separable Hilbert space. Assume that $\Mo_i:\,\Sigma_i\to\lh$, $i=1,2$, are jointly measurable observables with a joint observable $\No$. There is a separable Hilbert space $\ki$, an $\Mo_1$-instrument $\mathcal J:\mathcal L(\ki)\times\Sigma_1\to\lh$, and a POVM $\Mo':\Sigma_2\to\mathcal L(\ki)$ such that
\begin{equation}\label{eq:jointtijonoksi}
\No(X\times Y)=\mathcal J(\Mo'(Y),X),\qquad X\in\Sigma_1,\quad Y\in\Sigma_2.
\end{equation}
\end{proposition}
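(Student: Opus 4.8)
The plan is to feed the joint observable $\No$ into Theorem~\ref{theor:JMF}, then turn the commuting pair $(\Po,\Fo)$ it produces into a sequential implementation by building the $\Mo_1$-instrument out of $\Po$. First I would fix a probability measure $\mu$ on $\Sigma_1$ with $\Mo_1\ll\mu$ (possible since $\Sigma_1$ is countably generated, so $\hi$ separable, and any $\sigma$-finite measure is equivalent to a probability measure) and invoke Theorem~\ref{th1} to realise a \emph{minimal} Na\u{\i}mark dilation of $\Mo_1$ on the direct integral $\M=\hd=\int_{\Omega_1}^\oplus\hi(x)\,\d\mu(x)$, with $\Po=\Po_\oplus$ the canonical spectral measure and $J=J_\oplus:\hi\to\M$ the isometry. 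Because $\Sigma_1$ is countably generated, $\M$ is separable. Theorem~\ref{theor:JMF} then yields a unique POVM $\Fo:\Sigma_2\to\mathcal L(\M)$ with $\Po(X)\Fo(Y)=\Fo(Y)\Po(X)$ and $\No(X\times Y)=J^*\Po(X)\Fo(Y)J$.

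Next I would exploit that $\Po_\oplus$ is the canonical (diagonal) spectral measure: commutation of $\Fo(Y)$ with every $\Po_\oplus(X)=\CHII X$ says exactly that each $\Fo(Y)$ is a decomposable operator, $\Fo(Y)=\int_{\Omega_1}^\oplus\Fo(Y)(x)\,\d\mu(x)$, with $0\le\Fo(Y)(x)\le I_{\hi(x)}$ and $\Fo(\Omega_2)(x)=I_{\hi(x)}$ for $\mu$-a.a.\ $x$. (Here one has to be slightly careful about the measurable dependence of $x\mapsto\Fo(Y)(x)$; this is standard direct-integral theory once $\Sigma_2$ is countably generated, since it suffices to fix the fields on a countable generating family of $Y$'s and extend by $\sigma$-additivity.) So for $\mu$-a.a.\ $x$ we get a genuine POVM $Y\mapsto\Fo(Y)(x)$ on the separable space $\hi(x)$. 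The idea is now to take as output space $\ki$ a single separable Hilbert space large enough to host a \emph{fixed} rank-$1$ PVM $\Mo':\Sigma_2\to\mathcal L(\ki)$ that is `universal' for $\Sigma_2$ (e.g.\ the canonical spectral measure $\Po_\nu$ on $L^2(\nu)$ for a suitable probability measure $\nu$ on $\Sigma_2$), and for each $x$ to choose a channel $\Phi_x:\mathcal L(\ki)\to\mathcal L(\hi(x))$ with $\Phi_x(\Mo'(Y))=\Fo(Y)(x)$, in parallel with the discrete-case construction of Remark~\ref{remu1}; assembling these fibrewise gives a channel $\Phi:\mathcal L(\ki)\to\mathcal L(\M)$ that is decomposable with respect to $\Po_\oplus$ and satisfies $\Phi(\Mo'(Y))=\Fo(Y)$.

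With such a $\Phi$ in hand I would define the candidate $\Mo_1$-instrument by $\mathcal J(B,X):=J^*\Po_\oplus(X)\Phi(B)J$ for $B\in\mathcal L(\ki)$, $X\in\Sigma_1$. One checks directly that $\mathcal J(\cdot,X)$ is an operation, that $\mathcal J(I_\ki,X)=J^*\Po_\oplus(X)J=\Mo_1(X)$ so it is indeed an $\Mo_1$-instrument, and that $\sigma$-additivity in $X$ follows from that of $\Po_\oplus$ (ultraweakly). Then
$$
\mathcal J(\Mo'(Y),X)=J^*\Po_\oplus(X)\Phi(\Mo'(Y))J=J^*\Po_\oplus(X)\Fo(Y)J=\No(X\times Y),
$$
which is precisely \eqref{eq:jointtijonoksi}. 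Normality/measurability of $\Phi$ is what makes $\mathcal J$ a legitimate instrument, so this is the step deserving care.

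The main obstacle I anticipate is the fibrewise construction of the channels $\Phi_x$ together with their measurable dependence on $x$: one needs, for $\mu$-a.a.\ $x$, a channel on the \emph{same} output algebra $\mathcal L(\ki)$ dilating the fibre POVM $\Fo(\cdot)(x)$, and these must fit together into a normal decomposable map on $L^2(\mu)\otimes$(fibres). In the discrete/finite-dimensional setting this is exactly the content of Remark~\ref{remu1} (realise a POVM as $\Phi_J(\text{rank-1 PVM})$); here one must do it uniformly in $x$ over a direct integral, which is a measurable-selection argument. A clean way to sidestep dimension bookkeeping is to pick $\ki$ infinite-dimensional and use a single Na\u{\i}mark-type dilation valid for all fibres at once, measurable in $x$ by Theorem~\ref{th1} applied to the POVM $\Fo$ on $\M$ itself (whose own minimal dilation is again a canonical spectral measure, hence of the required form); I would organise the argument around that observation. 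Everything else — verifying the instrument axioms and the identity $\mathcal J(\Mo'(Y),X)=\No(X\times Y)$ — is routine manipulation of the Na\u{\i}mark dilation.
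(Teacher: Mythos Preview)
Your overall strategy is correct and coincides with the paper's: obtain $\Fo$ via Theorem~\ref{theor:JMF}, produce a decomposable channel $T:\mathcal L(\ki)\to\mathcal L(\hd)$ with $T(\Mo'(Y))=\Fo(Y)$ for a fixed PVM $\Mo'$, and set $\mathcal J(B,X)=J_\oplus^*\Po_\oplus(X)T(B)J_\oplus$. The paper indeed ends up with $\ki=L^2(\mu_2)$ and $\Mo'=\Po_{\mu_2}$, exactly your ``universal rank-1 PVM'' idea.

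Where you and the paper part ways is in building $T$. Your primary plan---decompose $\Fo(Y)$ fibrewise into POVMs $Y\mapsto\Fo(Y)(x)$ and then select channels $\Phi_x$ measurably---is the step you rightly flag as delicate, and the paper simply does not attempt it. Instead the paper carries out precisely your fallback suggestion, but with nontrivial extra structure you do not mention: it takes a minimal Na\u{\i}mark dilation $(\M,\Qo,K)$ of $\Fo$, \emph{lifts} $\Po_\oplus$ to a PVM $\tilde\Po$ on $\M$ commuting with $\Qo$ (using the intertwining $\tilde\Po(X)\Qo(Y)K=\Qo(Y)K\Po_\oplus(X)$), and shows that the product $(X,Y)\mapsto\tilde\Po(X)\Qo(Y)$ extends to a PVM $\Ro$ on $\Sigma_1\otimes\Sigma_2$. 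Diagonalizing $\Ro$ and embedding into $L^2(\mu_1)\otimes L^2(\mu_2)\otimes\M_\infty$ via decomposable isometries then forces the transported $\ov K$ to commute with $\Po_{\mu_1}\otimes I$, hence to be decomposable over $\Omega_1$; the fibre isometries $\tilde K(x)$ are automatically measurable and yield $T_x(B)=\tilde K(x)^*(B\otimes I_{\M_\infty})\tilde K(x)$. In short, the paper purchases measurability by a global commutation argument rather than by selection, and the key ingredient your sketch is missing is the construction of $\tilde\Po$ and the resulting joint PVM $\Ro$. Without that step, ``apply Theorem~\ref{th1} to $\Fo$'' alone does not give you a dilation that respects the $\Omega_1$-direct-integral structure.
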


\begin{proof}
Choose probability measures $\mu_i:\Sigma\to[0,1]$ such that $\Mo_i\ll\mu_i$, $i=1,2$. Pick a minimal Na\u{\i}mark dilation $(\hi_\oplus,\Po_\oplus,J_\oplus)$ of Theorem \ref{th1} for $\Mo_1$ where
$$
\hi_\oplus:=\int_{\Omega_1}^\oplus\hi(x)\,\d\mu_1(x)
$$
is a direct integral space which is separable since $(\Omega_1,\Sigma_1)$ is countably generated.
According to Theorem \ref{theor:JMF}, there is a POVM $\Fo:\Sigma_2\to\mathcal L(\hi_\oplus)$ such that $\Po_\oplus(X)\Fo(Y)=\Fo(Y)\Po_\oplus(X)$ and $\No(X\times Y)=J_\oplus^*\Po_\oplus(X)\Fo(Y)J_\oplus$ for all $X\in\Sigma_1$ and $Y\in\Sigma_2$. Let $(\M,\Qo,K)$ be a minimal Na\u{\i}mark dilation for $\Fo$. Again, $\M$ is separable since $\hd$ is separable and $\Sigma_2$ is countably generated.

Fix $X\in\Sigma_1$ and define $\Fo_X:\Sigma_2\to\mathcal L(\hi_\oplus)$ by $\Fo_X(Y)=\Po_\oplus(X)\Fo(Y)$. Now $\Fo_X(Y)\le \Fo(Y)$ for all $Y\in\Sigma_2$, so that one can define a unique 
$\tilde\Po(X)\in\mathcal L(\M)$ by $\tilde\Po(X)\Qo(Y)K\psi:=\Qo(Y)K\Po_\oplus(X)\psi$, $Y\in\Sigma_2$ and $\psi\in\hd$ (see, e.g., a similar proof of \cite[Proposition 2.1]{HaHePe14}). Clearly,  $\tilde\Po(X)^2=\tilde\Po(X)$, $\tilde\Po(X)\Qo(Y)=\Qo(Y)\tilde\Po(X)$, and  $\Fo_X(Y)=K^*\tilde\Po(X)\Qo(Y)K$ for all $Y\in\Sigma_2$. Hence, $X\mapsto\tilde\Po(X)$ is a PVM.

For all $X\in\Sigma_1$ and $Y\in\Sigma_2$, define the projection $\Ro(X,Y)=\tilde\Po(X)\Qo(Y)\in\li(\M)$. Since $\No$ is a POVM, for any $\fii_1,\,\fii_2\in\hi$, $X_1,\,X_2\in\Sigma_1$, and $Y_1,\,Y_2\in\Sigma_2$, the complex bimeasure
\begin{eqnarray*}
(X,Y)&\mapsto&\<\Qo(Y_1)K\Po_\oplus(X_1)J_\oplus\fii_1|\Ro(X,Y)\Qo(Y_2)K\Po_\oplus(X_2)J_\oplus\fii_2\>\\
&=&\<\fii_1|\No\big((X\times Y)\cap(X_1\times Y_1)\cap(X_2\times Y_2)\big)\fii_2\>.
\end{eqnarray*}
extends into a complex measure on $\Sigma_1\otimes\Sigma_2$. Using the minimality of the subsequent dilations, one finds that $(X,Y)\mapsto\<\xi|\Ro(X,Y)\xi\>$ extends into a measure for all $\xi\in\M$. Thus $(X,Y)\mapsto\tilde\Po(X)\Qo(Y)$ extends into a PVM which we shall also denote by $\Ro$.

Since $\M$ is separable, we may diagonalize $\Ro$ and thus identify $\M$ with the direct integral space
$$
\M_\oplus=\int_{\Omega_1\times\Omega_2}^\oplus\M(x,y)\,\d(\mu_1\times\mu_2)(x,y),
$$
where $\Ro$ operates as the canonical spectral measure. From now on, let us fix a separable infinite-dimensional Hilbert space $\M_\infty$ so that we may define a decomposable isometry $W:\M\to \ov\M:=L^2(\mu_1\times\mu_2)\otimes\M_\infty\cong L^2(\mu_1)\otimes \big[L^2(\mu_2)\otimes\M_\infty\big]$,
$$
W=\int_{\Omega_1\times\Omega_2}^\oplus W(x,y)\,\d(\mu_1\times\mu_2)(x,y),
$$
where $W(x,y):\M(x,y)\to\M_\infty$ are isometries. One may also define the decomposable isometry $W_1:\hi_\oplus\to\ov\M$, $W_1=\int_{\Omega_1}^\oplus W_1(x)\,\d\mu_1(x),$ where $W_1(x):\hi(x)\to L^2(\mu_2)\otimes\M_\infty$ are isometries,
and $\ov K:=WKW_1^*\in\li(\ov\M)$. 

Define the canonical spectral measure $\ov\Ro:=\Po_{\mu_1\times\mu_2}\otimes I_{\M_\infty}$ of $\ov\M$ with the margin $\ov\Po:\Sigma_1\to\mathcal L(\ov\M)$, $\ov\Po(X)=\ov\Ro(X\times\Omega_2)=\Po_{\mu_1}(X)\otimes I_{L^2(\mu_2)}\otimes I_{\M_\infty}$. It is simple to check that $\ov\Ro(Z)W=W\Ro(Z)$ and $\Po_\oplus(X)W_1^*=W_1^*\ov\Po(X)$ for all $Z\in\Sigma_1\otimes\Sigma_2$ and $X\in\Sigma_1$. This means that
$$
\ov\Po(X)\ov K=W\tilde\Po(X)KW_1^*=WK\Po_\oplus(X)W_1^*=\ov K\,\ov\Po(X)
$$
for all $X\in\Sigma_1$. Thus, $\ov K=\int_{\Omega_1}^\oplus\ov K(x)\,\d\mu_1(x)$ where $\ov K(x)\in\mathcal L\big(L^2(\mu_2)\otimes\M_\infty\big)$. Define the isometry $\tilde{K}:=WK=\ov KW_1=\int_{\Omega_1}^\oplus\tilde{K}(x)\,\d\mu(x)$
with the isometries $\tilde{K}(x)=\ov K(x)W_1(x):\,\hi(x)\to L^2(\mu_2)\otimes\M_\infty$.

For $\mu_1$-a.a.\ $x\in\Omega_1$ define the channel 
$$
T_x:\;\mathcal L\big(L^2(\mu_2)\big)\to\mathcal L\big(\hi(x)\big),\quad B\mapsto T_x(B):=\tilde{K}(x)^*(B\otimes I_{\M_\infty})\tilde{K}(x).
$$ 
Since the field $x\mapsto\tilde{K}(x)$ of isometries is measurable, one may define the channel
$$
T:\;\mathcal L\big(L^2(\mu_2)\big)\to\mathcal L(\hi_\oplus),\quad B\mapsto T(B):=\int_\Omega^\oplus T_x(B)\,\d\mu(x).
$$
Using the intertwining properties of the various isometries and POVMs we have, for all $\fii\in\hi$, $X\in\Sigma_1$, and $Y\in\Sigma_2$,
\begin{eqnarray*}
\<J_\oplus\fii|\Po_\oplus(X)T\big(\Po_{\mu_2}(Y)\big)J_\oplus\fii\>&=&\int_X\<(J_\oplus\fii)(x)|T_x\big(\Po_{\mu_2}(Y)\big)(J_\oplus\fii)(x)\>\,\d\mu_1(x)\\
&=&\int_X\<\tilde{K}(x)(J_\oplus\fii)(x)|\big(\Po_{\mu_2}(Y)\otimes I_{\M_\infty}\big)\tilde{K}(x)(J_\oplus\fii)(x)\>\,\d\mu_1(x)\\
&=&\<\tilde{K}J_\oplus\fii|\ov\Ro(X\times Y)\tilde{K}J_\oplus\fii\>=\<KJ_\oplus\fii|\Ro(X\times Y)KJ_\oplus\fii\>\\
&=&\<J_\oplus\fii|\Po_\oplus(X)\Fo(Y)J_\oplus\fii\>=\<\fii|\No(X\times Y)\fii\>.
\end{eqnarray*}
Hence, $\No(X\times Y)=J_\oplus^*\Po_\oplus(X)T\big(\Po_{\mu_2}(Y)\big)J_\oplus$ for all $X\in\Sigma_1$ and $Y\in\Sigma_2$.

Define the instrument $\mathcal J:\;\mathcal L\big(L^2(\mu_2)\big)\times\Sigma_1\to\lh$ by $\mathcal J(B,X)=J_\oplus^*\Po_\oplus(X)T(B)J_\oplus$, see \cite[Theorem 1]{Part2}. The choices $\ki:=L^2(\mu_2)$ and $\Mo':=\Po_{\mu_2}$ yield Equation \eqref{eq:jointtijonoksi}.
\end{proof}

\section{Observables determining the future}

We now turn our attention to those observables which have the property that, no matter how we measure them, registering an outcome unequivocally determines the post-measurement state of the system under study.  Combining Theorem \ref{theor:tulokuvaus} with Theorem \ref{th1}, we obtain the following characterization \cite[Theorem 1]{Part2}:

\begin{theorem}\label{theor:M-compatible}
Let $(\Omega,\Sigma)$ be a measurable space, $\hi$ a separable Hilbert space, and $\Mo:\Sigma\to\lh$ an observable. Pick the minimal Na\u{\i}mark dilation $(\hd,\Po_\oplus,J_\oplus)$ of Theorem \ref{th1} for $\Mo$. Let $\mathcal J:\;\lk\times\Sigma\to\lh$ be an $\Mo$-instrument. There is a unique channel $T:\lk\to\cal L(\hd)$ defined by a (weakly $\mu$-measurable) field $x\mapsto T_x$ of channels $T_x:\lk\to\cal L\big(\hi(x)\big)$,
$$
T(B)=\int_\Omega^\oplus T_x(B)\,\d\mu(x),
$$
i.e.,\ $\big(T(B)\psi\big)(x)=T_x(B)\psi(x)$ for all $B\in\lk$, $\psi\in\hd$, and $\mu$-a.a.\ $x\in\Omega$, such that
$$
\mathcal J(B,X)=J_\oplus^*T(B)\Po_\oplus(X)J_\oplus,\qquad B\in\lk,\quad X\in\Sigma.
$$
\end{theorem}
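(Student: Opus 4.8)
The plan is to repackage the instrument $\mathcal J$ as a single normal unital completely positive map on a von Neumann algebra tensor product and then invoke Theorem \ref{theor:tulokuvaus}, reading off the Na\u{\i}mark data from Theorem \ref{th1}. First I would fix a probability measure $\mu:\Sigma\to[0,1]$ with $\Mo\ll\mu$ and recall the minimal Na\u{\i}mark dilation $(\hd,\Po_\oplus,J_\oplus)$ of Theorem \ref{th1}, where $\hd=\int_\Omega^\oplus\hi(x)\,\d\mu(x)$. The point to keep in mind is that, when $\Mo$ is viewed as the completely positive map $L^\infty(\mu)\ni f\mapsto\int f\,\d\Mo\in\lh$, its minimal Stinespring dilation is precisely $\big(\hd,\pi,J_\oplus\big)$ with $\pi$ the diagonalization representation $\pi(f)=\hat f$ on the direct integral, so that $\pi(\CHI X)=\Po_\oplus(X)$.

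Next I would show that the bimeasure $(B,X)\mapsto\mathcal J(B,X)$ --- a normal operation in $B$ for each fixed $X\in\Sigma$ and an $\lh$-valued measure in $X$ for each fixed $B\in\lk$ --- extends to a normal unital completely positive map $\Psi:L^\infty(\mu)\otimes\lk\to\lh$ with $\Psi(\CHI X\otimes B)=\mathcal J(B,X)$. This is the operator-valued analogue of the standard fact that a completely positive bimeasure extends to a normal completely positive map on the von Neumann algebra tensor product, and it rests on the $\sigma$-additivity axiom (iii) for instruments together with complete positivity of the operations $\mathcal J(\,\cdot\,,X)$; see \cite{LaYl,ylinen96} and \cite[Theorem 1]{Part2}. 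Unitality of $\Psi$ is automatic because $\mathcal J(\,\cdot\,,\Omega)$ is a channel. Then I would apply Theorem \ref{theor:tulokuvaus} with $\cal A=L^\infty(\mu)$ and $\cal B=\lk$: since $\Psi_{(1)}(f)=\Psi(f\otimes I_\ki)=\int f\,\d\Mo$, its minimal dilation is exactly the triple $\big(\hd,\pi,J_\oplus\big)$ above, and the theorem yields a unique $T\in{\bf CP}(\lk;\hd)$ satisfying $\hat f\,T(B)=T(B)\,\hat f$ for all $f\in L^\infty(\mu)$ and $B\in\lk$, together with
$$
\mathcal J(B,X)=\Psi(\CHI X\otimes B)=J_\oplus^*\,\Po_\oplus(X)\,T(B)\,J_\oplus=J_\oplus^*\,T(B)\,\Po_\oplus(X)\,J_\oplus ,
$$
the final equality being the commutation relation with $f=\CHI X$. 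Normality of $\Psi$ forces $T$ to be normal, and uniqueness of $T$ is inherited from Theorem \ref{theor:tulokuvaus}; this already establishes the displayed formula of the statement with a normal unital completely positive $T$.

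Finally I would extract the fibrewise structure. Because $T(B)$ commutes with every diagonalizable operator $\hat f$ on the direct integral $\hd$, the commutation theorem for direct integrals \cite{Di} shows that each $T(B)$ is decomposable, $T(B)=\int_\Omega^\oplus T(B)(x)\,\d\mu(x)$; selecting the fibre representatives coherently in $B$ produces a weakly $\mu$-measurable field $x\mapsto T_x$ of linear maps $T_x:\lk\to\cal L\big(\hi(x)\big)$, $T_x(B):=T(B)(x)$, with $T(B)=\int_\Omega^\oplus T_x(B)\,\d\mu(x)$ for all $B$. Since a decomposable operator is positive, or equals the identity, or dominates another decomposable operator exactly when $\mu$-almost all of its fibre components do, and this persists under matrix amplification and under passage to monotone limits, the complete positivity, unitality and normality of $T$ descend to $\mu$-a.a.\ fibres, so $T_x$ is a channel for $\mu$-a.a.\ $x$. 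I expect the \emph{main obstacles} to be exactly the two reductions just used: the extension of the instrument bimeasure $\mathcal J$ to the map $\Psi$ on $L^\infty(\mu)\otimes\lk$, and the measurable-selection bookkeeping in the direct-integral decomposition (where one also has to make sure $\hd$ and the fibres are separable so that Dixmier's theory applies, reducing to this case if necessary); once these are in hand, the result is an immediate consequence of Theorems \ref{th1} and \ref{theor:tulokuvaus}, as carried out in \cite[Theorem 1]{Part2}.
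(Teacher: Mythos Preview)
Your proposal is correct and follows exactly the route indicated by the paper, which merely states that the result is obtained by ``combining Theorem \ref{theor:tulokuvaus} with Theorem \ref{th1}'' and refers to \cite[Theorem 1]{Part2} for the details. The two technical points you flag (extending the instrument to a normal unital completely positive map on $L^\infty(\mu)\otimes\lk$ and the measurable selection of the fibre channels $T_x$ via the commutation theorem for direct integrals) are precisely the steps carried out in that reference.
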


\begin{definition}
Let an $\Mo\in\O(\Sigma,\hi)$ be associated with the Na\u{\i}mark dilation $(\hd,\Po_\oplus,J_\oplus)$ of Theorem \ref{th1}. If $\dim\hi(x)=1$ for $\mu$-a.a.\ $x\in\Omega$, we say that $\Mo$ is of {\it rank 1}. In this case, $\hd=L^2(\mu)$ and $\Po_\oplus=\Po_\mu$.
\end{definition}

Let the observable $\Mo$ of Theorem \ref{th1} be of rank 1. Also assume that $\mathcal J:\lk\times\Sigma\to\lh$ is an $\Mo$-instrument defined by the pointwise channels $T_x:\lk\to\mathcal L\big(\hi(x)\big)$ of Theorem \ref{theor:M-compatible}. Because of the rank-1 assumption, there are states $\sigma_x\in\cal S(\ki)$ such that $T_x(B)=\tr{\sigma_xB}$, $x\in\Omega$, $B\in\lk$. It follows that $\mathcal J$ is of the following type:

\begin{definition}
Let $\hi$ and $\ki$ be Hilbert spaces and $(\Omega,\Sigma)$ a measurable space. We say that an instrument $\mathcal J:\lk\times\Sigma\to\lh$ is {\it nuclear} if there is a weakly $\mu$-measurable\footnote{In this case, all maps $x\mapsto\tr{\sigma_x B}$, $B\in\lk$, are $\mu$-measurable.} field $\Omega\ni x\mapsto\sigma_x\in\mathcal S(\ki)$  of states such that
$$
\mathcal J(B,X)=\int_X\tr{\sigma_x B}\,\d\Mo(x),\qquad X\in\Sigma,\quad B\in\lk.
$$
\end{definition}

The term {\it nuclear} follows the terminology of Cycon and Hellwig \cite{CyHe}. The above definition means that, in the Schr\"odinger picture, a nuclear instrument $\mathcal J$ has the form
$$
\mathcal J_*(\rho,X)=\int_X\sigma_x\,\d p_\rho^\Mo(x),\qquad\rho\in\sh,\quad X\in\Sigma,
$$
where the integral is defined weakly. Physically this means that a nuclear instrument prepares the quantum system into some post-measurement state which solely depends on the outcome registered, not on the pre-measurement state of the system. This is why also the name {\it measure-and-prepare instrument} could also be used. Thus, any measurement of a rank-1 observable is described by a nuclear instrument and registering a value fully determines the post-measurement state. This is to say, rank-1 observables determine the future of the system under measurement. In fact, also the contrary is true as the following result from \cite{Part2} tells us.

\begin{theorem}\label{theor:rank1-nuclear}
An observable $\Mo:\Sigma\to\lh$ is rank-1 if and only if each $\Mo$-instrument $\mathcal J:\lk\times\Sigma\to\lh$ is nuclear (where $\ki$ is any Hilbert space).
\end{theorem}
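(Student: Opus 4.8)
The plan is to prove the two implications separately: the forward one, that rank-$1$ observables admit only nuclear instruments, by a direct computation from Theorem~\ref{theor:M-compatible}, and the converse by contraposition, by exhibiting one non-nuclear $\Mo$-instrument whenever $\Mo$ fails to be rank-$1$.

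For the forward implication I would start from an arbitrary $\Mo$-instrument $\mathcal J:\lk\times\Sigma\to\lh$ and use Theorem~\ref{theor:M-compatible} to write $\mathcal J(B,X)=J_\oplus^*T(B)\Po_\oplus(X)J_\oplus$ with a decomposable channel $T=\int_\Omega^\oplus T_x\,\d\mu(x)$. Since $\dim\hi(x)=1$ for $\mu$-almost every $x$, each fibre algebra $\mathcal L(\hi(x))$ is one-dimensional, so $T_x(B)=\tr{\sigma_xB}I_{\hi(x)}$ for a (normal, hence given by a density operator) state $\sigma_x\in\mathcal S(\ki)$, the field $x\mapsto\sigma_x$ being measurable because $x\mapsto T_x$ is. Because $\hd=L^2(\mu)$ in this case, $T(B)\Po_\oplus(X)$ is multiplication by the bounded function $x\mapsto\CHI X(x)\tr{\sigma_xB}$, and since $J_\oplus^*\hat gJ_\oplus=\int g\,\d\Mo$ for every bounded measurable $g$ (immediate from Theorem~\ref{th1}(i) by linearity and approximation), we obtain $\mathcal J(B,X)=\int_X\tr{\sigma_xB}\,\d\Mo(x)$, i.e.\ $\mathcal J$ is nuclear. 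This part merely makes precise the discussion preceding the statement.

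For the converse, assume $\Mo$ is not rank-$1$, so that $X_0:=\{x\in\Omega\mid m(x)\ge2\}$ has $\mu(X_0)>0$. By standard direct-integral theory I would choose a $\mu$-measurable field $x\mapsto w_x$ of isometries $w_x:\hi(x)\to\hi$ (possible since $m(x)=\dim\hi(x)\le\dim\hi$) and put $T(B):=\int_\Omega^\oplus w_x^*Bw_x\,\d\mu(x)$, a normal unital completely positive map $\lh\to\mathcal L(\hd)$. Then $\mathcal J(B,X):=J_\oplus^*T(B)\Po_\oplus(X)J_\oplus$ is an $\Mo$-instrument with output space $\ki=\hi$: it is an instrument because $T(B)$ is decomposable and so commutes with $\Po_\oplus(X)$, whence $\mathcal J(\,\cdot\,,X)=(\Po_\oplus(X)J_\oplus)^*T(\,\cdot\,)(\Po_\oplus(X)J_\oplus)$ is completely positive, and $\mathcal J(I_\hi,X)=J_\oplus^*\Po_\oplus(X)J_\oplus=\Mo(X)$ by Theorem~\ref{th1}(i) (cf.\ \cite[Theorem~1]{Part2}). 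To see that this $\mathcal J$ is not nuclear I would argue by contradiction: if $\mathcal J(B,X)=\int_X\tr{\sigma_xB}\,\d\Mo(x)$ for some measurable field $x\mapsto\sigma_x\in\mathcal S(\hi)$, then $\mathcal J$ is also of the form of Theorem~\ref{theor:M-compatible} with the decomposable channel $T'(B):=\int_\Omega^\oplus\tr{\sigma_xB}I_{\hi(x)}\,\d\mu(x)$, so its uniqueness clause forces $T=T'$ and hence $w_x^*Bw_x=\tr{\sigma_xB}I_{\hi(x)}$ for all $B\in\lh$ and $\mu$-a.a.\ $x$. Fixing such an $x\in X_0$ and substituting $B=w_xCw_x^*$ for $C\in\mathcal L(\hi(x))$ gives $C=\tr{w_x^*\sigma_xw_x\,C}\,I_{\hi(x)}$ for every $C$, which is impossible once $\dim\hi(x)\ge2$ (take $C$ not proportional to $I_{\hi(x)}$). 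Hence $\mathcal J$ is not nuclear, and the contraposition is complete.

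The purely algebraic core --- that an isometric compression onto a space of dimension $\ge2$ is never scalar-valued --- is easy. The steps needing care are measure-theoretic: producing the measurable field $x\mapsto w_x$, verifying that $T$ and $T'$ are genuine normal channels, and, most delicately, upgrading the uniqueness of the channel $T$ in Theorem~\ref{theor:M-compatible} to $\mu$-a.e.\ uniqueness of the representing \emph{field} $x\mapsto T_x$ --- for which one fixes a countable ultraweakly dense subset of $\lh$ and invokes the normality (ultraweak continuity) of each $T_x$ to pass from a.e.\ agreement on that subset to a.e.\ agreement on all of $\lh$.
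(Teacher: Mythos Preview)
Your forward implication is exactly the argument the paper gives in the paragraph preceding the theorem: once $\dim\hi(x)=1$ $\mu$-a.e., each $T_x$ is a state on $\lk$, and the nuclear form follows. The paper does not prove the converse at all; it simply cites \cite{Part2} for the full statement. So there is no in-paper proof to compare your contrapositive construction against.

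That said, your converse argument is sound and is the natural route given the machinery of Theorems~\ref{th1} and~\ref{theor:M-compatible}. The instrument $\mathcal J(B,X)=J_\oplus^*\big(\int_\Omega^\oplus w_x^*Bw_x\,\d\mu(x)\big)\Po_\oplus(X)J_\oplus$ is well formed, and the contradiction via the uniqueness clause of Theorem~\ref{theor:M-compatible} is the right lever. You are also right to flag the passage from $T=T'$ to $T_x=T_x'$ $\mu$-a.e.\ as the point needing care; your separability-plus-normality fix (countable ultraweakly dense set in $\lh$, single $\mu$-null exceptional set) is the standard and correct way to handle it, and the separability of $\hi$ assumed throughout the paper makes this available. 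The existence of a measurable field of isometries $w_x:\hi(x)\to\hi$ is likewise routine direct-integral bookkeeping (one can even build it from the measurable orthonormal frames implicit in Theorem~\ref{th1}(ii)).
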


The above result can be reformulated in the form that {\it an observable determines the future if and only if it is of rank 1.} The channel $\mathcal J(\cdot,\Omega)$ associated with a measurement of a rank-1 observable is also seen to be entanglement breaking \cite{HoShWe}.

Let $\Mo\in\O(\Sigma,\hi)$ with vectors $d_k(x)$ of Theorem \ref{th1}, $\Om^1:={\N}\times\Om$, and let $\Sigma^1$ be the product $\sigma$-algebra of $2^{\N}$ and $\Sigma$. Let $\mu^1:\,\Sigma^1\to[0,\infty]$ be the product measure of the counting measure and $\mu$. Define $d(k,x)=d_k(x)$ if $k<m(x)+1$ and $d(k,x)=0$ if $k>m(x)$. Then
\begin{equation}\label{eq:M^1}
\<\fii|\Mo^{\bm1}(X^1)\psi\>=\int_{X^1}\<\fii|d(k,x)\>\<d(k,x)|\psi\>\d\mu^1(k,x),\hspace{0.5cm}\fii,\,\psi\in V_\e,\quad X^1\in\Sigma^1,
\end{equation}
defines a rank-1 POVM $\Mo^{\bm1}:\,\Sigma^1\to\lh$; we say that $\Mo^{\bm1}$ is a {\it maximally refined version of $\Mo$}.

Since $\Mo(X)=\Mo^{\bm1}\big(f^{-1}(X)\big)$ where 
$f:\,\Om^1\to\Om$ is a measurable function defined by $f(k,x)=f(x)$ for all $k\in\N$ and $x\in\Om$, $\Mo$ is a relabeling of $\Mo^{\bm1}$. Note that the value space of $\Mo^{\bm1}$ contains  the multiplicities $(k,x)$, $k<m(x)+1$, of a measurement outcome $x$ of $\Mo$. Moreover, $\Mo$ and $\Mo^{\bm1}$ are jointly measurable and $\Mo^{\bm1}$ can be measured by performing a sequential measurement of $\Mo$ and some discrete `multiplicity' observable \cite{Pell'}. We will see that the maximally refined version of an observable possesses many of the same optimality properties as the original observable meaning that we may freely assume the rank-1 property for these observables.

\section{Post-processing and post-processing maximality}

Let us begin with a definition.

\begin{definition}\label{def:kernels}
Let $(\Omega_1,\Sigma_1)$ and $(\Omega_2,\Sigma_2)$ be measurable spaces. Also assume that $\mu:\Sigma_1\to\R$ is a positive measure. We say that a map $\beta:\Sigma_2\times\Omega_1\to\R$ is a {\it $\mu$-weak Markov kernel}  \cite{JePuVi2008} if
\begin{itemize}
\item[(i)] $\beta(Y,\cdot):\Omega_1\to\R$ is $\mu$-measurable for all $Y\in\Sigma_2$,
\item[(ii)] $\beta(Y,x)\geq0$ for all $Y\in\Sigma_2$ and $\mu$-a.a.\ $x\in\Omega_1$,
\item[(iii)] $\beta(\Omega_2,x)=1$ for $\mu$-a.a.\ $x\in\Omega_1$, and
\item[(iv)] for all pairwise disjoint sequences $Y_1,\,Y_2,\ldots\,\in\Sigma_2$,
$$
\beta\big(\cup_{j=1}^\infty Y_j,x\big)=\sum_{j=1}^\infty \beta(Y_j,x)
$$
for $\mu$-a.a.\ $x\in\Omega_1$.
\end{itemize}
If $\beta(\cdot,x)$ is a probability measure for all $x\in\Omega_1$ and the maps $\beta(Y,\cdot)$ are measurable then $\beta$ is simply called a {\it Markov kernel}. 
\end{definition}

When $\mu_1$ is a probability measure on $(\Omega_1,\Sigma_1)$, $\mu_1\ll\mu$, and $\beta:\Sigma_2\times\Omega_1\to\R$ is a $\mu$-weak Markov kernel, then the set function
$$
\Sigma_1\times\Sigma_2\ni (X,Y)\mapsto B(X,Y):=\int_{X}\beta(Y,x)\,\d\mu_1(x)\in[0,1]
$$
is a {\it probability bimeasure}\footnote{Recall that $B:\,\Sigma_1\times\Sigma_2\to\C$ is a bimeasure if $B(X,\,\cdot\,)$, $X\in\Sigma_1$, and $B(\,\cdot\, ,Y)$, $Y\in\Sigma_2$, are (complex) measures.} with the marginal probability measures $X\mapsto B(X,\Omega_2)=\mu_1(X)$ and $Y\mapsto B(\Omega_1,Y)=:\mu_1^\beta(Y)$. As an immediate consequence of Carath\'eodory's extension theorem, one gets the well-know result stating that if $\beta$ is a Markov kernel then $B$ extends into probability measure $\ov B:\,\Sigma_1\times\Sigma_2\to[0,1]$, i.e.,\ $\ov B(X\times Y)=B(X,Y)$ for all $X\in\Sigma_1$ and $Y\in\Sigma_2$. Note that  $\mu_1^\beta$ can be interpreted as a result of (classical) data processing represented by $\beta$. We call this data processing scene {\it post-processing} since the processing can be carried out after obtaining the data represented by the measure $\mu_1$. This data processing scheme generalizes to the case of POVMs in the following way.

\begin{definition}
Let $\Mo_1:\Sigma_1\to\lh$ be an observable operating in the Hilbert space $\hi$. We assume that there is a (probability)measure $\mu$ on $(\Omega,\Sigma)$ such that $\Mo_1\ll\mu$. We say that an observable $\Mo_2:\Sigma_2\to\lh$ is a {\it post-processing of $\Mo_1$}, if there is a $\mu$-weak Markov kernel $\beta:\Sigma_2\times\Omega_1\to\R$ such that $p_\rho^{\Mo_2}=(p_\rho^{\Mo_1})^\beta$ for all $\rho\in\sh$ or, equivalently,
$$
\Mo_2(Y)=\int_{\Omega_1}\beta(Y,x)\,\d\Mo_1(x)\qquad\mathrm{(weakly)}
$$
for all $Y\in\Sigma_2$. We denote $\Mo_2=\Mo_1^\beta$.
\end{definition}

The above means that by measuring $\Mo_1$, we obtain all the information obtainable by measuring $\Mo_2$; we just have to process the data given by $\Mo_1$ classically with the fixed kernel $\beta$. Thus, $\Mo_1$ can give us at least the same amount of  information on the quantum system as $\Mo_2$ modulo classical data processing. Note that if $\Mo_2$ is a relabeling of $\Mo_1$, i.e.\ $\Mo_2(Y)=\Mo_1(f^{-1}(Y))$, then $\Mo_2=\Mo_1^\beta$ where $\beta(Y,x)=\CHI{f^{-1}(Y)}(x)$ is a Markov kernel.

We may thus set up an information-content `order' among observables \cite{BuKeDPeWe2005,DoGr97,MaMu90} $\Mo_2\leq_{\rm post}\Mo_1$ if there is a $\mu$-weak Markov kernel $\beta:\Sigma_2\times\Omega_1\to\R$ (where $\Mo_1\ll\mu$) such that $\Mo_2=\Mo_1^\beta$. We may also say that $\Mo_1$ and $\Mo_2$ are {\it post-processing equivalent} if there are weak Markov kernels $\beta$ and $\gamma$ such that $\Mo_2=\Mo_1^\beta$ and $\Mo_1=\Mo_2^\gamma$. Recall that the `order' $\leq_{\rm post}$ here may not actually be a partial order (because of the failure of transitivity); for situations where this problem can be overcome and identification of canonical representatives of the resulting equivalence classes, see \cite{Kuramochi2015}. An observable $\Mo$ is {\it post-processing maximal} or {\it post-processing clean} if, for any observable $\Mo'$ such that $\Mo\leq_{\rm post}\Mo'$, one has $\Mo'\leq_{\rm post}\Mo$. The maximal observables have been characterized earlier in the case of discrete outcomes \cite[Theorem 3.4]{DoGr97}. We generalize this characterization for observables with nice outcome spaces. For that, we need the following proposition:
 
\begin{proposition}\label{propo3}
Let $(\Omega_1,\Sigma_1)$ be nice, $(\Omega_2,\Sigma_2)$ countably generated, and $B:\,\Sigma_1\times\Sigma_2\to[0,1]$ a probability bimeasure. Denote $\mu_1=B(\,\cdot\, ,\Omega_2)$.
\begin{itemize}
\item[(i)] There exists a probability measure $\ov B:\,\Sigma_1\otimes\Sigma_2\to[0,1]$
such that $\ov B(X\times Y)=B(X,Y)$ for all $X\in\Sigma_1$ and $Y\in\Sigma_2$.
\item[(ii)] There exists a Markov kernel $\beta:\Sigma_2\times\Omega_1\to[0,1]$ such that 
$B(X,Y)=\int_{X}\beta(Y,x)\,\d\mu_1(x)$ for all $X\in\Sigma_1$ and $Y\in\Sigma_2$.
\end{itemize}
\end{proposition}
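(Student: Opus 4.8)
The plan is to establish (ii) first, by reducing to real-valued outcomes and carrying out there the classical construction of a regular conditional probability, and then to deduce (i) by integrating the resulting kernel. The two hypotheses play complementary roles: the countable generation of $(\Omega_2,\Sigma_2)$ keeps the family of defining relations countable, while the niceness of $(\Omega_1,\Sigma_1)$ guarantees that $\mu_1$ is regular enough (it is a \emph{perfect} measure) for the construction to yield a genuine, not merely $\mu_1$-a.e.\ defined, Markov kernel.

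First I would fix an exactly measurable $f_2:\Omega_2\to\R$ with $\Sigma_2=f_2^{-1}\big(\bo\R\big)$, available since nice and countably generated spaces admit such maps by the proposition of \cite{Preston} quoted above, and transfer the second variable of $B$: the set functions $(C,X)\mapsto\hat B(C,X):=B\big(X,f_2^{-1}(C)\big)$, $C\in\bo\R$, $X\in\Sigma_1$, again form a probability bimeasure, and $\hat B(C,\cdot)\le\hat B(\R,\cdot)=\mu_1$, so each $X\mapsto\hat B(C,X)$ is absolutely continuous with respect to $\mu_1$. Picking $[0,1]$-valued Radon--Nikodym derivatives $F_q$ of $X\mapsto\hat B\big((-\infty,q],X\big)$, $q\in\Q$, and removing a single $\mu_1$-null set $N$ — here one only kills countably many relations, namely monotonicity of $q\mapsto F_q(x)$, right-continuity along rationals, and the limits $F_q(x)\to0$ as $q\to-\infty$ and $F_q(x)\to1$ as $q\to+\infty$, each holding $\mu_1$-a.e.\ by monotone convergence applied to the monotone families of measures $\hat B\big((-\infty,q],\,\cdot\,\big)$ — one is left, for every $x\notin N$, with a bona fide distribution function and hence a Borel probability measure $\la_x$ on $\R$ with $\la_x\big((-\infty,q]\big)=F_q(x)$; set $\la_x:=\la_{x_0}$ (any fixed $x_0\notin N$) for $x\in N$. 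A Dynkin-class argument upgrades the identity from half-lines to all Borel sets: $\int_X\la_x(C)\,\d\mu_1(x)=B\big(X,f_2^{-1}(C)\big)$ for all $C\in\bo\R$ and $X\in\Sigma_1$.

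It remains to push $\la_x$ back along $f_2$ to a measure on $\Sigma_2$. One first defines $\beta(\cdot,x)$ on the countable algebra $f_2^{-1}(\mathcal A_0)$, where $\mathcal A_0$ is the algebra of finite unions of rational half-open intervals, by $\beta\big(f_2^{-1}(C),x\big):=\la_x(C)$; well-definedness on this algebra and $\sigma$-additivity there hold for all $x$ outside one $\mu_1$-null set (using $f_2^{-1}(C)=\emptyset\Rightarrow\la_x(C)=0$ for such $x$, which comes from $\int_{\Omega_1}\la_x(C)\,\d\mu_1(x)=B(\Omega_1,\emptyset)=0$, together with the perfectness of $\mu_1$ to control the countable-intersection pathologies living on $f_2(\Omega_2)$), so Carath\'eodory's theorem extends it to a measure on $\Sigma_2=\sigma\big(f_2^{-1}(\mathcal A_0)\big)$; redefining $\beta(\cdot,x)$ on the exceptional set as a fixed probability measure yields a Markov kernel $\beta:\Sigma_2\times\Omega_1\to[0,1]$ with $B(X,Y)=\int_X\beta(Y,x)\,\d\mu_1(x)$, which is (ii). Part (i) then follows by integration: $Z\mapsto\ov B(Z):=\int_{\Omega_1}\beta(Z_x,x)\,\d\mu_1(x)$, with $Z_x$ the $x$-section (whose measurability, and that of $x\mapsto\beta(Z_x,x)$, is the usual monotone-class fact), is by dominated convergence a probability measure on $\Sigma_1\otimes\Sigma_2$ restricting to $B$ on rectangles.

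The heart of the matter is the last step — producing a Markov kernel in the strong sense of Definition~\ref{def:kernels}, defined and countably additive for \emph{every} $x$, rather than only a $\mu_1$-weak kernel. This is precisely where both hypotheses are unavoidable: countable generation of $\Sigma_2$ ensures that only countably many exceptional null sets arise, and niceness of $\Omega_1$ — equivalently, via the exactly measurable presentation together with the universal measurability of analytic subsets of $\R$, the perfectness of $\mu_1$ — rules out the configurations in which $\beta(\cdot,x)$ would fail to be $\sigma$-additive on a set of positive measure; recall that without perfectness proper regular conditional probabilities need not exist at all. Everything else (Radon--Nikodym derivatives, the distribution-function construction on $\R$, Carath\'eodory extension, and Dynkin/monotone-class arguments) is routine.
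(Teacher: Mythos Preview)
Your strategy---establish (ii) first by an explicit regular-conditional-probability construction and then integrate to get (i)---is the same logical order the paper follows. The paper, however, does not reproduce the construction: it observes that Davies' Lemma~4.2.1 supplies (i) in the standard Borel case, so that Preston's Lemma~12.1 holds for bimeasures, and then simply remarks that the remainder of Preston's proof of Theorem~12.1 goes through verbatim with bimeasures in place of joint measures, yielding (ii); (i) then follows from (ii). Your write-up is thus an attempt to unpack what the paper leaves to the cited reference.

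The distribution-function step on $\R$ and the deduction of (i) from (ii) are fine. The point that deserves scrutiny is the pushback of $\lambda_x$ along $f_2$. Well-definedness on the countable algebra $f_2^{-1}(\mathcal A_0)$ is handled, as you say, by excluding one $\mu_1$-null set. But $\sigma$-additivity on that algebra is a genuinely uncountable family of conditions: for every sequence $(C_n)\subset\mathcal A_0$ with $\bigcap_n f_2^{-1}(C_n)=\emptyset$ you need $\lambda_x(\bigcap_n C_n)=0$, and the exceptional null set a priori depends on the sequence. You dispatch this with the phrase ``perfectness of $\mu_1$ to control the countable-intersection pathologies living on $f_2(\Omega_2)$'', and it is true that niceness of $(\Omega_1,\Sigma_1)$ makes every probability measure on $\Sigma_1$ perfect; but perfectness is a statement about measurable maps \emph{out of} $\Omega_1$, and you have not explained how it constrains the behaviour of the fibre measures $\lambda_x$ on $\R\setminus f_2(\Omega_2)$. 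That bridge---showing that for $\mu_1$-a.e.\ $x$ the measure $\lambda_x$ gives outer measure $1$ to $f_2(\Omega_2)$, or an equivalent formulation---is exactly the substance of Preston's argument, and in your sketch it is asserted rather than argued. As written, the proposal identifies the correct ingredients and the correct difficult step, but compresses that step into a keyword; to make it a proof you would need to spell out precisely how the niceness hypothesis on the \emph{first} factor forces the required concentration property of $\lambda_x$ on the \emph{second}.
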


\begin{proof}
First we note that (i) holds in the case where $(\Omega_1,\Sigma_1)$ and $(\Omega_2,\Sigma_2)$ are standard Borel spaces \cite[Lemma 4.2.1]{Davies} showing that Lemma 12.1 of \cite{Preston} holds even in the case where probability measures on $\Sigma_1\otimes\Sigma_2$ (i.e.\ joint probability measures) are replaced with probability bimeasures on $\Sigma_1\times\Sigma_2$. Manifestly the rest of the proof of Theorem 12.1 of \cite{Preston} can be carried out by replacing joint probability measures with probability bimeasures everywhere. This proves item (ii). Item (i) follows from (ii) by recalling the well-known fact that any Markov kernel defines a joint probability measure.
\end{proof}

\begin{remark}\label{remu}
Let $(\Omega_1,\Sigma_1)$ and $(\Omega_2,\Sigma_2)$ be as in Proposition \ref{propo3}, $\Mo_i\in\O(\Sigma_i,\hi)$, $i=1,2$, $\Mo_1\sim\mu_1$, and $\Mo_2=\Mo_1^\beta$ where $\beta$ is a $\mu_1$-weak Markov kernel, i.e.\ $\Mo_2$ is a post-processing of $\Mo_1$. Since $\beta$ defines a probability bimeasure, we immediately get from Proposition \ref{propo3} the following results:
\begin{itemize}
\item There is a Markov kernel $\beta'$ such that 
$\Mo_2=\Mo_1^{\beta'}$ and $\beta(Y,x)=\beta'(Y,x)$ for all $Y\in\Sigma_2$ and $\mu_1$-a.a.\ $x\in\Omega_1$.
\item The POVMs $\Mo_1$ and $\Mo_2$ are jointly measurable, a joint observable $\No\in\O(\Sigma_1\otimes\Sigma_2,\hi)$ being defined through $\No(X\times Y):=\int_X\beta(Y,x)\d\Mo_1(x).$
\end{itemize}
\end{remark}

\subsection{Joint measurements of rank-1 observables}\label{section5.1}

For the results of the rest of this section, it is useful, as an interlude, to now turn our attention to joint-measurability issues of rank-1 observables. Let $\Mo_i:\,\Sigma_i\to\lh$, $i=1,2$, be jointly measurable observables where $\Mo_1$ is of rank 1. Let $\hi$ be separable and $(\hd,\Po_\oplus,J_\oplus)$ be the minimal (diagonal) Na\u{\i}mark dilation of $\Mo_1$ introduced in Theorem \ref{th1} with the vector field $x\mapsto d_1(x)=:d(x)$ so that, for all $X\in\Sigma_1$,
$$
\<\fii|\Mo_1(X)\psi\>=\int_X\<\fii|d(x)\>\<d(x)|\psi\>\,\d\mu_1(x)=
\int_X \ov{(J_\oplus\fii)(x)}(J_\oplus\psi)(x)\,\d\mu_1(x)
$$
where $\fii,\,\psi\in V_{\bf h}$, since $\hi(x)\equiv\C$ implies $\hd=L^2(\mu_1)$ and $\Po_\oplus=\Po_{\mu_1}$. According to Theorem \ref{theor:JMF}, there is a unique POVM $\Fo:\Sigma_2\to\cal L\big(L^2(\mu_1)\big)$ such that $\Po_{\mu_1}(X)\Fo(Y)=\Fo(Y)\Po_{\mu_1}(X)$ for all $X\in\Sigma_1$ and $Y\in\Sigma_2$ and $\Mo_2(Y)=J_\oplus^*\Fo(Y)J_\oplus$ for all $Y\in\Sigma_2$. Hence, for any $Y\in\Sigma_2$, there is a measurable function $\beta(Y,\,\cdot\,):\Omega_1\to\R$ such that $\big(\Fo(Y)\eta\big)(x)=\beta(Y,x)\eta(x)$ for all $\eta\in L^2(\mu_1)$ and $\mu_1$-a.a.\ $x\in\Omega_1$. It is simple to check that the map $\beta:\Sigma_2\times\Omega_1\to\R$ satisfies the conditions (i)--(iv) of Definition \ref{def:kernels} implying that $\beta$ is a $\mu_1$-weak Markov kernel and $\Mo_2=\Mo_1^\beta$. Thus, we have \cite{Pell2}:

\begin{theorem}\label{theor:rank1JM}
Let $\Mo:\Sigma\to\lh$ be a rank-1 observable of a separable $\hi$. Any observable $\Mo':\Sigma'\to\lh$ jointly measurable with $\Mo$ is a post-processing of $\Mo$.
\end{theorem}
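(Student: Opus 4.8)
The plan is to route everything through the diagonal Na\u{\i}mark dilation of $\Mo$ furnished by Theorem \ref{th1} together with the structure theorem for joint observables, Theorem \ref{theor:JMF}. First I would use the separability of $\hi$ to fix a probability measure $\mu$ with $\Mo\sim\mu$ and take the associated minimal dilation $(\hd,\Po_\oplus,J_\oplus)$ of Theorem \ref{th1}; by definition of rank 1 one has $\dim\hi(x)=1$ for $\mu$-a.a.\ $x$, so this dilation collapses to $\hd=L^2(\mu)$ with $\Po_\oplus=\Po_\mu$ the canonical (multiplication) spectral measure, and Theorem \ref{th1}(ii) supplies a vector field $x\mapsto d(x)$ for which $\<\fii|\Mo(X)\psi\>=\int_X\ov{(J_\oplus\fii)(x)}(J_\oplus\psi)(x)\,\d\mu(x)$ for all $\fii,\psi\in\hi$.

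Next I would fix a joint observable $\No:\Sigma\otimes\Sigma'\to\lh$ for $\Mo$ and $\Mo'$ and feed it into Theorem \ref{theor:JMF}, obtaining a POVM $\Fo:\Sigma'\to\cal L\big(L^2(\mu)\big)$ with $\Po_\mu(X)\Fo(Y)=\Fo(Y)\Po_\mu(X)$ for all $X,Y$ and $\No(X\times Y)=J_\oplus^*\Po_\mu(X)\Fo(Y)J_\oplus$; taking $X=\Omega$ this gives $\Mo'(Y)=J_\oplus^*\Fo(Y)J_\oplus$. The crucial observation is that each $\Fo(Y)$ commutes with every spectral projection of $\Po_\mu$, hence with the whole multiplication algebra $L^\infty(\mu)$, which is maximal abelian in $\cal L\big(L^2(\mu)\big)$ since $\mu$ is $\sigma$-finite; therefore $\Fo(Y)$ is itself a multiplication operator, $\big(\Fo(Y)\eta\big)(x)=\beta(Y,x)\eta(x)$ for $\mu$-a.a.\ $x$, with $\beta(Y,\cdot)$ a bounded $\mu$-measurable function.

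It then remains to check that $\beta:\Sigma'\times\Omega\to\R$ is a $\mu$-weak Markov kernel in the sense of Definition \ref{def:kernels}, which is routine: measurability of $\beta(Y,\cdot)$ is built in; $0\le\Fo(Y)\le I$ gives $0\le\beta(Y,x)\le1$ $\mu$-a.e.; $\Fo(\Omega')=I$ gives $\beta(\Omega',x)=1$ $\mu$-a.e.; and the weak $\sigma$-additivity of the POVM $\Fo$ yields, for each fixed pairwise disjoint sequence $(Y_j)\subseteq\Sigma'$, the identity $\beta(\cup_jY_j,x)=\sum_j\beta(Y_j,x)$ off a $\mu$-null set. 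Substituting the multiplicative form of $\Fo$ into $\Mo'(Y)=J_\oplus^*\Fo(Y)J_\oplus$ and using the rank-1 formula above, one gets $\<\fii|\Mo'(Y)\psi\>=\int_\Omega\beta(Y,x)\,\d\<\fii|\Mo(x)\psi\>$ for all $\fii,\psi\in\hi$, i.e.\ $\Mo'=\Mo^\beta$, which is precisely the claim that $\Mo'$ is a post-processing of $\Mo$.

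The one step carrying genuine content is ``$\Fo(Y)$ commutes with all $\Po_\mu(X)$ $\Rightarrow$ $\Fo(Y)$ is multiplication by a function'': this is exactly the maximal abelianness of $L^\infty(\mu)$ on $L^2(\mu)$ for $\sigma$-finite $\mu$, and it is what forces the classical (post-processing) character of the margin $\Mo'$ rather than a general quantum relation. Everything else is either quoted directly (the diagonal dilation of $\Mo$ from Theorem \ref{th1}, the commuting POVM $\Fo$ from Theorem \ref{theor:JMF}) or an elementary measure-theoretic verification; the only bookkeeping subtlety is that all the ``$\mu$-a.a.\ $x$'' clauses remain compatible because Definition \ref{def:kernels}(iv) asks for the additivity identity only per sequence, so no uniform control over uncountable families of null sets is needed.
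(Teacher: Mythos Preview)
Your proposal is correct and follows essentially the same route as the paper: take the diagonal Na\u{\i}mark dilation of Theorem \ref{th1}, which for a rank-1 observable reduces to $L^2(\mu)$ with the canonical spectral measure; apply Theorem \ref{theor:JMF} to obtain the commuting POVM $\Fo$; use that the commutant of $\Po_\mu$ consists of multiplication operators to extract $\beta(Y,\cdot)$; and verify the $\mu$-weak Markov kernel axioms. The only difference is cosmetic---you spell out the maximal-abelian justification for why $\Fo(Y)$ must be a multiplication operator, whereas the paper simply asserts it.
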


\subsection{Post-processing clean observables}

The general form of post-processing clean observables is claimed to have been solved in  \cite{Beukema2006}. There are, however, some problems in the definition of post-processing the paper uses: Despite the author's definition of post-processing involves, according to the terminology used here, weak Markov kernels, a kernel $\beta$ is treated assuming that $\beta(\cdot,x)$ is a measure for a.a.\ $x$. Moreover, we find the proofs of the main theorems dubious. That is why we provide a new proof. We end up with the same characterization as in \cite{Beukema2006} though. The next theorem is an essential part of the characterization of post-processing clean observables given in Corollary \ref{cor:PostPr<->Rank1}.

\begin{theorem}
Let $(\Omega_i,\Sigma_i)$, $i=1,2$, be measurable spaces, $\hi$ a separable Hilbert space, $\Mo_1\in\O(\Sigma_1,\hi)$, and $\beta:\,\Sigma_2\times\Omega_1\to[0,1]$ a Markov kernel. If $\Mo_1^\beta$ is of rank 1 then $\Mo_1$ is of rank 1.
\end{theorem}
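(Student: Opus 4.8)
The plan is to translate everything into a statement about Radon--Nikod\'ym densities and then to invoke the rigidity (equality) case of the conditional Cauchy--Schwarz inequality on a suitable coupled probability space. Write $\Mo_2:=\Mo_1^\beta$, pick a faithful state $\rho\in\sh$ (which exists since $\hi$ is separable, and then $\mu_i:=p_\rho^{\Mo_i}$ satisfies $\mu_i\sim\Mo_i$), and let $\e=\{h_n\}$ be an ON basis of $\hi$. By Theorem~\ref{th1}(ii) the density entries $A_{kl}(x):=\d\<h_k|\Mo_1(\cdot)h_l\>/\d\mu_1(x)=\sum_{s=1}^{m(x)}\<h_k|d_s(x)\>\<d_s(x)|h_l\>$ form, for $\mu_1$-a.a.\ $x$, a positive semidefinite Hermitian matrix $\bigl(A_{kl}(x)\bigr)_{k,l}$ of rank $m(x)=\dim\hi(x)$; set $C_{kl}(y):=\d\<h_k|\Mo_2(\cdot)h_l\>/\d\mu_2(y)$ likewise. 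By the definition of rank~$1$ (read off the dilation of Theorem~\ref{th1}), $\Mo_2$ being of rank~$1$ means $\bigl(C_{kl}(y)\bigr)$ has rank $\le1$ for $\mu_2$-a.a.\ $y$, and what must be shown is that $\bigl(A_{kl}(x)\bigr)$ has rank $\le1$ for $\mu_1$-a.a.\ $x$. A positive semidefinite form has rank $\le1$ precisely when all its $2\times2$ principal minors vanish, so if this failed there would be a pair $i\ne j$ and a set $E\in\Sigma_1$ with $\mu_1(E)>0$ on which the measurable function $G(x):=A_{ii}(x)A_{jj}(x)-|A_{ij}(x)|^2$ is strictly positive. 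Putting $F(x):=\bigl(\<d_s(x)|h_i\>\bigr)_{s\ge1}$ and $H(x):=\bigl(\<d_s(x)|h_j\>\bigr)_{s\ge1}$ (with $\<d_s(x)|h_k\>:=0$ for $s>m(x)$), which lie in $\ell^2$ for $\mu_1$-a.a.\ $x$ because $\int\|F\|^2\,\d\mu_1=\<h_i|\Mo_1(\Omega_1)h_i\>=1$ and likewise for $H$, one has $A_{ii}=\|F\|^2$, $A_{jj}=\|H\|^2$, $A_{ij}=\<F|H\>$, so $G(x)>0$ on $E$ says exactly that $F(x),H(x)$ are linearly independent there.

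The decisive construction is a coupling measure. Because $\beta$ is a genuine (not merely weak) Markov kernel, $\ov B(Z):=\int_{\Omega_1}\beta\bigl(\{y:(x,y)\in Z\},x\bigr)\,\d\mu_1(x)$ is a probability measure on $\Sigma_1\otimes\Sigma_2$ with first marginal $\mu_1$ and second marginal $\mu_2$ (the latter since $\mu_2=p_\rho^{\Mo_2}$ and $\Mo_2=\Mo_1^\beta$). Let $\pi_1,\pi_2$ be the coordinate projections and $\mathcal G:=\pi_2^{-1}(\Sigma_2)$. Unwinding $\Mo_2=\Mo_1^\beta$ between $h_k$ and $h_l$, together with the density identities and the definition of $\ov B$, yields for every $Y\in\Sigma_2$
\[
\int_{\pi_2^{-1}(Y)}A_{kl}\circ\pi_1\,\d\ov B=\int_{\Omega_1}\beta(Y,x)A_{kl}(x)\,\d\mu_1(x)=\<h_k|\Mo_2(Y)h_l\>=\int_{\pi_2^{-1}(Y)}C_{kl}\circ\pi_2\,\d\ov B ,
\]
whence $C_{kl}\circ\pi_2=\mathbb E_{\ov B}\bigl[A_{kl}\circ\pi_1\mid\mathcal G\bigr]$: the entries of the $\Mo_2$-density are honest conditional expectations (in $L^1(\ov B)$) of the entries of the $\Mo_1$-density over $\ov B$. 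In particular $\mathbb E_{\ov B}[\|F\|^2\circ\pi_1\mid\mathcal G]=C_{ii}\circ\pi_2$, $\mathbb E_{\ov B}[\|H\|^2\circ\pi_1\mid\mathcal G]=C_{jj}\circ\pi_2$, and $\mathbb E_{\ov B}[\<F|H\>\circ\pi_1\mid\mathcal G]=C_{ij}\circ\pi_2$.

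Now the contradiction. Rank $\le1$ of $\bigl(C_{kl}(y)\bigr)$ forces $C_{ii}C_{jj}=|C_{ij}|^2$ for $\mu_2$-a.a.\ $y$, i.e.\ $\bigl|\mathbb E_{\ov B}[\<F|H\>\circ\pi_1\mid\mathcal G]\bigr|^2=\mathbb E_{\ov B}[\|F\|^2\circ\pi_1\mid\mathcal G]\,\mathbb E_{\ov B}[\|H\|^2\circ\pi_1\mid\mathcal G]$ $\ov B$-a.e. On the other hand, the pointwise inequality $|\<F|H\>|\le\|F\|\,\|H\|$, the triangle inequality and monotonicity for conditional expectation, and conditional Cauchy--Schwarz give
\[
\bigl|\mathbb E_{\ov B}[\<F|H\>\circ\pi_1\mid\mathcal G]\bigr|\le\mathbb E_{\ov B}\bigl[|\<F|H\>|\circ\pi_1\mid\mathcal G\bigr]\le\mathbb E_{\ov B}\bigl[\|F\|\,\|H\|\circ\pi_1\mid\mathcal G\bigr]\le\bigl(\mathbb E_{\ov B}[\|F\|^2\circ\pi_1\mid\mathcal G]\bigr)^{1/2}\bigl(\mathbb E_{\ov B}[\|H\|^2\circ\pi_1\mid\mathcal G]\bigr)^{1/2}.
\]
Comparing with the previous line, the whole chain is an equality $\ov B$-a.e.; in particular $\mathbb E_{\ov B}\bigl[(\|F\|\,\|H\|-|\<F|H\>|)\circ\pi_1\mid\mathcal G\bigr]=0$, and since $\|F\|\,\|H\|-|\<F|H\>|\ge0$ pointwise this forces $\|F(x)\|\,\|H(x)\|=|\<F(x)|H(x)\>|$, i.e.\ $G(x)=0$, for $\ov B$-a.a.\ $(x,y)$. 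Pushing forward along $\pi_1$ (whose image law is $\mu_1$) gives $\int_{\Omega_1}G\,\d\mu_1=0$, which contradicts $G>0$ on $E$ together with $\mu_1(E)>0$. Hence $\bigl(A_{kl}(x)\bigr)$ has rank $\le1$ for $\mu_1$-a.a.\ $x$, i.e.\ $\Mo_1$ is of rank~$1$.

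The step I expect to be the crux is the identification $C_{kl}\circ\pi_2=\mathbb E_{\ov B}[A_{kl}\circ\pi_1\mid\mathcal G]$: once one sees that $\Mo_2=\Mo_1^\beta$ lets the $\Mo_2$-density be realized as a conditional expectation of the $\Mo_1$-density over the coupling $\ov B$, the problem becomes a routine equality-case analysis for conditional Cauchy--Schwarz. It is exactly here that the hypothesis that $\beta$ is a genuine Markov kernel enters, for this is what makes $\ov B$ a bona fide measure on the product $\sigma$-algebra (a weak kernel would only furnish a bimeasure, which is why the analogous statement for weak kernels has to be routed through Proposition~\ref{propo3}). The remaining ingredients---measurability of the densities and of $G$, the elementary linear algebra about rank-one positive semidefinite forms, existence of a faithful state, and that rank $\le1$ of $\bigl(C_{kl}(y)\bigr)$ gives $C_{ii}C_{jj}=|C_{ij}|^2$---are straightforward.
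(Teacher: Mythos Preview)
Your proof is correct and, while it shares the paper's decisive move of building the coupling measure $\ov B$ on $\Sigma_1\otimes\Sigma_2$ from the Markov kernel $\beta$, it diverges from the paper in how the rank information is extracted from that coupling. The paper sandwiches $\Mo_i$ by a Hilbert--Schmidt operator to obtain trace-class densities $\mo_i$, then claims $\ov B\ll\mu_1\times\mu_2$, writes $\mo_2(y)=\int\rho(x,y)\mo_1(x)\,\d\mu_1(x)$ for a Radon--Nikod\'ym density $\rho$, and compresses by the range projection $P(y)$ of $\mo_2(y)$ to force $\mo_1(x)$ to be rank~$\le1$. You instead work with scalar matrix entries $A_{kl}$, recognize $C_{kl}\circ\pi_2$ as the conditional expectation $\mathbb E_{\ov B}[A_{kl}\circ\pi_1\mid\pi_2^{-1}\Sigma_2]$, and run the equality case of conditional Cauchy--Schwarz on the $2\times2$ minors.

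Your route is more elementary (no operator-valued Radon--Nikod\'ym theorem, no support projections) and, notably, it sidesteps a delicate point in the paper's argument: the assertion $\ov B\ll\mu_1\times\mu_2$ is not valid for an arbitrary Markov kernel (take $\Omega_1=\Omega_2=[0,1]$, $\mu_1$ Lebesgue, $\beta(Y,x)=\chi_Y(x)$; then $\ov B$ lives on the diagonal). By working directly with $\ov B$ and conditional expectations you never need such a density, so your argument is in this respect more robust. What the paper's projection-based approach buys, when it applies, is that it handles all directions simultaneously rather than one $2\times2$ minor at a time; your reduction to a single pair $(i,j)$ via countable subadditivity is the small price paid for the simpler machinery.
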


\begin{proof}
Assume that $\mu_1$ is a probability measure on $(\Omega_1,\Sigma_1)$ such that $\Mo_1\ll\mu_1$.
Clearly, $\Mo_2:=\Mo_1^\beta\ll\mu_2:=\mu_1^\beta$ (i.e.\ $\mu_2(Y)=\int_{\Omega_1}\beta(Y,x)\,\d\mu_1(x)$). For any Hilbert-Schmidt operator $R\in\lh$, $Z\in\Sigma_i$, $i=1,\,2$, by the Radon-Nikod\'ym property of the trace class,
$$
R^*\Mo_i(Z)R=\int_X\mo_i(z)\,\d\mu_i(z),
$$
where $\mo_i:\Omega_i\to\lh$ is a weakly $\mu_i$-measurable positive trace-class-valued function (which depends on $R$), see e.g.\ \cite{HyPeYl}. Requiring $\Mo_2$ to be rank-1 is equivalent with $\mo_2(y)$ being at most rank-1 almost everywhere.
Fix now a Hilbert-Schmidt operator $R$ and let $\mo_i$ be the corresponding densities of $R^*\Mo_i(\,\cdot\,)R$ with respect to $\mu_i$. Now
\begin{equation}\label{eq:RM2(Y)R}
R^*\Mo_2(Y)R=\int_{\Omega_1}\beta(Y,x)\mo_1(x)\,\d\mu_1(x)
\end{equation}
for all $Y\in\Sigma_2$. Since $\beta$ is a Markov kernel, the probability bimeasure $(X,Y)\mapsto\int_X\beta(Y,x)\,\d\mu_1(x)$ extends into a probability measure $\mu:\Sigma_1\otimes\Sigma_2\to[0,1]$ whose margins are $\mu_1$ and $\mu_2$. Because $\mu\ll\mu_1\times\mu_2$, there is a (nonnegative) density function $\rho\in L^1(\mu_1\times\mu_2)$ such that
$$
\mu(Z)=\int_Z\rho\,\d(\mu_1\times\mu_2),\qquad Z\in\Sigma_1\otimes\Sigma_2
$$
and, hence, $\int_Y\rho(x,y)\d\mu_2(y)=\beta(Y,x)$ for all $Y\in\Sigma_2$ and $\mu_1$-a.a.\ $x\in\Omega_1$. From Equation \eqref{eq:RM2(Y)R}, it now follows
\begin{equation}\label{m2(y)}
\mo_2(y)=\int_{\Omega_1}\rho(x,y)\mo_1(x)\,\d\mu_1(x)
\end{equation}
for $\mu_2$-a.a.\ $y\in\Omega_2$. Let now, for every $y\in\Omega_2$, $P(y)$ be the at most one-dimensional projection onto the range of $\mo_2(y)$. This is a weakly measurable map. Multiplying \eqref{m2(y)} from both sides with $P(y)^\perp$, one obtains for $\mu_2$-a.a.\ $y\in\Omega_2$
$$
0=\int_{\Omega_1}\rho(x,y)P(y)^\perp\mo_1(x)P(y)^\perp\,\d\mu_1(x).
$$
Thus also
$$
\int_{\Omega_1\times\Omega_2}P(y)^\perp\mo_1(x)P(y)^\perp\,\d\mu(x,y)=\int_{\Omega_1\times\Omega_2}\rho(x,y)P(y)^\perp\mo_1(x)P(y)^\perp\,\d(\mu_1\times\mu_2)(x,y)=0,
$$
implying that $\rho(x,y)P(y)^\perp\mo_1(x)P(y)^\perp=0$ for $(\mu_1\times\mu_2)$-a.a.\ $(x,y)\in\Omega_1\times\Omega_2$.

Denote by $N$ the set of those $(x,y)\in\Omega_1\times\Omega_2$ such that $\rho(x,y)P(y)^\perp\mo_1(x)P(y)^\perp\neq0$. Applying the Fubini theorem for the characteristic function $\CHI N$, one finds that for $\mu_1$-a.a.\ $x\in\Omega_1$, $\rho(x,y)P(y)^\perp\mo_1(x)P(y)^\perp=0$ for $\mu_2$-a.a.\ $y\in\Omega_2$. For $\mu_1$-a.a.\ $x\in\Omega_1$, there is $y\in\Omega_2$ such that $\rho(x,y)>0$. Indeed, if $E\in\Sigma_1$ is such that $\rho(x,y)=0$ for all $x\in E$ and $y\in\Omega_2$, it follows that $0=\int_{E\times\Omega_2}\rho\,\d(\mu_1\times\mu_2)=\mu(E\times\Omega_2)=\mu_1(E)$. Hence, $\mu_1$-a.a.\ $x\in\Omega_1$, there is $y\in\Omega_2$ such that $P(y)^\perp\mo_1(x)P(y)^\perp=0$ implying $\mo_1(x)=P(y)\mo_1(x)P(y)$, i.e.,\ $\mo_1(x)$ is at most rank-1 and, since this holds for any Hilbert-Schmidt operator $R$, we have that $\Mo_1$ is rank-1.
\end{proof}

From Remark \ref{remu} and the theorem above we get:

\begin{corollary}\label{theor:maxtorank1}
Suppose that $(\Omega_1,\Sigma_1)$ (resp.\ $(\Omega_2,\Sigma_2)$) is a nice (resp.\ countably generated) measurable space and $\hi$ is a separable Hilbert space. Let $\Mo_i:\Sigma_i\to\lh$, $i=1,\,2$, be observables such that $\Mo_2$ is of rank 1. If $\Mo_2$ is a post-processing of $\Mo_1$ then $\Mo_1$ is of rank 1 as well.
\end{corollary}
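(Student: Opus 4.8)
The plan is to deduce the corollary directly from the preceding theorem, the only subtlety being that the definition of post-processing supplies merely a $\mu_1$-weak Markov kernel, whereas the preceding theorem is stated for a genuine Markov kernel $\beta:\Sigma_2\times\Omega_1\to[0,1]$; the niceness hypothesis on $(\Omega_1,\Sigma_1)$ together with the countable generation of $(\Omega_2,\Sigma_2)$ is precisely what lets us bridge this gap, via Remark \ref{remu} (which in turn rests on Proposition \ref{propo3}).

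First I would fix a probability measure $\mu_1$ on $(\Omega_1,\Sigma_1)$ with $\Mo_1\sim\mu_1$ (possible since $\hi$ is separable, as noted after Theorem \ref{th1}), and rewrite the hypothesis that $\Mo_2$ is a post-processing of $\Mo_1$ as $\Mo_2=\Mo_1^\beta$ for some $\mu_1$-weak Markov kernel $\beta:\Sigma_2\times\Omega_1\to\R$. Next, since $(\Omega_1,\Sigma_1)$ is nice and $(\Omega_2,\Sigma_2)$ is countably generated, the first bullet of Remark \ref{remu} produces a genuine Markov kernel $\beta':\Sigma_2\times\Omega_1\to[0,1]$ with $\beta'(Y,x)=\beta(Y,x)$ for all $Y\in\Sigma_2$ and $\mu_1$-a.a.\ $x\in\Omega_1$. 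Because $\Mo_1\ll\mu_1$, the weak integral $\int_{\Omega_1}\beta(Y,x)\,\d\Mo_1(x)$ is unchanged when $\beta(Y,\cdot)$ is modified on a $\mu_1$-null set, so $\Mo_2=\Mo_1^{\beta'}$ as well.

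Finally I would invoke the preceding theorem with the observable $\Mo_1$ and the (genuine) Markov kernel $\beta'$: the assumption that $\Mo_1^{\beta'}=\Mo_2$ is of rank $1$ forces $\Mo_1$ to be of rank $1$, which is exactly the assertion of the corollary. I do not expect any genuine obstacle in this last step; all the real work is already contained in the preceding theorem and in Proposition \ref{propo3} (the upgrade of a probability bimeasure on a nice-times-countably-generated space to a Markov kernel), the latter being where the standard-Borel-type structure of $(\Omega_1,\Sigma_1)$ is essential. The one point worth stating carefully is that replacing the weak kernel $\beta$ by $\beta'$ does not alter $\Mo_2$, which is immediate from $\Mo_1\ll\mu_1$.
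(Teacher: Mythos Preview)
Your proposal is correct and follows exactly the same route as the paper: the paper simply states that the corollary follows ``from Remark \ref{remu} and the theorem above,'' i.e., use the first bullet of Remark \ref{remu} (which relies on Proposition \ref{propo3} and the niceness/countable-generation hypotheses) to replace the $\mu_1$-weak Markov kernel by a genuine Markov kernel $\beta'$ with $\Mo_2=\Mo_1^{\beta'}$, and then apply the preceding theorem. Your write-up is in fact more detailed than the paper's one-line justification; the only minor quibble is that the existence of a probability measure $\mu_1\sim\Mo_1$ is noted in the definitions section rather than after Theorem \ref{th1}.
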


The next corollary gives an exhaustive characterization of post-processing clean observables with a nice value space. Especially, we find that such an observable is post-processing maximal if and only if it determines the future of the system under study.

\begin{corollary}\label{cor:PostPr<->Rank1}
Let $(\Omega,\Sigma)$ be a measurable space, $\hi$ a separable Hilbert space, and $\Mo\in\O(\Sigma,\hi)$. If $\Mo$ is of rank-1 then it is post-processing clean. The converse holds when $(\Omega,\Sigma)$ is nice.
\end{corollary}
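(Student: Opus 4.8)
The plan is to prove the two implications separately. The forward implication, rank-1 $\Rightarrow$ post-processing clean, reduces to Theorem \ref{theor:rank1JM} once we know that a post-processing relation produces a joint observable; the converse, under the niceness assumption, reduces to Corollary \ref{theor:maxtorank1} applied to the maximally refined version $\Mo^{\bm1}$.

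For the forward direction, suppose $\Mo$ is of rank 1 and let $\Mo':\Sigma'\to\lh$ be any observable with $\Mo\leq_{\rm post}\Mo'$, say $\Mo(X)=\int_{\Omega'}\beta(X,x')\,\d\Mo'(x')$ for a $\mu'$-weak Markov kernel $\beta:\Sigma\times\Omega'\to\R$ with $\Mo'\ll\mu'$. I would first check that $(Y,X)\mapsto\int_Y\beta(X,x')\,\d\Mo'(x')$ is a positive operator bimeasure on $\Sigma'\times\Sigma$: countable additivity in $Y$ is the usual indefinite-integral argument, in $X$ it follows from the $\mu'$-a.e.\ countable additivity of $\beta(\cdot,x')$ together with monotone convergence tested against the probability measures $p_\rho^{\Mo'}\ll\mu'$, $\rho\in\sh$, and positivity together with the identification of the two margins as $\Mo'$ and $\Mo$ is immediate from the weak-Markov-kernel axioms. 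Invoking the bimeasure extension theorem \cite{LaYl,ylinen96}, which holds for arbitrary measurable spaces, this bimeasure extends to a POVM $\No:\Sigma'\otimes\Sigma\to\lh$ with margins $\Mo'$ and $\Mo$, so $\Mo$ and $\Mo'$ are jointly measurable. Since $\Mo$ is rank-1, Theorem \ref{theor:rank1JM} then gives $\Mo'\leq_{\rm post}\Mo$, and as $\Mo'$ was arbitrary, $\Mo$ is post-processing clean.

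For the converse, assume $(\Omega,\Sigma)$ is nice and $\Mo$ is post-processing clean, and work with the rank-1 observable $\Mo^{\bm1}:\Sigma^1\to\lh$ of \eqref{eq:M^1}, whose outcome space is $(\Omega^1,\Sigma^1)=(\N\times\Omega,\,2^\N\otimes\Sigma)$. Since $\Mo$ is a relabeling of $\Mo^{\bm1}$, it is a post-processing of it through a genuine Markov kernel, so $\Mo\leq_{\rm post}\Mo^{\bm1}$; post-processing cleanness then forces $\Mo^{\bm1}\leq_{\rm post}\Mo$, i.e.\ the rank-1 observable $\Mo^{\bm1}$ is a post-processing of $\Mo$. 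As $2^\N$ is countably generated by singletons and $\Sigma$ is countably generated by niceness, the product $\Sigma^1$ is countably generated, so Corollary \ref{theor:maxtorank1} applies with $\Mo_1=\Mo$ (nice) and $\Mo_2=\Mo^{\bm1}$ (countably generated, rank 1) and yields that $\Mo$ is of rank 1. The substantive work is carried by Corollary \ref{theor:maxtorank1} and the theorem preceding it; within the present corollary the only delicate point is the forward direction's handling of an arbitrary, possibly non-nice outcome space for $\Mo'$, where Remark \ref{remu} is unavailable and one must route through the general positive-operator-bimeasure extension theorem to build the joint observable.
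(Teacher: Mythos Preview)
Your proof is correct and follows essentially the same route as the paper's. In the forward direction you are slightly more explicit than the paper: you build the positive operator bimeasure from the post-processing kernel, invoke the general extension theorem \cite{LaYl,ylinen96} to obtain a genuine joint observable, and then apply Theorem~\ref{theor:rank1JM}, whereas the paper writes the bimeasure directly in the factored form $J_\oplus^*\Po_\mu(X)\Fo(Y)J_\oplus$ and refers to Section~\ref{section5.1} for the identification of $\Fo$ as a multiplication operator---but that is precisely the content of Theorem~\ref{theor:rank1JM}, so the arguments coincide. The converse direction via $\Mo^{\bm1}$ and Corollary~\ref{theor:maxtorank1} is identical to the paper's, including the observation that $\Sigma^1=2^{\N}\otimes\Sigma$ is countably generated.
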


\begin{proof}
Suppose first that $\Mo$ is rank-1 and $\mu\sim\Mo$ is a probability measure. Hence, according to Theorem \ref{th1}, $\Mo$ has a minimal Na\u{\i}mark dilation $(L^2(\mu),\Po_\mu,J_\oplus)$. If $\Mo$ is a post-processing of an $\tilde\Mo\in\O(\tilde\Sigma,\hi)$ on some measurable space $(\tilde\Omega,\tilde\Sigma)$, i.e.\ $\Mo=\tilde\Mo^{\tilde\beta}$ where $\tilde\beta$ is a $\tilde\mu$-weak Markov kernel and $\tilde\mu\sim\tilde\Mo$, one can define a positive operator bimeasure 
$$
(X,Y)\mapsto
\int_{Y}\tilde\beta(X,y)\d\tilde\Mo(y)=J_\oplus^*\Po_\mu(X)\Fo(Y)J_\oplus
$$
where now $\Fo$ is of the form $\big(\Fo(Y)\eta\big)(x)=\beta(Y,x)\eta(x)$ for all $Y\in\tilde\Sigma$, $\eta\in L^2(\mu)$ and $\mu$-a.a.\ $x\in\Omega$, and thus $\tilde\Mo=\Mo^{\beta}$, see Section \ref{section5.1} for details.

Assume now that $\Mo$ is post-processing clean. Let $\Mo^{\bm1}:\,\Sigma^1\to\lh$ be the rank-1 refinement of $\Mo$ defined in \eqref{eq:M^1} from which $\Mo$ can be post-processed. Since $\Mo$ is clean, $\Mo^{\bm1}$ is also a post-processing of $\Mo$. If $(\Omega,\Sigma)$ is nice then $(\Omega^1,\Sigma^1)$ is nice (thus countably generated) and Corollary \ref{theor:maxtorank1} implies that $\Mo$ is rank-1 as well (i.e.,\ $\Mo$ and $\Mo^{\bm1}$ coincide).
\end{proof}

\section{Observables determining the past}\label{sec:ObsDetPast}

In this section, we concentrate on observables that define the past of the system under study, i.e.,\ those observables whose measurement outcome statistics completely determine the state of the system prior to the measurement.

Let $\hi$ be a Hilbert space and $(\Omega,\Sigma)$ a measurable space. Let $\Mo\in\O(\Sigma,\hi)$ and recall our earlier definition $p_\rho^\Mo=\tr{\rho\Mo(\,\cdot\,)}$ for all $\rho\in\sh$. Note that the map $\rho\mapsto p_\rho^\Mo$ is an affine map which is continuous with respect to the trace norm on $\sh$ and the total variation norm of probability measures. If this map is an injection, the natural conclusion is that the observable $\Mo$ can separate all states; with different states of the system, the outcome statistics will always differ. How one can actually determine the state of the system prior to the measurement is not discussed here; the reader is redirected to \cite{KiPeSchu2010} for this issue.

This prompts the following definition: an observable $\Mo:\Sigma\to\lh$ is {\it informationally complete} if for $\rho,\,\sigma\in\sh$, $\rho\neq\sigma$ implies $p_\rho^\Mo\neq p_\sigma^\Mo$. This injectivity extends to the whole of $\th$, and thus informational completeness of $\Mo$ is equivalent with the following: for any $T\in\th$, the condition $\tr{T\Mo(X)}=0$ for all $X\in\Sigma$ implies $T=0$. From this we see that the range ${\rm ran}\,\Mo=\{\Mo(X)\,|\,X\in\Sigma\}$ of $\Mo$ has to be extensive enough to separate the trace class $\th$. Indeed, $\Mo$ is informationally complete if and only if the ultraweak closure of the linear hull of ${\rm ran}\,\Mo$ (which coincides with the double commutant $({\rm ran}\,\Mo)''$) is the whole of $\lh$ \cite[Proposition 18.1]{kirja}. We can make the following important immediate observations: {\it If $\Mo$ is informationally complete, its rank-1 refinement $\Mo^{\bm1}$ is informationally complete as well, and any joint measurement of an informationally complete observable with some observable is also informationally complete. 
 More generally, if a post-processing of an observable is informationally complete, then the post-processed observable is also informationally complete.}

To further quantify the informational content of an $\Mo\in\O(\Sigma,\hi)$ in a Hilbert space $\hi$, let us define for each $\rho\in\sh$ the set $[\rho]^\Mo\subseteq\sh$ as the set of those states $\sigma\in\sh$ such that $p_\sigma^\Mo=p_\rho^\Mo$. It is evident that $\Mo$ is informationally complete if and only if $[\rho]^\Mo=\{\rho\}$ for all $\rho\in\sh$. This definition can be generalized to the case of sets $\mathcal O$ of observables (in the same Hilbert space $\hi$):
$$
[\rho]^{\cal O}:=\{\sigma\in\sh\,|\,p_\sigma^{\Mo}=p_\rho^{\Mo},\ \forall\Mo\in\cal O\}
$$
One can say that a set $\cal O$ of observables is informationally complete if $[\rho]^{\cal O}=\{\rho\}$ for all $\rho\in\sh$.

An observable $\Mo:\Sigma\to\lh$ is said to be {\it commutative} if $\Mo(X)\Mo(Y)=\Mo(Y)\Mo(X)$ for all $X,\,Y\in\Sigma$. Let $\cal L\subseteq\lh$ be a set of selfadjoint operators. We call the set of vectors $\fii\in\hi$ such that $L_1\cdots L_n\fii=L_{\pi(1)}\cdots L_{\pi(n)}\fii$ for any $L_1,\ldots,\,L_n\in\cal L$, any permutation $\pi$ of $\{1,\ldots,n\}$, and any $n\in\N$ as the {\it commutation domain of $\cal L$} and denote it by ${\rm com}\,\cal L$. The following results concerning relationships between commutativity and sharpness with informational completeness have been proven in \cite{BuLa89}:

\begin{itemize}
\item Whenever $\dim\hi\geq2$ and $\Mo\in\O(\Sigma,\hi)$ is commutative, $\Mo$ is not informationally complete.
\item A family of mutually commuting spectral measures is never informationally complete.
\item If $\Po:\Sigma\to\lh$ is a spectral measure and $\rho\in\sh$, $[\rho]^\Po=\{\rho\}$ if and only if $\rho$ is pure (a rank-1 projection) and there is an $X\in\Sigma$ such that $\rho=\Po(X)$.
\item If $\cal O$ is an informationally complete set of observables then $\dim{{\rm com}\,\cal L}\leq1$, where $\cal L=\bigcup_{\Mo\in\cal O}{\rm ran}\,\Mo$.
\end{itemize}

The following are examples on informationally complete observables and sets of observables:

\begin{itemize}
\item The set $\{\Qo_\theta\}_{\theta\in[0,\pi)}$ of the rotated quadratures introduced in Section \ref{sec:intro} is informationally complete \cite[Theorem 18.1]{kirja}
\item Equivalently with the above, the homodyne observable $\mathsf G_{\rm ht}:\cal B\big([0,\pi)\times\R\big)\to\cal L\big(L^2(\R)\big)$ defined by $\mathsf G_{\rm ht}(\Theta\times X)=\pi^{-1}\int_\Theta\Qo_\theta(X)\,\d\theta$ is informationally complete.
\item The covariant phase space observable $\mathsf G_S$ introduced in Section \ref{sec:intro} is informationally complete if and only if the support\footnote{That is, the closure of the set of points $(q,p)\in\R^2$ such that $\tr{SD(q,p)}\neq0$.} of the function $(q,p)\mapsto\tr{SD(q,p)}$ is $\R^2$ \cite{KiLaSchuWe2012}.
\end{itemize}

\subsection{Informational completeness within the set of pure states}

Sometimes it is fruitful to consider informational completeness of an observable within a restricted set $\cal P\subseteq\sh$ of states; we are, e.g.,\ already guaranteed that the pre-measurement state $\rho$ is within $\cal P$ and it is enough to only be able to discern between states in $\cal P$ \cite{CaHeSchuTo}. Thus we arrive at {\it informational completeness of an $\Mo\in\O(\Sigma,\hi)$ within $\cal P$} meaning that, whenever $\rho,\,\sigma\in\cal P$, $\rho\neq\sigma$, then $p_\rho^\Mo\neq p_\sigma^\Mo$. When the set $\cal P$ consists of pure states, we identify it with $\{[\fii]\,|\,\fii\in\hi,\ \kb{\fii}{\fii}\in\cal P\}$; here, for any $\fii\in\hi$, we denote $[\fii]:=\{t\fii\,|\,t\in\mathbb T\}$ and $\T:=\{z\in\C\,|\,|z|=1\}$. We get the following result for the case where we have to distinguish a pure state from other pure states:

\begin{proposition}\label{prop:infocompinpure}
Let $(\hi_\oplus,\Po_\oplus,J_\oplus)$ be the minimal Na\u{\i}mark dilation of Theorem \ref{th1} for an $\Mo\in\O(\Sigma,\hi)$ in a separable Hilbert space $\hi$. The observable $\Mo$ is informationally complete within the set $\{\kb\fii\fii\,|\,\fii\in\hi,\,\|\fii\|=1\}$ of pure states if and only if  
$WJ_\oplus\fii\notin J_\oplus\hi$ whenever $\fii\in\hi$ and $W=\int_\Omega^\oplus W(x)\,\d\mu(x)\in\cal L(\hd)$ is a decomposable isometry such that $WJ_\oplus\fii\ne tJ_\oplus\fii$ for all $t\in\mathbb T$. 
\end{proposition}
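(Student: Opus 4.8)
\emph{Overall strategy.} The plan is to rewrite the outcome statistics of a pure state in terms of the fibres of the direct integral $\hd=\int_\Omega^\oplus\hi(x)\,\d\mu(x)$ supplied by Theorem \ref{th1}, and then to recognize the resulting fibrewise equivalence of vectors as precisely the action of a decomposable isometry. Using Theorem \ref{th1}(i) and the fact that $\Po_\oplus$ is the canonical spectral measure, for every unit vector $\fii\in\hi$ and every $X\in\Sigma$ one has
\[
p_{\kb{\fii}{\fii}}^{\Mo}(X)=\<J_\oplus\fii|\Po_\oplus(X)J_\oplus\fii\>=\int_X\big\|(J_\oplus\fii)(x)\big\|_{\hi(x)}^2\,\d\mu(x).
\]
Hence, for unit vectors $\fii,\psi\in\hi$, one has $p_{\kb{\fii}{\fii}}^{\Mo}=p_{\kb{\psi}{\psi}}^{\Mo}$ if and only if $\|(J_\oplus\fii)(x)\|=\|(J_\oplus\psi)(x)\|$ for $\mu$-a.a.\ $x\in\Omega$; consequently $\Mo$ fails to be informationally complete within the pure states exactly when there exist unit vectors $\fii,\psi$ with $\psi\notin[\fii]$ but with $\mu$-a.e.\ coinciding fibre norms.

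\emph{The fibrewise lemma.} I would then establish the following: vectors $\eta_1,\eta_2\in\hd$ satisfy $\|\eta_1(x)\|=\|\eta_2(x)\|$ for $\mu$-a.a.\ $x$ if and only if there is a decomposable isometry $W=\int_\Omega^\oplus W(x)\,\d\mu(x)$ on $\hd$ (which may in fact be taken decomposable unitary) with $W\eta_1=\eta_2$. One direction is immediate, since a decomposable isometry has isometric fibres $W(x)$ for $\mu$-a.a.\ $x$ and hence preserves fibre norms. For the other direction I would put $W(x)=I_{\hi(x)}$ off a suitable $\mu$-null set and, where $\eta_1(x)\neq0$, take $W(x)$ to be a unitary of $\hi(x)$ that sends $\eta_1(x)$ to $\eta_2(x)$ and fixes $\{\eta_1(x),\eta_2(x)\}^\perp$ pointwise — such a unitary exists on the at most two-dimensional span of the two vectors — and then verify, using the normalized fibres $\eta_i(x)/\|\eta_i(x)\|$ and their Gram--Schmidt data, that the field $x\mapsto W(x)$ may be chosen weakly $\mu$-measurable. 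I expect this measurable selection to be the one genuinely technical point of the argument; everything else is bookkeeping with the direct-integral structure already in hand.

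\emph{Assembling the equivalence.} Finally I would combine the two implications. If the right-hand condition holds but $\Mo$ is not informationally complete within the pure states, then, choosing $\fii,\psi$ as above and applying the lemma, there is a decomposable isometry $W$ with $WJ_\oplus\fii=J_\oplus\psi\in J_\oplus\hi$; injectivity of $J_\oplus$ together with $\psi\notin[\fii]$ gives $WJ_\oplus\fii\neq tJ_\oplus\fii$ for all $t\in\T$, contradicting the right-hand condition. Conversely, if the right-hand condition fails there are $\fii\in\hi$ and a decomposable isometry $W$ with $WJ_\oplus\fii\neq tJ_\oplus\fii$ for all $t$ yet $WJ_\oplus\fii\in J_\oplus\hi$; the case $\fii=0$ renders the hypothesis vacuous, so $\fii\neq0$ and, after rescaling, $\|\fii\|=1$, hence $WJ_\oplus\fii=J_\oplus\psi$ for some $\psi$ with $\|\psi\|=1$ and $\psi\notin[\fii]$, while the easy half of the lemma yields $\|(J_\oplus\psi)(x)\|=\|(J_\oplus\fii)(x)\|$ $\mu$-a.e.; thus $p_{\kb{\fii}{\fii}}^{\Mo}=p_{\kb{\psi}{\psi}}^{\Mo}$ and $\Mo$ is not informationally complete within the pure states. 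The two directions together establish the proposition.
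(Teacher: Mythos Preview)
Your proposal is correct and follows essentially the same route as the paper's proof: reduce equality of the pure-state statistics to $\mu$-a.e.\ equality of fibre norms, then show this is equivalent to the existence of a decomposable isometry (in fact unitary) intertwining $J_\oplus\fii$ and $J_\oplus\psi$, and finally translate into the stated condition. The only cosmetic difference is in how the fibrewise unitary $W(x)$ is built: the paper extends each of the normalized vectors $(J_\oplus\fii)(x)/\|(J_\oplus\fii)(x)\|$ and $(J_\oplus\psi)(x)/\|(J_\oplus\psi)(x)\|$ to measurable fields of orthonormal bases $\{e_n(x)\}$, $\{f_n(x)\}$ and sets $W(x)=\sum_n\kb{f_n(x)}{e_n(x)}$, whereas you propose a unitary acting nontrivially only on the at most two-dimensional span of the two fibre vectors; both constructions need the same measurable-selection input and yield the same conclusion.
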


\begin{proof}
Let $\fii,\,\psi\in\hi$ be unit vectors. We have $p_{\kb{\fii}{\fii}}^\Mo=p_{\kb{\psi}{\psi}}^\Mo$ if and only if $\|(J_\oplus\fii)(x)\|=\|(J_\oplus\psi)(x)\|$ for $\mu$-a.a.\ $x\in\Omega$. One can construct measurable fields $x\mapsto\{e_n(x)\}_{n=1}^{m(x)}\subseteq\hi(x)$, $x\mapsto\{f_n(x)\}_{n=1}^{m(x)}\subseteq\hi(x)$ of orthonormal bases such that
$$
\begin{array}{rcl}
e_1(x)&=&\left\{\begin{array}{ll}
\|(J_\oplus\fii)(x)\|^{-1}(J_\oplus\fii)(x),&(J_\oplus\fii)(x)\neq0\\
\eta(x)&{\rm otherwise}
\end{array}\right.,\\
f_1(x)&=&\left\{\begin{array}{ll}
\|(J_\oplus\psi)(x)\|^{-1}(J_\oplus\psi)(x),&(J_\oplus\psi)(x)\neq0\\
\eta(x)&{\rm otherwise}
\end{array}\right.
\end{array}
$$
for $\mu$-a.a.\ $x\in\Omega$, where $x\mapsto \eta(x)\in\hi(x)$ is a measurable field of unit vectors. Defining $W(x):=\sum_{n=1}^{m(x)}\kb{f_n(x)}{e_n(x)}$ we may set up the decomposable isometry (even unitary) $W=\int_\Omega^\oplus W(x)\,\d\mu(x)$ such that $J_\oplus\psi=WJ_\oplus\fii$ if $p_{\kb{\fii}{\fii}}^\Mo=p_{\kb{\psi}{\psi}}^\Mo$ holds. In reverse, it is simple to check that, whenever $W$ is a decomposable isometry such that $J_\oplus\psi=WJ_\oplus\fii$, then $\|(J_\oplus\fii)(x)\|=\|(J_\oplus\psi)(x)\|$ for $\mu$-a.a.\ $x\in\Omega$, i.e., $p_{\kb{\fii}{\fii}}^\Mo=p_{\kb{\psi}{\psi}}^\Mo$. Thus, $p_{\kb{\fii}{\fii}}^\Mo=p_{\kb{\psi}{\psi}}^\Mo$ if and only if $J_\oplus\psi=WJ_\oplus\fii$ with a decomposable isometry $W\in\cal L(\hd)$. The claim immediately follows from this observation and by noting that $\kb{\fii}{\fii}=\kb{\psi}{\psi}$ if and only if $J_\oplus\psi=tJ_\oplus\fii$ for some $t\in\mathbb T$.
\end{proof}

The above proposition implies the well-known fact stated earlier: a PVM in a separable Hilbert space cannot be informationally complete. In fact such a PVM $\Po$ is not informationally complete even within the set of pure states. Indeed, the isometry $J_\oplus$ in the dilation of Theorem \ref{th1} for $\Po$ is unitary, i.e.,\ $J_\oplus\hi=\hd$.

For another example, as well known, the canonical phase $\mathsf\Phi_{\rm can}$ introduced in Section \ref{sec:intro} is not informationally complete within the set of pure states. To see this using Proposition \ref{prop:infocompinpure}, let us give the minimal Na\u{\i}mark dilation of Theorem \ref{th1} for $\mathsf\Phi_{\rm can}$ in the form $(L^2\big([0,2\pi),(2\pi)^{-1}\d\theta\big),\Po_{\rm can},J_{\rm can})$, where $\Po_{\rm can}$ is the canonical spectral measure of $L^2\big([0,2\pi),(2\pi)^{-1}\d\theta\big)$ and
$$
J_{\rm can}=\sum_{n=0}^\infty\kb{\psi_n}{n},\quad\psi_n(\theta)=e^{-in\theta},\quad0\leq\theta<2\pi,\quad n\in\{0\}\cup\N.
$$
Let $n\in\N$. Since $\psi_n(\theta)=e^{-in\theta}\psi_0(\theta)$ for all $\theta\in[0,2\pi)$, defining the decomposable unitary operator $W_n$  through $(W_n\psi)(\theta)=e^{-in\theta}\psi(\theta)$, $\psi\in L^2\big([0,2\pi),(2\pi)^{-1}\d\theta\big)$, $\theta\in[0,2\pi)$, one has $J_{\rm can}|n\>=\psi_n=W_n\psi_0=W_nJ_{\rm can}|0\>\ne tJ_{\rm can}|0\>$. This proves the claim. 

Let $\Mo\in\O(\Sigma,\hi)$ be an observable in a separable Hilbert space $\hi$ with the minimal Na\u{\i}mark dilation $(\hd,\Po_\oplus,J_\oplus)$ of Theorem \ref{th1}. Let us make a few observations and collect a couple conditions that guarantee that $\Mo$ is not informationally complete within the set of pure states and a necessary and sufficient condition for this.
\begin{itemize}
\item {\it If there exist nonzero vectors $\fii,\,\psi\in\hi$ such that $\<(J_\oplus\fii)(x)|(J_\oplus\psi)(x)\>=0$ for $\mu$-almost all $x\in\Omega$ then $\Mo$ is not informationally complete within the set of pure states.} To see this, fix $\fii$ and $\psi$ satisfying the above condition. Without restricting generality, assume that $\|\fii\|=1=\|\psi\|$. Hence, $\<\fii|\psi\>=\<J_\oplus\fii|J_\oplus\psi\>=0$ and $\fii_\pm=2^{-1/2}(\fii\pm\psi)$ are  unit vectors for which $\fii_+\ne t\fii_-$ for all $t\in\mathbb T$ and $\|(J_\oplus\fii_+)(x)\|=\|(J_\oplus\fii_-)(x)\|$ for $\mu$-a.a.\ $x\in\Omega$, that is, $p_{\kb{\fii_+}{\fii_+}}^\Mo=p_{\kb{\fii_-}{\fii_-}}^\Mo$.

Note, however, that the existence of vectors $\fii$ and $\psi$ of the above condition is not necessary for an informationally incomplete observable, a counterexample being the canonical phase: Assume that $\overline{(J_{\rm can}\fii)(\theta)}(J_{\rm can}\psi)(\theta)=0$ for $\d\theta$-a.a.\ $\theta\in[0,2\pi)$. Then either $J_{\rm can}\psi$ or $J_{\rm can}\fii$ is zero since any Hardy function vanishing on a set of positive measure is 
identically zero. 
\item {\it If there are disjoint sets $X_i\in\Sigma$ and nonzero vectors $\fii_i$ such that $\Mo(X_i)\fii_i=\fii_i$, $i=1,\,2$, then $\Mo$ is not informationally complete within the set of pure states.} Indeed, let $\fii_i\in\hi\setminus\{0\}$ and $X_i$, $i=1,\,2$ be as above. Thus, $\Po_\oplus(X_i)J_\oplus\fii_i=J_\oplus\fii_i$ for all $i=1,2$ implying that
$$
|\<(J_\oplus\fii_1)(x)|(J_\oplus\fii_2)(x)\>|\le\|(J_\oplus\fii_1)(x)\|\,\|(J_\oplus\fii_2)(x)\|=0
$$
for $\mu$-almost all $x\in\Omega$ so that $\Mo$ is not informationally complete within the set of pure states.
\item For any decomposable isometry $W=\int_\Omega^\oplus W(x)\,\d\mu(x)\in\cal L(\hd)$, define the operator $Z_W:=J_\oplus^*WJ_\oplus$. {\it The observable $\Mo$ is informationally complete within the set of pure states if and only if, for any decomposable isometry $W\in\cal L(\hd)$, the operator $Z_W$ strictly decreases the norm (i.e.,\ $\|Z_W\fii\|<\|\fii\|$) for any nonzero $\fii\in\hi$ such that $J_\oplus\fii$ is not an eigenvector of $W$.} (Recall that an isometry may not have any eigenvalues and if eigenvalues exist they belong to $\mathbb T$.) To see this, note that, when $\fii\in\hi$ and $W\in\cal L(\hd)$ is a decomposable isometry, the vector $WJ_\oplus\fii\ne tJ_\oplus\fii$ is not in the subspace $J_\oplus\hi\cong\hi$ if and only if its norm genuinely decreases under the `projection' $J_\oplus^*$, i.e.,\ $\|J_\oplus^*WJ_\oplus\fii\|<\|\fii\|$. Thus we obtain the above as a reformulation of Proposition \ref{prop:infocompinpure}. Note that $\|Z_W\|\le 1$ and, if $WJ_\oplus\fii= tJ_\oplus\fii$, then $Z_W\fii=t\fii$.

Suppose that $w:\Omega\to\mathbb T$ is a $\mu$-measurable function and $W=\int_\Omega^\oplus w(x)I_{\hi(x)}\d\mu(x)$. Denote $Z_w:=Z_W=\int_\Omega w(x)\,\d\Mo(x)$. For the informational completeness of $\Mo$ within the set of pure states, it is necessary that $Z_w$ be strictly norm decreasing in the way defined above for any $\mathbb T$-valued measurable function $w$. We see immediately that this condition becomes also sufficient if $\Mo$ is of rank 1. Note that, if $w$ is not a constant on a set of positive measure, then the corresponding $W$ does not have any eigenvalues. For example, in the case of the canonical phase, 
 $W_n$ ($n\ne 0$) does not have any eigenvalues and
$Z_{W_n}=\int_0^{2\pi}e^{-in\theta}\d\mathsf\Phi_{\rm can}(\theta)=\sum_{m=0}^\infty|m+n\>\<m|$ is an isometry. Often we are interested in the state determination power of the rank-1 refinement of an observable which is why the rank-1 case is of particular importance.
\end{itemize}

Let us take a closer look at a couple of examples utilizing the observations made above.

\begin{example}
Consider a phase space observable $\mathsf G_S$ with some generating positive trace-1 operator $S$. Let us denote the closure of the range of $S$ by $\ki$. The dilation of Theorem \ref{th1} is given by the Hilbert space $L^2(\R^2)\otimes\ki$ identified here with the corresponding $L^2$-space of (equivalence classes of) $\ki$-valued functions, the canonical spectral measure $\Po_\oplus:\,\cal B(\R^2)\to\cal L\big(L^2(\R^2)\otimes\ki\big)$, $\big(\Po_\oplus(Z)\eta)(q,p)=\CHI Z(q,p)\eta(q,p)$ for all $Z\in\cal B(\R^2)$, $\eta\in L^2(\R^2)\otimes\ki$, and a.a.\ $(q,p)\in\R^2$, and the isometry $J_\oplus:L^2(\R)\to L^2(\R^2)\otimes\ki$,
$$
(J_\oplus\fii)(q,p)=\frac{1}{\sqrt{2\pi}}S^{1/2}D(q,p)^*\fii,\qquad\fii\in L^2(\R),\quad (q,p)\in\R^2.
$$
It follows that $\mathsf G_S$ is informationally complete within the set of pure states if and only if, whenever $\R^2\ni(q,p)\mapsto W(q,p)\in\lk$ is a weakly measurable field of isometries, the operator
$$
Z_W=\frac{1}{2\pi}\int_{\R^2}D(q,p)S^{1/2}W(q,p)S^{1/2}D(q,p)^*\d q\d p
$$
strictly decreases the norm of any nonzero vector $\fii$ such that $J_\oplus\fii$ is not an eigenvector of $W$. Especially, if $w:\R^2\to\T$ is measurable then 
$
Z_w=\int_{\R^2}w(q,p)\,\d\mathsf G_S(q,p)
$ 
is strictly norm decreasing in the above sense if $\mathsf G_S$ is informationally complete (within the set of pure states). If $S$ is of rank 1 (i.e.,\ $\mathsf G_S$ is rank-1) then $\mathsf G_S$ is informationally complete within the set of pure states if and only if the operators $Z_w$
are strictly norm decreasing as above.

Let $S=\sum_{i=1}^{{\rm rank}\,S}s_i\kb{\fii_i}{\fii_i}$ be the spectral decomposition of $S$ where $\fii_i\in L^2(\R)$ is a unit eigenvector associated to the eigenvalue $s_i\in(0,1]$ (and $\<\fii_i|\fii_j\>=\delta_{ij}$, $\tr S=\sum_i s_i=1$). Pick a representative $\R\ni x\mapsto \fii_i(x)\in\C$ from each class $\fii_i$ such that $\sum_{i=1}^{{\rm rank}\,S}s_i|\fii_i(x)|^2<\infty$ for all $x\in\R$ and define a positive semidefinite integral kernel $K_S:\R^2\to\C$ by
$$
K_S(x,y):=\sum_{i=1}^{{\rm rank}\,S}s_i\fii_i(x)\overline{\fii_i(y)},\qquad (x,y)\in\R^2.
$$
Indeed, $|K_S(x,y)|^2\le K_S(x,x)K_S(y,y)$ and %the diagonal $x\mapsto K_S(x,x)$ is integrable,
$\int_\R K_S(x,x)\d x=1$ by the monotone convergence theorem. Now $X_{K_S}:=\{x\in\R\,|\,K_S(x,x)\ne 0\}$ is essentially unique in the sense that, if $\tilde{K}_S$ is another integral kernel of $S$ then $X_{K_S}$ and $X_{\tilde{K}_S}$ differ in the set of Lebesgue measure zero. For any $\fii,\,\psi\in L^2(\R)$ and $(q,p)\in\R^2$ one gets
\begin{eqnarray*}
2\pi\<(J_\oplus\fii)(q,p)|(J_\oplus\psi)(q,p)\>&=&\<D(q,p)^*\fii|SD(q,p)^*\psi\> \\
&=&
\int_{X_{K_S}}\int_{X_{K_S}}\overline{(D(q,p)^*\fii)(x)}S(x,y)(D(q,p)^*\psi)(y)\d x\d y \\
&=&
\iint_{Y^q_{S,\fii,\psi}}e^{ipx}\overline{\fii(x+q)}K_S(x,y)
e^{-ipy}\psi(y+q)
\d x\d y
\end{eqnarray*}
where $Y^q_{S,\fii,\psi}=X_{K_S}\times X_{K_S}\cap\{x\,|\,\fii(x+q)\ne 0\}\times \{y\,|\,\psi(y+q)\ne 0\}$. If there exists an $R>0$ such that $X_{K_S}\setminus[-R,R]$ is of measure zero (e.g.\ $S=\kb{\CHI{[0,1]}}{\CHI{[0,1]}}$) then it is easy to find $\fii$ and $\psi$ such that $Y^q_{S,\fii,\psi}$ is zero measurable for all $q\in\R$ and, hence, $\mathsf G_S$ is not informationally complete within the set of pure states by above observation.

To connect our analysis with earlier results, define a continuous square-integrable function 
$$
\hat S:\,\R^2\to\R,\quad(q,p)\mapsto \hat S(q,p):=\tr{D(q,p)S}=e^{iqp/2}\int_\R e^{ipx}K_S(x,x+q)\d x.
$$
If $\hat S$ is integrable, $K_S(x,x+q)=\frac{1}{2\pi}\int_\R e^{-iqp/2} e^{-ipx}\hat S(q,p)\d p$ for all $q\in\R$ and a.a.\ $x\in\R$. If, additionally, $X_{K_S}\setminus[-R,R]$ is of measure zero for some $R>0$, $\hat S(q,p)=0$ for all $p\in\R$ if $|q|>2R$ but $\hat S$ need not be compactly supported (e.g.\ $S=\kb{\CHI{[0,1]}}{\CHI{[0,1]}}$ for which $\hat S(0,p)=i(1-e^{ip})/p$ for all $p\ne0$). Assume then that the support of $\hat S$ is compact and thus contained in a rectangle $[-R_0,R_0]\times[-R_0,R_0]$. Now $\hat S$ is integrable and $S(x,x+q)=0$ for (almost) all $x$ and $q$ such that $|q|>R_0$. Immediately one finds unit vectors $\fii,\,\psi\in L^2(\R)$ such that $\<(J_\oplus\fii)(q,p)|(J_\oplus\psi)(q,p)\>=0$ for all $(q,p)\in\R^2$ thus showing that $\mathsf G_S$ is not informationally complete within the set of pure states. Hence, we have obtained Proposition 20(c) of \cite{CaHeSchuTo} as a special case.
\end{example}

\begin{example}
Let $\Mo:2^{\Omega}\to\cal L(\hi)$ be an observable with an at most countably infinite value space $\Omega=\{x_1,x_2,\ldots\}$, and denote $\Mo_i:=\Mo(\{x_i\})$ for all $i=1,2,\ldots$. Denote the closure of the range of $\Mo_i$ by $\ki_i$ for each $i$ and define the Hilbert space $\ki:=\bigoplus_{i}\ki_i$ which is equipped with the canonical spectral measure $\Po:2^{\Omega}\to\cal L(\ki)$ defined by $\Po(\{x_i\})\bigoplus_{j}\fii_j:=\fii_i$ for all $i$ and all $\bigoplus_{j}\fii_j\in\ki$. Moreover, define the isometry $J:\hi\to\ki$, $\fii\mapsto J\fii=\bigoplus_{i}\Mo_i^{1/2}\fii$. The triple $(\ki,\Po,J)$ is a minimal Na\u{\i}mark dilation for $\Mo$ like the one presented in Theorem \ref{th1}.

The observable $\Mo$ is informationally complete within the set of pure states if and only if, for any isometries $W_i\in\cal L(\ki_i)$ and any $\fii\in\hi$ such that there is no $t\in\mathbb T$ such that $W_i\Mo_i^{1/2}\fii=t\Mo_i^{1/2}\fii$ for all $i$, one has $\|Z_W\fii\|<\|\fii\|$ where $Z_W:=\sum_{i}\Mo_i^{1/2}W_i\Mo_i^{1/2}$. This is a direct consequence of our earlier observations by noting that, when $W_i\in\cal L(\ki_i)$ are isometries and $W:=\bigoplus_{i}W_i$, then $WJ\fii=tJ\fii$ for some $\fii\in\hi$ and $t\in\mathbb T$ if and only if $W_i\Mo_i^{1/2}\fii=t\Mo_i^{1/2}\fii$ for all $i$. If $\Mo$ is of rank 1, this condition can be simplified: $\Mo$ is informationally complete within the set of pure states if and only if, for any (nonconstant) function $ i\mapsto w_i\in\mathbb T$ and any nonzero $\fii\in\hi$ (such that $w_i\Mo_i^{1/2}\fii\ne t\Mo_i^{1/2}\fii$ for all $i$), one has $\|Z_w\fii\|<\|\fii\|$, where $Z_w=\sum_{i}w_i\Mo_i$.

Finally, we note that, if $\{\fii_{ik}\}_{k=1}^{\dim\ki_i}$ is an orthonormal basis of $\ki_i$ for each $i$ one can define (linearly independent) vectors $d_{ik}:=\Mo_i^{1/2}\fii_{ik}$, $k<\dim\ki_i+1$, such that $\Mo_i^{1/2}=\sum_k\kb{\fii_{ik}}{\fii_{ik}}\Mo_i^{1/2}=\sum_k|\fii_{ik}\>\<d_{ik}|$,
$J=\sum_i\sum_k|e_{ik}\>\<d_{ik}|$, and $\Po_i:=\Po(\{x_i\})=\sum_k\kb{e_{ik}}{e_{ik}}$, where  $e_{ik}:=\bigoplus_{j}\delta_{ji}\fii_{ik}$; compare to Section \ref{sec:intro}.
\end{example}

\section{Extreme observables}

The relevant mathematical structures in quantum theory, sets of states, observables, channels, and instruments, are convex. For example, for observables $\Mo,\,\Mo'\in\O(\Sigma,\hi)$ and $p\in[0,1]$, one can determine a mixed observable, a convex combination $p\Mo+(1-p)\Mo'$ by
$$
\big(p\Mo+(1-p)\Mo'\big)(X)=p\Mo(X)+(1-p)\Mo'(X),\qquad X\in\Sigma.
$$
Such mixing of devices can be seen as classical noise produced by an imprecise implementation that produces a measurement of $\Mo$ with relative frequency $p$ and something else otherwise.

An element $x\in K$ in a convex set $K$ set is called {\it extreme} if, for any $y,\,z\in K$ and $p\in(0,1)$, $x=py+(1-p)z$ implies $x=y=z$. Thus extreme quantum devices are free of classical noise due to mixing. The extreme elements of the set of states $\sh$ are the rank-1 projections $\kb{\fii}{\fii}$, $\fii\in\hi$, $\|\fii\|=1$, called as {\it pure states}, whereas  the extreme effects are projections. The general characterizations of extremality for quantum devices follow ultimately from the following result \cite{Ar}:

\begin{theorem}\label{theor:ExtArveson}
Suppose that $\cal A$ is a unital $C^*$-algebra and $\hi$ is a Hilbert space. Let $\Phi\in{\bf CP}(\cal A;\hi)$ and pick a minimal Stinespring dilation $(\cal M,\pi,J)$ for $\Phi$. The map $\Phi$ is an extreme point of the convex set ${\bf CP}(\cal A;\hi)$ if and only if the map
$$
({\rm ran}\,\pi)'\ni D\mapsto J^*DJ\in\lh
$$
defined on the commutant of the range of $\pi$ is an injection.
\end{theorem}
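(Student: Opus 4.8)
The plan is to reduce the statement to Arveson's Radon--Nikodym correspondence for completely positive maps dominated by $\Phi$, and then to read off extremality as an injectivity condition. Throughout write $(\mathrm{ran}\,\pi)'\subseteq\cal L(\cal M)$ for the commutant of $\pi(\cal A)$; since $\pi$ is a $*$-representation this is a von Neumann algebra, in particular self-adjoint and stable under continuous functional calculus. The key lemma I would first isolate is: \emph{a completely positive map $\Psi:\cal A\to\lh$ satisfies $0\le\Psi\le\Phi$ (meaning both $\Psi$ and $\Phi-\Psi$ are completely positive) if and only if there is a unique $D\in(\mathrm{ran}\,\pi)'$ with $0\le D\le I$ such that $\Psi(a)=J^*\pi(a)DJ=J^*D\pi(a)J$ for all $a\in\cal A$.}

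Granting the lemma, both implications are short. For the ``if'' direction, assume $D\mapsto J^*DJ$ is injective on $(\mathrm{ran}\,\pi)'$ and that $\Phi=\frac12\Phi_1+\frac12\Phi_2$ with $\Phi_1,\Phi_2\in{\bf CP}(\cal A;\hi)$. Put $\Psi:=\frac12\Phi_1$, so that $\Psi$ and $\Phi-\Psi=\frac12\Phi_2$ are completely positive; the lemma gives $D\in(\mathrm{ran}\,\pi)'$, $0\le D\le I$, with $\Psi(a)=J^*\pi(a)DJ$. Evaluating at $a=1$ and using $\pi(1)=I$, $J^*J=I$ gives $\frac12 I=\Psi(1)=J^*DJ$, whence $J^*\big(D-\frac12 I\big)J=0$ with $D-\frac12 I\in(\mathrm{ran}\,\pi)'$; injectivity forces $D=\frac12 I$, so $\Phi_1=2\Psi=\Phi$ and hence $\Phi_2=\Phi$. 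For the ``only if'' direction I would prove the contrapositive: suppose $D\mapsto J^*DJ$ is not injective and pick a nonzero $D_0\in(\mathrm{ran}\,\pi)'$ with $J^*D_0J=0$. Since the commutant is self-adjoint, its self-adjoint and skew-adjoint parts again lie in the commutant with vanishing compression, so we may assume $D_0=D_0^*$, and after rescaling $-I\le D_0\le I$. Define $\Phi_+(a):=J^*\pi(a)(I+D_0)J$ and $\Phi_-(a):=J^*\pi(a)(I-D_0)J$. As $I\pm D_0\ge0$ commutes with $\pi(a)$, one has $\Phi_\pm(a)=\big(J^*(I\pm D_0)^{1/2}\big)\pi(a)\big((I\pm D_0)^{1/2}J\big)$, which is completely positive, and $\Phi_\pm(1)=J^*(I\pm D_0)J=I$, so $\Phi_\pm\in{\bf CP}(\cal A;\hi)$ with $\Phi=\frac12\Phi_++\frac12\Phi_-$. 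Finally $\Phi_+\ne\Phi$: were $J^*\pi(a)D_0J=0$ for all $a$, then $\langle\pi(b)J\psi|D_0\pi(a)J\varphi\rangle=\langle\psi|J^*\pi(b^*a)D_0J\varphi\rangle=0$ for all $a,b,\varphi,\psi$, and minimality of the dilation (density of $\mathrm{span}\{\pi(a)J\varphi\}$) would give $D_0=0$, a contradiction. Hence $\Phi$ is not extreme.

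It remains to prove the lemma, and this is where the real work sits. Given $\Psi$ with $0\le\Psi\le\Phi$, define a sesquilinear form on the dense subspace $\mathrm{span}\{\pi(a)J\varphi\}$ by $q\big(\sum_i\pi(a_i)J\varphi_i,\ \sum_j\pi(b_j)J\psi_j\big):=\sum_{i,j}\langle\varphi_i|\Psi(a_i^*b_j)\psi_j\rangle$. Complete positivity of $\Psi$ makes $q$ positive semidefinite, while complete positivity of $\Phi-\Psi$ gives $q(\xi,\xi)\le\sum_{i,j}\langle\psi_i|\Phi(b_i^*b_j)\psi_j\rangle=\|\xi\|^2$; hence $q$ is well defined (independent of the chosen representation of a vector, by Cauchy--Schwarz) and bounded, so it is implemented by a unique positive contraction $D$, which one then checks commutes with every $\pi(c)$ and satisfies $\Psi(a)=J^*\pi(a)DJ$ by the same bra--ket computations, uniqueness again coming from minimality as above. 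The main obstacle is precisely this boundedness estimate $q(\xi,\xi)\le\|\xi\|^2$: it is the step that genuinely requires \emph{complete} positivity (not mere positivity) of $\Phi$ and of $\Phi-\Psi$, and that forces one to work with finite sums of the generating vectors rather than single vectors; everything else reduces to routine manipulation with the commutation relation $D\pi(a)=\pi(a)D$, the isometry identity $J^*J=I$, and density of $\mathrm{span}\{\pi(a)J\varphi\}$ in $\cal M$.
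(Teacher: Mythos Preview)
The paper does not supply a proof of this theorem; it is stated with a citation to Arveson's original paper and then used as a black box. Your argument is correct and is precisely the classical Arveson proof: one first establishes the Radon--Nikod\'ym correspondence between completely positive maps $\Psi$ satisfying $0\le\Psi\le\Phi$ in the CP order and positive contractions $D\in(\mathrm{ran}\,\pi)'$ via the bounded sesquilinear form $q$ on the minimal dilation space, and then reads off extremality of $\Phi$ as injectivity of $D\mapsto J^*DJ$. The only cosmetic slip is in your boundedness estimate, where the summation variables should be $\varphi_i$ and $a_i$ rather than $\psi_i$ and $b_i$; the content is unaffected.
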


We usually say shortly that an observable $\Mo:\Sigma\to\lh$ is extreme if $\Mo$ is an extreme element of $\O(\Sigma,\hi)$. We may elaborate the above extremality characterization in the case of quantum observables \cite{Pe11}.

\begin{theorem}\label{theor:extobs}
Let $(\hi_\oplus,\Po_\oplus,J_\oplus)$ be the minimal Na\u{\i}mark dilation of Theorem \ref{th1} for an $\Mo\in\O(\Sigma,\hi)$ in a separable Hilbert space $\hi$. The observable $\Mo$ is extreme if and only if, for any decomposable operator $D=\int_\Omega^\oplus D(x)\,\d\mu(x)\in\cal L(\hd)$, the condition $J_\oplus^*DJ_\oplus=0$ implies $D=0$.
\end{theorem}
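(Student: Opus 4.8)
The plan is to obtain the statement from Arveson's extremality criterion, Theorem \ref{theor:ExtArveson}. We regard $\Mo$ as a completely positive unital map: fix a probability measure $\mu\sim\Mo$ (which exists since $\hi$ is separable) and set $\Phi_\Mo\in{\bf CP}\big(L^\infty(\mu);\hi\big)$, $\Phi_\Mo(f)=\int f\,\d\Mo$. As recorded after Theorem \ref{th1}, the minimal Na\u{\i}mark dilation $(\hd,\Po_\oplus,J_\oplus)$ is precisely a minimal Stinespring dilation $(\hd,\pi,J_\oplus)$ for $\Phi_\Mo$, with $\pi(f)=\hat f$ the diagonalizable multiplication operator on $\hd=\int_\Omega^\oplus\hi(x)\,\d\mu(x)$. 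Theorem \ref{theor:ExtArveson} then states that $\Phi_\Mo$ is an extreme point of ${\bf CP}\big(L^\infty(\mu);\hi\big)$ if and only if the map $D\mapsto J_\oplus^*DJ_\oplus$ is injective on the commutant $(\ran\pi)'$.

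The first main step is to identify $(\ran\pi)'$. Since $\{\hat f\,|\,f\in L^\infty(\mu)\}$ is exactly the diagonalizable algebra of the direct integral $\hd$, whose fibres $\hi(x)$ are separable by Theorem \ref{th1}, the standard structure theory of direct integrals identifies its commutant in $\cal L(\hd)$ with the algebra of decomposable operators $D=\int_\Omega^\oplus D(x)\,\d\mu(x)$. Under this identification, Arveson's injectivity condition reads verbatim as the claim that $J_\oplus^*DJ_\oplus=0$ forces $D=0$ for every decomposable $D$. Hence it suffices to prove that $\Mo$ is extreme in $\O(\Sigma,\hi)$ if and only if $\Phi_\Mo$ is extreme in ${\bf CP}\big(L^\infty(\mu);\hi\big)$.

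For this bridging step, recall that $\No\mapsto\Phi_\No$, $\Phi_\No(f)=\int f\,\d\No$, is an affine bijection from the POVMs $\No:\Sigma\to\lh$ with $\No\ll\mu$ onto the normal unital completely positive maps $L^\infty(\mu)\to\lh$. Any convex splitting $\Mo=p\Mo'+(1-p)\Mo''$ in $\O(\Sigma,\hi)$ with $p\in(0,1)$ satisfies $\Mo',\Mo''\ll\Mo\ll\mu$, so it transports under this bijection to a convex splitting of $\Phi_\Mo$; conversely, in any splitting $\Phi_\Mo=p\Psi+(1-p)\Psi'$ within ${\bf CP}\big(L^\infty(\mu);\hi\big)$ the components obey $0\le\Psi\le p^{-1}\Phi_\Mo$ and are therefore normal (a positive map dominated by a normal one is normal), hence come from POVMs absolutely continuous with respect to $\mu$. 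This yields the desired equivalence and completes the argument. The step most in need of care is exactly this last one: ${\bf CP}\big(L^\infty(\mu);\hi\big)$ a priori contains non-normal maps, so the domination estimate $0\le\Psi\le p^{-1}\Phi_\Mo$ must be invoked explicitly to keep the decomposition among POVMs; the commutant computation, though classical, likewise deserves an exact reference, being the technical heart of translating Arveson's abstract criterion into the decomposable-operator language of Theorem \ref{th1}.
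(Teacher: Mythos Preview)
The paper does not actually supply a proof of Theorem \ref{theor:extobs}; it merely records the statement and cites \cite{Pe11} for the argument. Your derivation from Arveson's criterion (Theorem \ref{theor:ExtArveson}) is correct and is exactly the natural route: the identification of $(\ran\pi)'$ with the decomposable operators is the standard direct-integral fact (see \cite{Di}), and your bridging argument between extremality in $\O(\Sigma,\hi)$ and in ${\bf CP}\big(L^\infty(\mu);\hi\big)$ is sound, including the normality point, since for each state $\rho$ the positive functional $\rho\circ\Psi$ is dominated by the normal functional $p^{-1}\rho\circ\Phi_\Mo$ and hence is itself normal. This is also the line of argument taken in \cite{Pe11}, so your proof matches the intended one.
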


It is an immediate result of Theorem \ref{theor:extobs} that PVMs are extreme. This can also be proven directly by using the fact that projections are the extreme elements of the set of effects. Also, if $(\Omega,\Sigma)$ is nice and $\dim\hi<\infty$ then an extreme observable $\Mo\in\O(\Sigma,\hi)$ is discrete. Indeed, using an exactly measurable function $f:\Omega\to\R$ such that $f(\Om)\in\cal B(\R)$, we now obtain an extreme observable $\Mo\circ f^{-1}\in\O\big(\cal B(\R),\hi\big)$ which is supported on an at most countable set $\{\lambda_1,\lambda_2,\ldots\}\subset\R$ \cite[Section 5]{HaHePe12}; see also \cite{ChiDASchli}. This means that $\Mo$ is supported by the set $\bigcup_{i}f^{-1}(\{\lambda_i\})$ where $f^{-1}(\{\lambda_i\})$ are atoms of $\Sigma$. Below are some examples on extreme observables which are not PVMs.

\begin{itemize}
\item  One can show that the phase space observable $\mathsf G_S$ introduced in Section \ref{sec:intro} is extreme if and only if $S$ is pure, $S=\kb\psi\psi$, and $(q,p)\mapsto\<\psi|D(q,p)\psi\>\ne 0$ for all $(q,p)\in\R^2$ \cite{HePe12}. Hence, if $\mathsf G_S$ is extreme then it is informationally complete (but the converse does not hold). Especially, when $S=|0\>\<0|$, i.e.,\ the generating state is the vacuum state, then we get the rank-1 informationally complete extreme observable $\mathsf G_{\kb00}$.
\item The canonical phase $\mathsf\Phi_{\rm can}$ introduced in Section \ref{sec:intro} is extreme \cite{HePe09}.
\item Fix a number $m>0$ and denote by $\hat\fii$ the Fourier-Plancherel transformation of $\fii\in L^2(\R)$; if $\fii\in L^1(\R)\cap L^2(\R)$ we may write
$$
\hat\fii(p)=\frac{1}{\sqrt{2\pi}}\int_\R e^{-ipx}\fii(x)\,\d x,\qquad p\in\R.
$$
The canonical time-of-arrival observable $\tau:\cal B(\R)\to\cal L\big(L^2(\R)\big)$ for a free mass-$m$ particle moving in $\R$ defined through
$$
\<\fii|\tau(X)\psi\>=\frac{1}{2\pi m}\int_X\int_0^\infty\int_0^\infty e^{\frac{it}{2m}(p_2^2-p_1^2)}\big(\overline{\hat\fii(p_1)}\hat\psi(p_2)+\overline{\hat\fii(-p_1)}\hat\psi(-p_2)\big)\sqrt{p_1p_2}\,\d p_1\,\d p_2\,\d t
$$
for any $X\in\cal B(\R)$ and any vectors $\fii$ and $\psi$ from the Schwartz space of rapidly decreasing functions, is extreme \cite{HaPe11}.
\end{itemize}

We see from Theorem \ref{theor:extobs} that {\it if $\Mo$ is extreme so is its rank-1 refinement $\Mo^{\bm1}$} \cite{Part2}. The two first examples given above are of rank 1. The third example, however, is of rank 2. The rank-1 refinement of the canonical time observable is $\tau^1:\,2^{\{1,2\}}\otimes\cal B(\R) \to\cal L\big(L^2(\R)\big)$,
$$
\<\fii|\tau^1(\{k\}\times X)\psi\>=\frac{1}{2\pi m}\int_X\int_0^\infty\int_0^\infty e^{\frac{it}{2m}(p_2^2-p_1^2)}\overline{\hat\fii\big((-1)^kp_1\big)}\hat\psi\big((-1)^kp_2\big)\sqrt{p_1p_2}\,\d p_1\,\d p_2\, \d t
$$
for all $X\in\cal B(\R)$, $k=1,\,2$, and all $\fii$ and $\psi$ from the Schwartz space of rapidly decreasing functions. Thus, $\tau^1$ is extreme too.

Let us next consider an example where we show how to construct an observable in a separable Hilbert space with all the optimality properties discussed this far, i.e.,\ a post-processing clean (rank-1) informationally complete extreme observable.

\begin{example}\label{ex:InfocompleteRank1Ext}
 Let $\N_\infty:=\N\cup\{\infty\}$ and $\N_d:=\{1,2,\ldots,d\}$ for each $d\in\N$. Let $\hi_d$, $d\in\N_\infty$, be a $d$-dimensional Hilbert space with $\hi_\infty$ having the orthonormal basis $\{\ket n\}_{n=1}^\infty$. We assume that $\hi_{d_1}\subseteq\hi_{d_2}\subseteq\hi_\infty$ whenever $d_1\leq d_2$ so that, for any $d\in\N$, $\{\ket n\}_{n=1}^d$ is an orthonormal basis for $\hi_d$.

For all $n,\,m\in\N$, pick some numbers $p_{nm}>0$ for which
$$
p:=\sum_{n,m=1}^\infty p_{nm}<\infty.
$$
Let $n,\,m\in\N$ such that $n<m$. Define the following vectors:
$$
f_{nn}:=\ket n,\qquad f_{nm}:=\ket n + \ket m,\qquad f_{mn}:=\ket n - i\ket m,
$$
so that 
\begin{eqnarray*}
\kb{ f_{nm}}{ f_{nm}}&=&\kb n n + \kb n m + \kb m n + \kb m m,\\
\kb{ f_{mn}}{ f_{mn}}&=&\kb n n + i\kb n m - i\kb m n + \kb m m
\end{eqnarray*}
and thus $\kb n n=\kb{f_{nn}}{f_{nn}}$,
\begin{eqnarray*}
2\kb n m &=& \big(\kb{ f_{nm}}{ f_{nm}}-\kb{ f_{nn}}{ f_{nn}}-\kb{ f_{mm}}{ f_{mm}}\big)\\
&&-\,i\big(\kb{ f_{mn}}{ f_{mn}}-\kb{ f_{nn}}{ f_{nn}}-\kb{ f_{mm}}{ f_{mm}}\big),\\
2\kb m n &=& \big(\kb{ f_{nm}}{ f_{nm}}-\kb{ f_{nn}}{ f_{nn}}-\kb{ f_{mm}}{ f_{mm}}\big)\\
&&+\,i\big(\kb{ f_{mn}}{ f_{mn}}-\kb{ f_{nn}}{ f_{nn}}-\kb{ f_{mm}}{ f_{mm}}\big).
\end{eqnarray*}
Hence, for any $d\in\N_\infty$, the linearly independent set\footnote{If $d=\infty$ then the linear span of $\cal B_\infty$ is dense in $\li(\hi_\infty)$ with respect to the weak operator topology.}
$$
\cal B_d:=\big\{p_{nm}\kb{f_{nm}}{ f_{nm}}
\in\li(\hi_d)\,\big|\,n,\,m<d+1 
\big\}
$$
has $d^2$ elements and is a basis of $\li(\hi_d)$.

For each $\cal I\subseteq\N_d\times\N_d$ (or $\cal I\subseteq\N\times\N$ if $d=\infty$) we define a positive trace-class operator
$$
S_{\cal I}:=\sum_{(n,m)\in\cal I} p_{nm}\kb{f_{nm}}{ f_{nm}}\in\li(\hi_d).
$$
Indeed, $\|S_{\cal I}\|\le\tr{|S_{\cal I}|}=
\tr{S_{\cal I}}= \sum_{(n,m)\in\cal I} p_{nm}\|f_{nm}\|^2<2p.$

Let $\cal I\subseteq\N_d\times\N_d$ (or $\cal I\subseteq\N\times\N$ if $d=\infty$), and let
$\cal I_0\subseteq\cal I$ be such that the vectors $f_{nm}\in\hi_d$, $(n,m)\in\cal I_0$, form a basis of the vector space ${\rm lin}\{f_{nm}\,|\,(n,m)\in\cal I\}\subseteq\hi_d$ whose closure is the range of $S_{\cal I}$. Now the maximal number of linearly independent elements of $\{f_{nm}\,|\,(n,m)\in\cal I\}\subseteq\hi_d$ is $\#\cal I_0\le d$. Hence, the rank of $S_{\cal I}$ (and $S_{\cal I_0}\le S_{\cal I}$) is $\#\cal I_0$ and
$$
S_{\cal I}=\sum_{k=1}^{\#\cal I_0} \kb{\fii^{\cal I}_k}{\fii^{\cal I}_k},
$$
where the eigenvectors $\fii_k^{\cal I}\in\hi_d$ form an orthogonal set and the eigenvalues $\|\fii^{\cal I}_k\|^2>0$ are such that $\sum_{k=1}^{\#\cal I_0}\|\fii^{\cal I}_k\|^2=\tr{S_{\cal I}}<\infty$. Note that 
$$
\fii^{\cal I}_k=\|\fii^{\cal I}_k\|^{-2}S_{\cal I}\fii^{\cal I}_k=\|\fii^{\cal I}_k\|^{-2}\sum_{(n,m)\in\cal I} p_{nm}\<f_{nm}|\fii^{\cal I}_k\>f_{nm}\in {\rm ran}\,S_{\cal I}={\rm ran}\,S_{\cal I_0}=\overline{{\rm lin}\{f_{nm}\,|\,(n,m)\in\cal I_0\}}
$$
and $\sum_{k=1}^{\#\cal I_0}\|\fii^{\cal I}_k\|^{-2}\kb{\fii^{\cal I}_k}{\fii^{\cal I}_k}$ is the projection from $\hi_d$ onto the $\#\cal I_0$--dimensional Hilbert space ${\rm ran}\,S_{\cal I_0}$.

We assume\footnote{For example, $\cal I=\N_d\times\N_d$ and $\cal I_0=\{(n,n)\,|\,n\in\N_d\}$ (if $d<\infty$).} next that $\#\cal I_0=d$ so that $\{f_{nm}\,|\,(n,m)\in\cal I_0\}$ is a basis of $\hi_d$ so that $S_{\cal I}$ is of full rank and thus invertible. Let $2^{\cal I}$ be the power set of $\cal I$ and define the following discrete rank-1 POVM $\Mo:\,2^{\cal I}\to\li(\hi_d)$:
$$
\Mo_{nm}:=\kb{\sqrt{p_{nm}} S_{\cal I}^{-1/2} f_{nm}}{\sqrt{p_{nm}} S_{\cal I}^{-1/2} f_{nm}},\qquad (n,m)\in\cal I.
$$
Since, for any complex numbers $c_{nm}$ such that $\sup\{|c_{nm}|\,|\,(n,m)\in\cal I\}<\infty$,
$$
\sum_{(n,m)\in\cal I} c_{nm}\Mo_{nm} =
\sum_{(n,m)\in\cal I} c_{nm}p_{nm}\kb{ S_{\cal I}^{-1/2} f_{nm}}{ S_{\cal I}^{-1/2} f_{nm}}=0
$$
if and only if $\sum_{(n,m)\in\cal I} c_{nm}p_{nm}\kb{ f_{nm}}{ f_{nm}}=S_{\cal I}^{1/2}\sum_{(n,m)\in\cal I} c_{nm}p_{nm}\kb{ S_{\cal I}^{-1/2} f_{nm}}{ S_{\cal I}^{-1/2} f_{nm}}S_{\cal I}^{1/2}=0$ if and only if $c_{nm}\equiv0$, the observable $\Mo$ is extreme with $N=\#\cal I\geq \#\cal I_0=d$ elements. Automatically, $d\leq N \leq d^2$ which must hold for any extreme rank-1 POVM. If $\cal I=\N_d\times\N_d$ (or $\cal I=\N\times\N$ if $d=\infty$) then
$N=d^2$ and $\cal B_d$ is a basis of $\li(\hi_d)$ showing that $\Mo$ is also informationally complete.
Note that, in the case $N=d$, we get a PVM $\Mo$. When $N$ runs from $d$ to $d^2$ the value determination ability weakens but state determination power increases.
\end{example}

\subsection{Joint measurements of extreme observables}

We now concentrate on joint measurements involving extreme observables. To this end, let us recall the sets ${\bf CP}(\cal A;\hi)$ of completely positive unital maps defined on a unital $C^*$-algebra $\cal A$ and taking values in a type-1 factor $\lh$. Let $\cal A$ and $\cal B$ be von Neumann algebras. We say that $\Phi_1\in{\bf CP}(\cal A;\hi)$ and $\Phi_2\in{\bf CP}(\cal B,\hi)$ are {\it compatible}, if there is a {\it joint map} $\Psi\in{\bf CP}(\cal A\otimes\cal B;\hi)$ such that $\Phi_1$ coincides with the margin $\Psi_{(1)}$ and $\Phi_2$ coincides with the margin $\Psi_{(2)}$, where
$$
\Psi_{(1)}(a)=\Psi(a\otimes1_{\cal B}),\quad\Psi_{(2)}(b)=\Psi(1_{\cal A}\otimes b),\qquad a\in\cal A,\quad b\in\cal B.
$$
The following result has been obtained in \cite{HaHePe14}.

\begin{theorem}\label{theor:extmarg}
Let $\cal A$ and $\cal B$ be von Neumann algebras and $\Phi_1\in{\bf CP}(\cal A;\hi)$ and $\Phi_2\in{\bf CP}(\cal B;\hi)$ be compatible.
\begin{itemize}
\item[(i)] If $\Phi_1$ is extreme in ${\bf CP}(\cal A;\hi)$ or $\Phi_2$ is extreme in ${\bf CP}(\cal B;\hi)$ then they have a unique joint map $\Psi\in{\bf CP}(\cal A\otimes\cal B;\hi)$.
\item[(ii)] If both $\Phi_1$ and $\Phi_2$ are extreme then the unique joint map $\Psi$ is extreme in ${\bf CP}(\cal A\otimes\cal B;\hi)$.
\item[(iii)] If $\Phi_1$ or $\Phi_2$ is a *-representation (i.e.,\ especially extreme), one has $\Phi_1(a)\Phi_2(b)=\Phi_2(b)\Phi_1(a)$ for all $a\in\cal A$ and all $b\in\cal B$ and the unique joint map $\Psi\in{\bf CP}(\cal A\otimes\cal B;\hi)$ is given by
$$
\Psi(a\otimes b)=\Phi_1(a)\Phi_2(b),\qquad a\in\cal A,\quad b\in\cal B.
$$
\end{itemize}
\end{theorem}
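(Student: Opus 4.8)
The plan is to reduce all three statements to the structure theorem for joint completely positive maps, Theorem \ref{theor:tulokuvaus}, together with Arveson's extremality criterion, Theorem \ref{theor:ExtArveson}. Fix a joint map $\Psi\in{\bf CP}(\cal A\otimes\cal B;\hi)$ for $\Phi_1$ and $\Phi_2$, pick a minimal Stinespring dilation $(\cal M,\pi,J)$ for $\Phi_1=\Psi_{(1)}$, and let $E\in{\bf CP}(\cal B;\cal M)$ be the unique map furnished by Theorem \ref{theor:tulokuvaus}, so that $\Psi(a\otimes b)=J^*\pi(a)E(b)J$, each $E(b)$ belongs to the commutant $({\rm ran}\,\pi)'$, and $\Phi_2(b)=\Psi(1_{\cal A}\otimes b)=J^*E(b)J$. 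The key observation is that $\Psi$ is completely recovered from $E$ via the displayed formula, while $E$ is a completely positive map with values in $({\rm ran}\,\pi)'$ whose compression by $J$ is $\Phi_2$; hence distinct joint maps yield distinct such $E$'s.

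For (i), suppose first that $\Phi_1$ is extreme. If $\Psi,\Psi'$ are two joint maps with associated maps $E,E'$, then $E(b)-E'(b)\in({\rm ran}\,\pi)'$ for every $b\in\cal B$ and $J^*\big(E(b)-E'(b)\big)J=\Phi_2(b)-\Phi_2(b)=0$. By Theorem \ref{theor:ExtArveson}, extremality of $\Phi_1$ is exactly the injectivity of the map $({\rm ran}\,\pi)'\ni D\mapsto J^*DJ$, so $E=E'$ and therefore $\Psi=\Psi'$. If instead $\Phi_2$ is the extreme one, transport $\Psi$ along the flip ${}^*$-isomorphism $\cal A\otimes\cal B\cong\cal B\otimes\cal A$ to obtain $\tilde\Psi\in{\bf CP}(\cal B\otimes\cal A;\hi)$ with $\tilde\Psi(b\otimes a)=\Psi(a\otimes b)$, so that $\tilde\Psi_{(1)}=\Phi_2$; applying Theorem \ref{theor:tulokuvaus} to $\tilde\Psi$ with a minimal dilation of $\Phi_2$ and repeating the argument (with the roles of $\Phi_1$ and $\Phi_2$ interchanged) shows that $\tilde\Psi$, hence $\Psi$, is unique.

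Claim (ii) is then formal. If $\Psi=p\Psi'+(1-p)\Psi''$ with $p\in(0,1)$ and $\Psi',\Psi''\in{\bf CP}(\cal A\otimes\cal B;\hi)$, then taking margins gives $\Phi_1=p\Psi'_{(1)}+(1-p)\Psi''_{(1)}$ and $\Phi_2=p\Psi'_{(2)}+(1-p)\Psi''_{(2)}$ with all four maps lying in the appropriate ${\bf CP}$-sets; since $\Phi_1$ and $\Phi_2$ are extreme, $\Psi'_{(1)}=\Psi''_{(1)}=\Phi_1$ and $\Psi'_{(2)}=\Psi''_{(2)}=\Phi_2$, so $\Psi'$ and $\Psi''$ are both joint maps of $\Phi_1$ and $\Phi_2$ and, by (i), equal the unique $\Psi$; thus $\Psi$ is extreme. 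For (iii), if $\Phi_1$ is a ${}^*$-representation, then $(\hi,\Phi_1,I_\hi)$ is a minimal Stinespring dilation of $\Phi_1$ (minimality because $\Phi_1(1_{\cal A})=I_\hi$), and Theorem \ref{theor:tulokuvaus} applied with $\cal M=\hi$, $\pi=\Phi_1$, $J=I_\hi$ gives $\Psi(a\otimes b)=\Phi_1(a)E(b)$ with $\Phi_1(a)E(b)=E(b)\Phi_1(a)$ and $E(b)=\Psi(1_{\cal A}\otimes b)=\Phi_2(b)$, whence $\Psi(a\otimes b)=\Phi_1(a)\Phi_2(b)=\Phi_2(b)\Phi_1(a)$; the case where $\Phi_2$ is a representation follows by the same argument after the flip.

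I expect the only delicate point to be bookkeeping rather than hard analysis: one must make sure that the passage $\Psi\mapsto E$ of Theorem \ref{theor:tulokuvaus} is injective on joint maps (so that uniqueness of $E$ yields uniqueness of $\Psi$), and that invoking that theorem with the roles of $\cal A$ and $\cal B$ exchanged is legitimate, i.e.\ that the flip ${}^*$-isomorphism carries joint maps to joint maps and first/second margins to second/first margins. Granting the two quoted theorems, no further estimates are needed.
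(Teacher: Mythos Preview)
Your proof is correct. The paper itself does not prove Theorem \ref{theor:extmarg}; it merely quotes the result from \cite{HaHePe14}. Your argument is precisely the natural one given the two ingredients the paper does state, namely the structure Theorem \ref{theor:tulokuvaus} and Arveson's criterion Theorem \ref{theor:ExtArveson}, and it matches the approach of the cited reference: for (i) you compare the commutant-valued maps $E,E'$ associated with two joint maps sharing the same dilation of $\Phi_1$ and kill their difference via injectivity of $D\mapsto J^*DJ$; (ii) follows by passing to margins; and (iii) is the special case $(\cal M,\pi,J)=(\hi,\Phi_1,I_\hi)$. The two bookkeeping points you flag are harmless: $E$ determines $\Psi$ because $\Psi(a\otimes b)=J^*\pi(a)E(b)J$ fixes $\Psi$ on simple tensors and hence everywhere, and the flip isomorphism is a ${}^*$-isomorphism interchanging the margins, so the symmetric cases are genuinely covered.
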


Applying Theorem \ref{theor:extmarg} to the case of an extreme observable and other measurement devices compatible with this observable, we obtain the following \cite{HaHePe14}:

\begin{theorem}
Let observables $\Mo:\Sigma\to\lh$ and $\Mo':\Sigma'\to\lh$ be jointly measurable, and let $\Phi:\lk\to\lh$ be a channel compatible with $\Mo$. If $\Mo$ is extreme then
\begin{itemize}
\item[(i)] the $\Mo$-instrument $\cal J:\lk\times\Sigma\to\lh$ such that $\cal J(B,\Omega)=\Phi(B)$ for all $B\in\lk$ is unique and
\item[(ii)] the joint observable $\No:\Sigma\otimes\Sigma'\to\lh$ for $\Mo$ and $\Mo'$ is unique.
\end{itemize}
Moreover, if $\Mo$ is a PVM,
\begin{itemize}
\item[(i)'] $\Mo(X)\Phi(B)=\Phi(B)\Mo(X)$ for all $X\in\Sigma$ and all $B\in\lk$ and the only $\Mo$-instrument $\cal J:\lk\times\Sigma\to\lh$ such that $\cal J(\cdot,\Omega)=\Phi$ is given by
$$
\cal J(B,X)=\Phi(B)\Mo(X)=\Mo(X)^{1/2}\Phi(B)\Mo(X)^{1/2},\qquad B\in\lk,\quad X\in\Sigma,
$$
and
\item[(ii)'] $\Mo(X)\Mo'(Y)=\Mo'(Y)\Mo(X)$ for all $X\in\Sigma$ and all $Y\in\Sigma'$ and the only joint observable $\No:\Sigma\otimes\Sigma'\to\lh$ for $\Mo$ and $\Mo'$ is determined by
$$
\No(X\times Y)=\Mo(X)\Mo'(Y)=\Mo(X)^{1/2}\Mo'(Y)\Mo(X)^{1/2},\qquad X\in\Sigma,\quad Y\in\Sigma'.
$$
\end{itemize}
\end{theorem}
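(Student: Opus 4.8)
The plan is to translate the statement into the operator-algebraic setting of Theorem \ref{theor:extmarg}. Fix a probability measure $\mu\sim\Mo$ and identify $\Mo$ with the unital completely positive map $\Phi_\Mo\in{\bf CP}\big(L^\infty(\mu);\hi\big)$, $f\mapsto\int f\,\d\Mo$; its minimal Stinespring dilation is precisely the minimal Na\u{\i}mark dilation $(\hd,\Po_\oplus,J_\oplus)$ of Theorem \ref{th1}, with $\pi(f)=\hat f$ and $\Po_\oplus(X)=\pi(\CHI X)$. The first point to establish is that $\Mo$ is extreme in $\O(\Sigma,\hi)$ if and only if $\Phi_\Mo$ is extreme in ${\bf CP}\big(L^\infty(\mu);\hi\big)$: the commutant $({\rm ran}\,\pi)'$ is exactly the algebra of decomposable operators on $\hd$, so the injectivity criterion of Theorem \ref{theor:ExtArveson} applied to $\Phi_\Mo$ is verbatim the condition of Theorem \ref{theor:extobs}. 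In the same spirit a channel $\Phi:\lk\to\lh$ is an element of ${\bf CP}(\lk;\hi)$, and a second observable $\Mo'$ is identified with $\Phi_{\Mo'}\in{\bf CP}\big(L^\infty(\nu);\hi\big)$ for a probability measure $\nu\sim\Mo'$.

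Next I would record the dictionary between joint-measurement objects and joint maps. An $\Mo$-instrument $\mathcal J:\lk\times\Sigma\to\lh$ with $\mathcal J(\,\cdot\,,\Omega)=\Phi$ corresponds bijectively to a joint map $\Psi\in{\bf CP}\big(L^\infty(\mu)\otimes\lk;\hi\big)$ with margins $\Psi_{(1)}=\Phi_\Mo$, $\Psi_{(2)}=\Phi$, via $\mathcal J(B,X)=\Psi(\CHI X\otimes B)$; similarly a joint observable $\No$ of $\Mo$ and $\Mo'$ corresponds bijectively to a joint map $\Psi\in{\bf CP}\big(L^\infty(\mu)\otimes L^\infty(\nu);\hi\big)$ with margins $\Phi_\Mo$, $\Phi_{\Mo'}$ and $\No(X\times Y)=\Psi(\CHI X\otimes\CHI Y)$. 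Granting this, items (i) and (ii) are immediate from Theorem \ref{theor:extmarg}(i): since $\Phi_\Mo$ is extreme, in either case the joint map $\Psi$ is uniquely determined by $\Phi_\Mo$ together with the second device, hence so is $\mathcal J$, respectively $\No$.

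For the second half, assume $\Mo$ is a PVM, so that $\Phi_\Mo=\pi$ is a unital ${}^*$-representation. Theorem \ref{theor:extmarg}(iii) then applies directly and delivers two things at once: commutativity, $\pi(\CHI X)\Phi(B)=\Phi(B)\pi(\CHI X)$, i.e.\ $\Mo(X)\Phi(B)=\Phi(B)\Mo(X)$ for all $X\in\Sigma$ and $B\in\lk$ (respectively $\Mo(X)\Mo'(Y)=\Mo'(Y)\Mo(X)$), and the factorised form of the unique joint map, $\Psi(a\otimes b)=\pi(a)\Phi(b)$ (respectively $\pi(a)\Phi_{\Mo'}(b)$), which under the dictionary reads $\mathcal J(B,X)=\Phi(B)\Mo(X)$ and $\No(X\times Y)=\Mo(X)\Mo'(Y)$. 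Finally, because $\Mo(X)$ is a projection commuting with $\Phi(B)$ (respectively $\Mo'(Y)$), one has $\Mo(X)=\Mo(X)^{1/2}$ and $\Phi(B)\Mo(X)=\Mo(X)^{1/2}\Phi(B)\Mo(X)^{1/2}$, which gives the symmetrised expressions in (i)$'$ and (ii)$'$.

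The step I expect to be the real work is the dictionary of the second paragraph: verifying that the positive-operator bimeasure coming from an instrument, $(B,X)\mapsto\mathcal J(B,X)$, respectively from a joint observable, $(X,Y)\mapsto\No(X\times Y)$, extends to, and is recovered from, a unique normal completely positive map on the von Neumann tensor product. This rests on the abelianness of $L^\infty(\mu)$ together with the $\sigma$-additivity and normality built into POVMs and instruments, and is exactly what the structural results of \cite{HaHePe14} provide. If one prefers to sidestep this bookkeeping, items (i) and (ii) also admit a more hands-on proof: by Theorem \ref{theor:M-compatible} every $\Mo$-instrument has the form $\mathcal J(B,X)=J_\oplus^*T(B)\Po_\oplus(X)J_\oplus$ for a decomposable channel $T$ (with $T$ forced to be $\Phi$ transported to $\hd$ when the total channel is $\Phi$), and by Theorem \ref{theor:JMF} every joint observable has the form $\No(X\times Y)=J_\oplus^*\Po_\oplus(X)\Fo(Y)J_\oplus$ with $\Fo$ commuting with $\Po_\oplus$, hence decomposable; in both cases two candidates differ by a decomposable operator annihilated by $J_\oplus^*(\,\cdot\,)J_\oplus$, which vanishes by Theorem \ref{theor:extobs}. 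The PVM statements (i)$'$ and (ii)$'$ then follow since decomposable operators commute with the diagonalizable $\Po_\oplus(X)\cong\Mo(X)$.
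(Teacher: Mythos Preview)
Your proposal is correct and follows exactly the route the paper indicates: the paper does not give an explicit proof here but simply states that the result is obtained by applying Theorem \ref{theor:extmarg} to the measurement devices at hand (with a reference to \cite{HaHePe14}), and your dictionary between observables/instruments/joint observables and CP maps/joint maps is precisely what makes that application go through. Your additional hands-on alternative via Theorems \ref{theor:M-compatible}, \ref{theor:JMF}, and \ref{theor:extobs} is a nice bonus that the paper does not spell out.
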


The above result essentially means that there is only one way in which an extreme observable can be measured if we fix the unconditioned state transformation associated with the measurement. Similarly, there is only one observable incorporating an extreme marginal observable and some other fixed observable. The corresponding conditions for joint measurements involving PVMs, being from a special subclass of extreme observables, are even more stringent: compatibility or joint measurability with a PVM requires that the other measurement device (observable or channel) commutes with the PVM.

\section{Observables determining their values}

We say that an observable $\Mo:\Sigma\to\lh$ determines its values if for any set of its outcomes we may prepare the system into a state such that, in a measurement of $\Mo$, the values obtained are approximately localized within the given set. Formally, this means that, for any $X\in\Sigma$ such that $\Mo(X)\ne 0$ and any $\varepsilon\in(0,1]$, there is a state $\rho\in\sh$ such that $p_\rho^\Mo(X)>1-\varepsilon$.

Suppose that $\Mo:\Sigma\to\lh$ determines its values and $X\in\Sigma$ is such that $\Mo(X)>0$. We may evaluate for any $\varepsilon\in(0,1]$
$$
\|\Mo(X)\|
%=\sup_{\tr{|T|}\leq1}\tr{T\Mo(X)}
=\sup_{\rho\in\sh}\tr{\rho\Mo(X)}=\sup_{\rho\in\sh}p_\rho^\Mo(X)>1-\varepsilon
$$ 
showing that $\|\Mo(X)\|=1$. Indeed, we see that this reasoning can easily be inverted: an observable determines its values if and only if it has the {\it norm-1 property}, i.e.,\ $\|\Mo(X)\|=1$ for all $X\in\Sigma$
such that $\Mo(X)\ne 0$.

A more stringent condition than the norm-1 property is the {\it eigenvalue-1 property}: $\Mo\in\O(\Sigma,\hi)$ is an {\it eigenvalue-1} observable if and only if, whenever $X\in\Sigma$ is such that $\Mo(X)\neq0$, then $\Mo(X)$ has the eigenvalue 1. This means that for any $X\in\Sigma$ such that $\Mo(X)\neq0$ there is a state $\rho_X\in\sh$ ``localized'' in $X$ in the sense that $p_{\rho_X}^\Mo(X)=1$, i.e.,\ the approximate ``$\varepsilon$-localization'' associated with norm-1 observables can be replaced with exact localization. Of course, we may always assume that $\rho_X$ above is pure.
Clearly, PVMs are eigenvalue-1 observables and there exist norm-1 POVMs which have not eigenvalue-1 property, e.g.\ the canonical phase  $\mathsf\Phi_{\rm can}.$

Consider then a norm-1 observable $\Mo\in\O(\Sigma,\hi)$ in a finite-dimensional Hilbert space. Denote $d=\dim\hi$. Since any effect has fully discrete spectrum in finite dimensions, $\Mo$ has the eigenvalue-1 property. For $i=1,\,2$, let the sets $X_i\in\Sigma$ and unit vectors $\fii_i\in\hi$ be such that $\Mo(X_i)\fii_i=\fii_i$ and $X_1\cap X_2=\emptyset$ (we assume that $\Mo$ is not trivial). By using the minimal Na\u{\i}mark dilation for $\Mo$, one gets $\<\fii_1|\fii_2\>=0$. Hence, there exist at most $d$ pairwise disjoint sets $X\in\Sigma$ such that $\Mo(X)\ne 0$. If $(\Omega,\Sigma)$ is, e.g.,\ nice we may conclude that $\Mo$ is discrete, i.e.\ there exist points $x_i\in\Omega$, $i=1,\ldots,N\le d$, so that $\Mo=\sum_{i=1}^N \Mo(A_{x_i})\delta_{x_i}$ where $A_{x_i}$ is the atom associated with $x_i$. This result follows by using an exactly measurable function $f:\Omega\to\R$ such that $f(\Omega)\in\cal B(\R)$ and results of \cite[Section 5]{HaHePe12}.

Finally, let us recall that an eigenvalue-1 observable cannot be informationally complete even within the set of pure states. Indeed, it is easy show that this holds in arbitrary (nonseparable) Hilbert spaces. The question, whether a norm-1 observable can be informationally complete, remains to be answered.

\section{Pre-processing and pre-processing maximality}

Let us start this section with a definition.

\begin{definition}
Let $(\Omega,\Sigma)$ be a measurable space and $\hi$ and $\hi'$ be Hilbert spaces. We say that an observable $\Mo':\Sigma\to\li(\hi')$ is a {\it pre-processing} of an observable $\Mo:\Sigma\to\lh$ if there is a channel $\Phi:\lh\to\li(\hi')$ such that
$\Mo'(X)=\Phi\big(\Mo(X)\big)$ for all $X\in\Sigma.$
\end{definition}

The above definition means that $p_\sigma^{\Mo'}=p_{\Phi_*(\sigma)}^{\Mo}$ for all $\sigma\in\cal S(\hi')$, that is, we may measure $\Mo'$ by first transforming the system with the channel $\Phi$ and then measuring $\Mo$. The predual channel $\Phi_*$ is here seen as a form of quantum pre-processing that is used to process the incoming state carrying quantum information before the measurement of $\Mo$.

Pre-processing gives naturally rise to a partial order within the class of observables with the fixed value space $(\Omega,\Sigma)$ and varying system's Hilbert space $\hi$ \cite{BuKeDPeWe2005}. We denote $\Mo'\leq_{\rm pre}\Mo$ if $\Mo'$ is a pre-processing of $\Mo$ by some channel. We may thus ask which are the pre-processing maximal or clean observables. Maximality of an $\Mo\in\O(\Sigma,\hi)$ means that if $\Mo\leq_{\rm pre}\Mo'$ with some observable $\Mo':\Sigma\to\li(\hi')$ such that  $\Mo'\sim\Mo$ (i.e.\ $\Mo'(X)=0$ exactly when $\Mo(X)=0$) then $\Mo'\leq_{\rm pre}\Mo$. 

In the following subsections, we characterize the pre-processing maximal observables first in the case of discrete outcomes and then for the general case. The reason for this division is that we may formulate the maximality in a tighter fashion for discrete observables. Let us first recall a result from \cite{Pe11}:

\begin{theorem}\label{theor:JP-ExtPure}
Suppose that $\hi$ and $\hi'$ are separable Hilbert spaces, $\Po:\Sigma\to\li(\hi')$ is a sharp observable (a PVM), and $\Mo:\Sigma\to\lh$ is some observable such that $\Mo\ll\Po$. There exists a channel $\Phi':\li(\hi')\to\lh$ such that $\Mo(X)=\Phi'\big(\Po(X)\big)$ for all $X\in\Sigma$, i.e.,\ $\Mo\leq_{\rm pre}\Po$.
\end{theorem}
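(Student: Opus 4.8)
The plan is to produce the channel $\Phi'$ in the form $\Phi'(B)=U^*(B\otimes I_{\ell^2})U$ for a suitable isometry $U:\hi\to\hi'\otimes\ell^2$, where $\ell^2$ is a fixed separable infinite-dimensional Hilbert space. Such a map is automatically a channel, being the composition of the normal unital $*$-homomorphism $B\mapsto B\otimes I_{\ell^2}$ with the normal unital completely positive compression $C\mapsto U^*CU$ into $\lh$; moreover $\Phi'\big(\Po(X)\big)=\Mo(X)$ holds as soon as $U^*\big(\Po(X)\otimes I_{\ell^2}\big)U=\Mo(X)$ for all $X\in\Sigma$. The reason for tensoring with $\ell^2$ is that the minimal Na\u{\i}mark dilation of $\Mo$ may have larger pointwise multiplicity than $\Po$ itself, so $\Mo$'s dilation space need not embed into $\hi'$; amplifying $\Po$ to uniform infinite multiplicity removes this obstruction at no cost, since $B\mapsto B\otimes I_{\ell^2}$ is unital and sends $\Po(X)$ to $\Po(X)\otimes I_{\ell^2}$.

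Concretely, I would first fix a probability measure $\mu$ with $\mu\sim\Po$ (available since $\hi'$ is separable); then $\Mo\ll\Po$ gives $\Mo\ll\mu$. Applying Theorem \ref{th1} to $\Mo$ with this $\mu$ yields the minimal Na\u{\i}mark dilation $(\hd,\Po_\oplus,J_\oplus)$ with $\hd=\int_\Omega^\oplus\hi(x)\,\d\mu(x)$, $\Po_\oplus$ the canonical spectral measure, $\Mo(X)=J_\oplus^*\Po_\oplus(X)J_\oplus$, and $\dim\hi(x)\le\dim\hi$. A short computation using minimality shows that $\Mo\ll\Po$ forces $\Po_\oplus\ll\Po$: if $\Po(X)=0$ then $\Mo(X\cap Y)=0$ for all $Y\in\Sigma$, whence $\Po_\oplus(X)\Po_\oplus(Y)J_\oplus\fii=\Po_\oplus(X\cap Y)J_\oplus\fii=0$ for all $Y,\fii$, and density of the vectors $\Po_\oplus(Y)J_\oplus\fii$ yields $\Po_\oplus(X)=0$. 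Next, since $\Po$ is a PVM, Theorem \ref{th1}(iii) identifies $\hi'$ with a direct integral $\int_\Omega^\oplus\hi'(x)\,\d\mu(x)$ on which $\Po$ is the canonical spectral measure, with $\dim\hi'(x)\ge1$ for $\mu$-a.a.\ $x$ (otherwise $\Po$ would vanish on a $\mu$-positive set). Then $\hi'\otimes\ell^2\cong\int_\Omega^\oplus\big(\hi'(x)\otimes\ell^2\big)\,\d\mu(x)$ has infinite fibers $\mu$-a.e., so one can choose a measurable field of isometries $\hi(x)\hookrightarrow\hi'(x)\otimes\ell^2$ and assemble a decomposable isometry $V:\hd\to\hi'\otimes\ell^2$ intertwining $\Po_\oplus$ with $\Po\otimes I_{\ell^2}$; in particular $V^*\big(\Po(X)\otimes I_{\ell^2}\big)V=\Po_\oplus(X)$. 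Setting $U:=VJ_\oplus$ gives $U^*\big(\Po(X)\otimes I_{\ell^2}\big)U=J_\oplus^*\Po_\oplus(X)J_\oplus=\Mo(X)$, so $\Phi'$ as above is the desired channel and $\Mo\leq_{\rm pre}\Po$.

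The step needing the most care is the construction of the decomposable isometry $V$, i.e.\ the assertion that $\Po_\oplus$ embeds intertwiningly into the amplified PVM $\Po\otimes I_{\ell^2}$. Abstractly this is the standard fact from spectral multiplicity theory that one PVM admits an intertwining isometric embedding into another precisely when it is absolutely continuous with respect to it and its pointwise multiplicity function is dominated; here the target has infinite multiplicity wherever it is nonzero, and $\Po_\oplus\ll\Po$ shows $\Po_\oplus$ is supported inside the support of $\Po$, so both hypotheses hold. Making the embedding measurable in $x$ is routine direct-integral bookkeeping: partition $\Omega$ according to the (measurable) values of $\dim\hi(x)$ and $\dim\hi'(x)$, trivialise the fibers over each piece, and pick a fixed isometry between the model spaces. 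Everything else — normality and unitality of $B\mapsto U^*(B\otimes I_{\ell^2})U$, and the final identity $\Phi'\circ\Po=\Mo$ — is then immediate.
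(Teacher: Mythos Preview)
The paper does not prove this theorem; it merely recalls it from reference \cite{Pe11} (``Let us first recall a result from \cite{Pe11}''), so there is no proof in the paper to compare against.

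Your argument is correct and is essentially the natural proof. A small simplification: once you fix $\mu\sim\Po$ and build $\hd=\int_\Omega^\oplus\hi(x)\,\d\mu(x)$, the canonical spectral measure $\Po_\oplus$ satisfies $\Po_\oplus\ll\mu\sim\Po$ immediately, so the minimality argument you give for $\Po_\oplus\ll\Po$ is unnecessary (though not wrong). Everything else---the direct-integral diagonalization of the PVM $\Po$ over the same $\mu$ via Theorem \ref{th1}(iii), the amplification by $\ell^2$ to force infinite fiber dimension, the measurable selection of fiberwise isometries by partitioning on $\dim\hi(x)$ and $\dim\hi'(x)$, and the final compression $\Phi'(B)=U^*(B\otimes I_{\ell^2})U$---is sound and standard. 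The only point that genuinely needs care, as you note, is the measurability of the field $x\mapsto\big(\hi(x)\hookrightarrow\hi'(x)\otimes\ell^2\big)$; your sketch of how to do this is the usual one and suffices.
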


\subsection{Case of discrete observables}

The theorem below characterizes pre-processing clean discrete observables.

\begin{theorem}\label{theor:eigenval1}
Suppose that $\Omega$ is a finite or a countably infinite set and $\Mo:2^{\Omega}\to\lh$ is an observable in a separable Hilbert space $\hi$. Then $\Mo$ is pre-processing clean if and only if it has the eigenvalue-1 property.
\end{theorem}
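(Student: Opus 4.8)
The plan is to prove the two implications separately: the forward one rests on the minimal Na\u{\i}mark dilation of $\Mo$, and the converse on a structural decomposition of eigenvalue-1 observables into a PVM and a ``remainder''. For $(\Rightarrow)$, suppose $\Mo$ is pre-processing clean and take the minimal Na\u{\i}mark dilation $(\hd,\Po_\oplus,J_\oplus)$ of Theorem \ref{th1} with $\mu\sim\Mo$. The isometry channel $\Phi(B)=J_\oplus^*BJ_\oplus$ shows $\Mo\leq_{\rm pre}\Po_\oplus$, and $\Po_\oplus\sim\Mo$ (if $\Mo(X)=0$ then $\Po_\oplus(X)J_\oplus=0$, and minimality of the dilation forces $\Po_\oplus(X)=0$). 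Hence pre-processing cleanness yields a channel $\Theta:\lh\to\cal L(\hd)$ with $\Po_\oplus(X)=\Theta(\Mo(X))$ for all $X$. Fix a minimal Stinespring dilation $(\cal N,\pi,V)$ of $\Theta$; since $\Theta$ is normal, $\pi$ is a normal unital $\ast$-representation of $\lh$ and $V:\hd\to\cal N$ an isometry.

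For $X$ with $\Mo(X)\ne0$ we have $\Po_\oplus(X)\ne0$; picking a unit vector $\psi\in\ran\Po_\oplus(X)$, the identity $\langle V\psi|\pi(\Mo(X))V\psi\rangle=\langle\psi|\Po_\oplus(X)\psi\rangle=1=\|V\psi\|^2$ together with $0\le\pi(\Mo(X))\le I$ forces $\pi(\Mo(X))V\psi=V\psi$. Thus $\pi(\Mo(X))$ has eigenvalue $1$, so its spectral projection $\chi_{\{1\}}(\pi(\Mo(X)))$ is nonzero; since the normal homomorphism $\pi$ respects the bounded Borel functional calculus (for instance $\chi_{\{1\}}(A)=\inf_nA^n$ for $0\le A\le I$), this projection equals $\pi\big(\chi_{\{1\}}(\Mo(X))\big)$, whence $\chi_{\{1\}}(\Mo(X))\ne0$, i.e.\ $\Mo(X)$ has eigenvalue $1$. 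Therefore $\Mo$ has the eigenvalue-1 property.

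For $(\Leftarrow)$, assume $\Mo=(\Mo_i)_i$ has the eigenvalue-1 property, and (as always) that all $\Mo_i\ne0$. Let $P_i$ be the spectral projection of $\Mo_i$ for eigenvalue $1$, which is nonzero. If $\Mo_i\psi=\psi$, then $\|\psi\|^2=\sum_k\langle\psi|\Mo_k\psi\rangle=\langle\psi|\Mo_i\psi\rangle+\sum_{k\ne i}\langle\psi|\Mo_k\psi\rangle$ forces $\Mo_k\psi=0$ for $k\ne i$; this gives $\Mo_iP_j=P_j\Mo_i=0$ for $i\ne j$ and $\Mo_iP_i=P_i\Mo_i=P_i$. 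Consequently the $P_i$ have pairwise orthogonal ranges, $P:=\sum_iP_i$ is a projection, and $P\Mo_i=P_i=\Mo_iP$; writing $\hi_0:=P\hi$, one obtains $\Mo_i=\Eo_i\oplus\Fo_i$ on $\hi_0\oplus\hi_0^\perp$ with $\Eo_i:=P_i|_{\hi_0}$ nonzero mutually orthogonal projections summing to $I_{\hi_0}$ --- i.e.\ $\Eo$ is a PVM --- and $\Fo_i:=(I-P)\Mo_i(I-P)$. The compression channel $\Xi(A):=PAP$, viewed as a map $\lh\to\cal L(\hi_0)$, satisfies $\Xi(\Mo_i)=\Eo_i$, so $\Eo\leq_{\rm pre}\Mo$.

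Now let $\Mo':\Sigma\to\li(\hi')$ be any observable with $\Mo\leq_{\rm pre}\Mo'$ and $\Mo'\sim\Mo$ (in fact only $\Mo'\sim\Mo$ is used). Since $\Eo_i\ne0\Leftrightarrow\Mo_i\ne0\Leftrightarrow\Mo_i'\ne0$ we have $\Mo'\ll\Eo$, and Theorem \ref{theor:JP-ExtPure} --- or, explicitly, the measure-and-prepare map $\Theta_0(A):=\sum_i\langle v_i|A|v_i\rangle\Mo_i'$ for unit vectors $v_i\in\ran\Eo_i$, which is normal, unital, completely positive and satisfies $\Theta_0(\Eo_j)=\Mo_j'$ --- gives $\Mo'\leq_{\rm pre}\Eo$. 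Composing with $\Eo\leq_{\rm pre}\Mo$ yields $\Mo'\leq_{\rm pre}\Mo$, so $\Mo$ is pre-processing clean. The two steps I expect to need the most care are the normality/Borel-functional-calculus argument transferring the eigenvalue-1 property from $\pi(\Mo(X))$ back to $\Mo(X)$, and the verification of the orthogonality and block-diagonal relations among the $P_i$ that produce the PVM core $\Eo$; the remaining points (that $\Phi,\Xi,\Theta_0$ are genuine channels, convergence of the relevant sums, and separability bookkeeping for $\hi'$) are routine.
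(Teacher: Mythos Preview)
Your proof is correct, though it differs from the paper's in both directions.

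For $(\Rightarrow)$: the paper uses the rank-1 canonical spectral measure $\Po_\mu$ on $L^2(\mu)$ (with $\mu\sim\Mo$) rather than the full minimal Na\u{\i}mark dilation, and then transfers the eigenvalue-1 property back to $\Mo$ via the \emph{predual}: for a unit vector $\psi$ in the range of $\Po_\mu(X)$ one has $\tr{\Theta_*(\kb\psi\psi)\Mo(X)}=1$, and the spectral decomposition of the state $\Theta_*(\kb\psi\psi)$ immediately yields eigenvalue-1 eigenvectors of $\Mo(X)$. This avoids the Stinespring/functional-calculus detour entirely. Your route is also valid, but note that the step $\pi\big(\chi_{\{1\}}(\Mo(X))\big)\ne0\Rightarrow\chi_{\{1\}}(\Mo(X))\ne0$ tacitly uses faithfulness of $\pi$; this holds because a nonzero normal unital $\ast$-representation of the factor $\lh$ has trivial kernel, a point worth stating explicitly.

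For $(\Leftarrow)$: the paper selects a \emph{single} unit vector $\fii_i$ from each eigenvalue-1 eigenspace and compresses to the rank-1 PVM $\Po_i=\kb{\fii_i}{\fii_i}$ on the span of the $\fii_i$. Your use of the full eigenprojections $P_i$ and the resulting block form $\Mo_i=\Eo_i\oplus\Fo_i$ is precisely the decomposition the paper establishes in the subsequent Theorem~\ref{cor:PrePr<->suorsum} for general (countably generated) value spaces, so your discrete argument effectively anticipates that more general result. Either PVM core works once combined with Theorem~\ref{theor:JP-ExtPure}; your explicit measure-and-prepare channel $\Theta_0$ has the additional merit of not requiring the separability hypothesis on $\hi'$ that Theorem~\ref{theor:JP-ExtPure} carries.
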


\begin{proof}
We give the proof for the `only if' part for an observable with a more general value space $(\Omega,\Sigma)$ where $\Sigma$ is countably generated, since this yields no extra complications. Assume that $\Mo\in\O(\Sigma,\hi)$ is pre-processing clean and $\mu$ is a probability measure on $(\Omega,\Sigma)$ such that $\Mo\sim\mu$. Also define the PVM $\Po_\mu:\Sigma\to\li\big(L^2(\mu)\big)$, $\big(\Po_\mu(X)\psi\big)(x)=\CHI X(x)\psi(x)$ for all $X\in\Sigma$, all $\psi\in L^2(\mu)$, and $\mu$-a.a.\ $x\in\Omega$. Hence, since $L^2(\mu)$ is separable, according to Theorem \ref{theor:JP-ExtPure}, $\Po_\mu(X)=\Phi\big(\Mo(X)\big)$ where $\Phi:\lh\to\cal L\big(L^2(\mu)\big)$ is a channel and thus, for all $\rho\in\mathcal S\big(L^2(\mu)\big)$, $X\in\Sigma$, one has $\tr{\rho\Po_\mu(X)}=\tr{\Phi_*(\rho)\Mo(X)}$. We define $\rho_X:=\mu(X)^{-2}\kb{\CHI X}{\CHI X}\in\mathcal S\big(L^2(\mu)\big)$ when $\mu(X)>0$ (or $\Mo(X)\ne 0$). Let $\Phi_*(\rho_X)=\sum_{n=1}^{r}\lambda_n\kb{\fii_n}{\fii_n}$, $\lambda_n>0$, $\sum_{n=1}^r\lambda_n=1$, $\<\fii_n|\fii_m\>=\delta_{nm}$, be the spectral decomposition of the state $\Phi_*(\rho_X)$. Now $1=\tr{\rho_X\Po_\mu(X)}=\tr{\Phi_*(\rho_X)\Mo(X)}=\sum_{n=1}^r\lambda_n\<\fii_n|\Mo(X)\fii_n\>$ implying that $\<\fii_n|\Mo(X)\fii_n\>=1$ and thus $\|\sqrt{\Mo(\Om\setminus X)}\fii_n\|^2=\<\fii_n|\Mo(\Om\setminus X)\fii_n\>=\<\fii_n|[I_\hi-\Mo(X)]\fii_n\>=0$ or 
$\Mo(\Om\setminus X)\fii_n=\sqrt{\Mo(\Om\setminus X)}\sqrt{\Mo(\Om\setminus X)}\fii_n=0$ or
$\Mo(X)\fii_n=1\cdot\fii_n$ for all $n<r+1$.

Let us prove the `if' part for a discrete observable $\Mo$ which we identify with the effects $\Mo_i=\Mo\big(\{x_i\}\big)\neq0$ such that $\sum_{i=1}^{N}\Mo_i=I_\hi$, $N\in\N_\infty$ (without restricting generality, we assume that $\Omega=\{x_i\}_{i=1}^N$). Suppose then that any $\Mo(X)$, $X\ne\emptyset$, has the eigenvalue 1; denote the projection onto the corresponding eigenspace by $P_X$ so that $P_X\Mo(X)=P_X=\Mo(X)P_X$. If $\emptyset\ne X\subseteq Y$ then $\<\psi|\Mo(X)\psi\>\leq \<\psi|\Mo(Y)\psi\>$ for all $\psi\in\hi$ and $\Mo(X)\fii=\fii$ implies $\Mo(Y)\fii=\fii$, that is, $P_X\leq P_Y$. Similarly, if $X\cap Y=\emptyset$, then $P_XP_Y=0$. Let us pick, for all $i<N+1$, $\fii_i\in P_{\{x_i\}}\hi$, $\|\fii_i\|=1$, and define the projection $R:=\sum_{i=1}^N\kb{\fii_i}{\fii_i}$. Using the above results, one immediately sees that $R\Mo_iR=\kb{\fii_i}{\fii_i}$. Define the channel $\Psi:\lh\to\cal L(R\hi)$, $\Psi(A)=RAR$. Now $\Psi$ pre-processes $\Mo$ into the sharp observable $\Po:2^{\Omega}\to\cal L(R\hi)$ defined via $\Po(\{x_i\}):=\kb{\fii_i}{\fii_i}$. If $\Mo$ was a pre-processing of another observable $\Mo'$ on $2^\Omega$ (automatically $\Mo'\sim\Mo\sim\Po$) then $\Po$ would also be a pre-processing of $\Mo'$. According to Theorem \ref{theor:JP-ExtPure}, $\Mo'\leq_{\rm pre}\Po$ and, thus, $\Mo'\leq_{\rm pre}\Mo$.
\end{proof}

Note that the first part of the proof above shows that, even in the case where the value space $(\Omega,\Sigma)$ of a pre-processing maximal observable $\Mo$ is countably generated, $\Mo$ necessarily possesses the eigenvalue-1 property. This tells us that pre-processing maximal observables (with countably generated value spaces) determine their values and are not informationally complete.

\subsection{Case of general observables}\label{sec:preprcont}

For the characterization of pre-processing clean observables with more general (countably generated) value spaces, we need first some auxiliary results. The proof of the following lemma follows closely the one of \cite[Lemma 8.1]{kirja}.

\begin{lemma}\label{lemma:comm}
Let $A\in\lh$, $0\leq A\leq I_\hi$, and $R\in\ph$, where $\hi$ is a Hilbert space. If $RAR\in\ph$ then $A$ and $R$ commute.
\end{lemma}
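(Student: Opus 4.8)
I want to show that if $A \in \lh$ with $0 \le A \le I_\hi$, $R$ is a projection, and $RAR$ is again a projection, then $AR = RA$. The natural strategy is to work in the block-matrix decomposition of $\hi$ induced by $R$, write $\hi = R\hi \oplus R^\perp\hi$, and expose the off-diagonal block of $A$ as the obstruction to commutativity. Concretely, I would write
$$
A = \begin{pmatrix} A_{11} & A_{12} \\ A_{12}^* & A_{22} \end{pmatrix}
$$
with respect to this decomposition, where $A_{11} = RAR|_{R\hi} \in \li(R\hi)$, $A_{22} = R^\perp A R^\perp|_{R^\perp\hi}$, and $A_{12} = RAR^\perp|_{R^\perp\hi \to R\hi}$. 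Since $AR = RA$ is equivalent to $A_{12} = 0$, the whole proof reduces to showing $A_{12}=0$.

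First I would record the hypothesis $RAR \in \ph$ as: $A_{11}$ is a projection on $R\hi$, i.e. $A_{11} = A_{11}^2 = A_{11}^*$, and $0 \le A_{11} \le I_{R\hi}$. The key analytic input is the operator inequality coming from positivity of $A$: for a positive block operator $\bigl(\begin{smallmatrix} A_{11} & A_{12} \\ A_{12}^* & A_{22}\end{smallmatrix}\bigr) \ge 0$, one has the well-known factorization $A_{12} = A_{11}^{1/2} C A_{22}^{1/2}$ for some contraction $C$ (Douglas's lemma / the standard $2\times 2$ positivity criterion), and symmetrically, using $A \le I_\hi$ applied to $I_\hi - A$, one gets $A_{12} = (I_{R\hi}-A_{11})^{1/2} C' (I_{R^\perp\hi} - A_{22})^{1/2}$ for another contraction $C'$. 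Now the point: since $A_{11}$ is a projection, $A_{11}^{1/2} = A_{11}$ and $(I-A_{11})^{1/2} = I - A_{11}$, and these two projections are complementary, so $A_{11}(I - A_{11}) = 0$. From the first factorization, $\ran A_{12} \subseteq \ran A_{11}$, hence $A_{12} = A_{11} A_{12}$; from the second, $A_{12} = (I - A_{11}) A_{12}$. Combining, $A_{12} = A_{11}(I - A_{11}) A_{12} = 0$.

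**The main obstacle** is getting the two range inclusions cleanly. The cleanest route avoids square-root factorizations of $A_{22}$ (which is where one could get bogged down): instead, observe directly that $0 \le A$ forces, for any $\psi \in R^\perp\hi$ with $A_{11}$-related control, the Cauchy--Schwarz-type estimate $|\langle \fii | A_{12}\psi\rangle|^2 = |\langle \fii | A \psi\rangle|^2 \le \langle\fii|A\fii\rangle\langle\psi|A\psi\rangle = \langle\fii|A_{11}\fii\rangle\langle\psi|A_{22}\psi\rangle$ for $\fii \in R\hi$. Taking $\fii \in \ker A_{11}$ (equivalently, in the range of the complementary projection $I - A_{11}$ since $A_{11}$ is a projection) kills the right side, so $A_{12}^*\fii = 0$, i.e. $\ran A_{12} \subseteq \ran A_{11}$. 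Running the identical argument with $I_\hi - A$ in place of $A$ — whose $(1,1)$ block is the projection $I_{R\hi} - A_{11}$ and whose $(1,2)$ block is $-A_{12}$ — gives $\ran A_{12} \subseteq \ran(I_{R\hi}-A_{11})$. Since a projection $A_{11}$ satisfies $\ran A_{11} \cap \ran(I-A_{11}) = \{0\}$, we conclude $A_{12} = 0$, hence $AR = RA$. I expect the only subtlety to be bookkeeping the adjoints and the direction of the inclusions, so I would state the Cauchy--Schwarz estimate carefully and apply it symmetrically; everything else is routine.
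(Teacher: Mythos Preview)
Your argument is correct. The block decomposition is sound, the Cauchy--Schwarz inequality $|\langle\fii|A\psi\rangle|^2\le\langle\fii|A\fii\rangle\langle\psi|A\psi\rangle$ for $A\ge0$ is exactly the right tool, and applying it once to $A$ and once to $I_\hi-A$ gives the two range inclusions $\ran A_{12}\subseteq\ran A_{11}$ and $\ran A_{12}\subseteq\ran(I-A_{11})$, which together force $A_{12}=0$ since $A_{11}$ is a projection.

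The paper, however, takes a considerably shorter route. It never writes down the block decomposition; instead it computes directly
\[
\big((I-R)AR\big)^*\big((I-R)AR\big)=RA(I-R)AR=RA^2R-(RAR)^2=RA^2R-RAR\le 0,
\]
the last inequality coming from $A^2\le A$ (a consequence of $0\le A\le I_\hi$). Since $T^*T\ge0$ always, this forces $(I-R)AR=0$, i.e.\ $AR=RAR$, and then $AR=RAR=(RAR)^*=(AR)^*=RA$. So where you use the two hypotheses $A\ge0$ and $I_\hi-A\ge0$ separately to obtain two range constraints, the paper fuses them into the single inequality $A^2\le A$ and finishes in one line. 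Your approach is more structural and makes the role of the off-diagonal block explicit; the paper's is slicker but leaves the use of $A\le I_\hi$ implicit in the step ``$=0$''.
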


\begin{proof}
Suppose that $RAR$ is a projection. We now have
$$
\big((I-R)AR\big)^*\big((I-R)AR\big)=RA(I-R)AR=0
$$
implying that $\big((I-R)AR\big)=0$, i.e.,\ $AR=RAR$. We obtain
$$
AR=RAR=(RAR)^*=(AR)^*=RA,
$$
proving the claim.
\end{proof}

Let $\hi$ and $\ki$ be Hilbert spaces and $\Phi:\lh\to\lk$ a channel. There is a minimal projection $R\in\ph$ such that $\Phi(R)=I_\ki$, i.e.,\ if $Q\in\ph$ is such that $\Phi(Q)=I_\ki$ and $Q\leq R$, then $Q=R$. Thus, $R$ can be called the support of $\Phi$ (indeed it is uniquely defined by $\Phi$). Moreover, $\Phi(A)=\Phi(RAR)$ for all $A\in\lh$ and, whenever $A\in\lh$ is an effect, $\Phi(A)=0$ if and only if $RAR=0$ \cite[Section 10.8]{kirja}. We now easily obtain the following result.

\begin{lemma}\label{lemma:FiiR}
Let $\hi$ and $\ki$ be Hilbert spaces and $\Phi:\lh\to\lk$ a channel with the support projection $R$. If $\Phi(A)$ is a projection for some effect $A\in\lh$, then $RAR$ is a projection, $RA=AR$, and $A=RAR+ R^\perp AR^\perp.$
\end{lemma}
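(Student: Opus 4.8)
The plan is to first prove that $B:=RAR$ is a projection in $\li(\hi)$, then invoke Lemma~\ref{lemma:comm} to conclude that $A$ and $R$ commute, and finally read off the orthogonal decomposition $A=RAR+R^\perp AR^\perp$ by a one-line computation. The only step requiring thought is the first one; the rest is bookkeeping.

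For that first step I would exploit that a channel $\Phi$ is unital and completely positive, hence $2$-positive, so the Schwarz inequality $\Phi(C)^*\Phi(C)\le\Phi(C^*C)$ holds for every $C\in\lh$. Recall from the remarks preceding the lemma that $\Phi(A)=\Phi(RAR)=\Phi(B)$, and note that $B$ is an effect, since $0\le B=RAR\le RI_\hi R=R\le I_\hi$. Set $P:=\Phi(A)=\Phi(B)$, a projection by hypothesis. Applying the Schwarz inequality to the self-adjoint operator $B$ gives $P=P^2\le\Phi(B^2)$. On the other hand $B^2\le B$, because $B-B^2=B^{1/2}(I_\hi-B)B^{1/2}\ge0$, and therefore $\Phi(B^2)\le\Phi(B)=P$. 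Combining the two inequalities forces $\Phi(B^2)=P$, hence $\Phi(B-B^2)=\Phi(B)-\Phi(B^2)=0$. Now $B-B^2$ is itself an effect ($0\le B-B^2\le B\le I_\hi$), and $R(B-B^2)R=B-B^2$ because $RBR=R(RAR)R=RAR=B$ and likewise $RB^2R=B^2$. The characterization of the support projection recalled just above the lemma — namely, that $\Phi(E)=0$ if and only if $RER=0$ for an effect $E$ — then yields $B-B^2=0$. Thus $RAR$ is a projection.

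With $RAR\in\ph$ established, Lemma~\ref{lemma:comm} applies verbatim to the effect $A$, the projection $R$, and the projection $RAR$, giving $AR=RA$. Finally, writing $I_\hi=R+R^\perp$ and expanding $A=(R+R^\perp)A(R+R^\perp)$, the commutation relation kills the cross terms, $RAR^\perp=ARR^\perp=0$ and $R^\perp AR=R^\perp RA=0$, leaving exactly $A=RAR+R^\perp AR^\perp$.

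I expect the main (mild) obstacle to be the reduction in the first paragraph: one must be careful that the Schwarz inequality is legitimately applied with $C=B=RAR$, that $B$ and $B-B^2$ are genuine effects, and that $\Phi(B-B^2)=0$ is correctly converted into $R(B-B^2)R=0$ via the support property. Once those points are checked, the conclusion is immediate.
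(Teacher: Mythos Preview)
Your proof is correct and follows essentially the same route as the paper's. Both arguments apply the Schwarz inequality to $RAR$ to obtain $\Phi(RAR-RARAR)=0$, then use the defining property of the support projection to deduce $RAR=RARAR$, and finish via Lemma~\ref{lemma:comm}; your introduction of $B=RAR$ and the observation $R(B-B^2)R=B-B^2$ is just a slightly tidier bookkeeping of the same computation (the paper instead writes $RAR-RARAR=R(A-ARA)R$ and checks $A-ARA$ is an effect).
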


\begin{proof}
Let $A\in\lh$, $0\leq A\leq I_\hi$ and assume that $\Phi(RAR)$ is a projection. Using the Schwartz inequality (applicable especially to unital completely positive maps),
$$
\Phi(RAR)=\Phi(RAR)^2\leq\Phi(RARAR)
$$
implying $\Phi(RAR-RARAR)\leq0$. Since $RAR\geq RARAR$, one finds that $\Phi(RAR-RARAR)=0$. Since $RAR-RARAR=R(A-ARA)R$ and $0\leq A-ARA\leq I_\hi$, the properties of the support projection cited above imply that $RAR=RARAR$, i.e.,\ $RAR\in\ph$.
Lemma \ref{lemma:comm} yields $AR=RA$ so that $A=RAR+RAR^\perp+R^\perp AR+R^\perp AR^\perp=RAR+ R^\perp AR^\perp$.
\end{proof}

\begin{theorem}\label{cor:PrePr<->suorsum}
Suppose that the $\sigma$-algebra $\Sigma\subseteq 2^\Omega$ is countably generated and $\Mo:\Sigma\to\lh$ is an observable operating in a separable Hilbert space $\hi$. Then $\Mo$ is pre-processing clean if and only if there exist a closed subspace $\mathcal M\subseteq\hi$, a  PVM $\sfe:\Sigma\to\li(\mathcal M)$, and a  POVM $\Fo:\Sigma\to\mathcal L(\mathcal M^\perp)$ such that $\Mo\sim\sfe$ and
\begin{equation}\label{eq:suorsum}
\Mo(X)=\sfe(X)\oplus\Fo(X),\qquad X\in\Sigma.
\end{equation}
\end{theorem}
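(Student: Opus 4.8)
The plan is to prove both implications from Theorem~\ref{theor:JP-ExtPure} together with the support‑projection calculus of Lemma~\ref{lemma:FiiR}, the point being that a single projection will simultaneously block‑diagonalize every effect $\Mo(X)$.

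For the ``if'' direction, suppose $\Mo(X)=\sfe(X)\oplus\Fo(X)$ relative to $\hi=\mathcal M\oplus\mathcal M^\perp$, with $\sfe$ a PVM on $\mathcal M$, $\Fo$ a POVM on $\mathcal M^\perp$, and $\Mo\sim\sfe$. I would take the compression channel $\Psi:\lh\to\li(\mathcal M)$, $A\mapsto P_{\mathcal M}AP_{\mathcal M}$ restricted to $\mathcal M$ (normal, unital, completely positive), and note that the block‑diagonal form of $\Mo(X)$ gives $\Psi(\Mo(X))=\sfe(X)$, so $\sfe\leq_{\rm pre}\Mo$. Now let $\Mo':\Sigma\to\li(\hi')$ be any observable with $\Mo'\sim\Mo$ (this is all one needs of the maximality hypothesis). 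Then $\Mo'\sim\sfe$, in particular $\Mo'\ll\sfe$; since $\sfe$ is a PVM on the separable space $\mathcal M$, Theorem~\ref{theor:JP-ExtPure} produces a channel $\Theta:\li(\mathcal M)\to\li(\hi')$ with $\Mo'(X)=\Theta(\sfe(X))$. Composing, $\Mo'(X)=(\Theta\circ\Psi)(\Mo(X))$, so $\Mo'\leq_{\rm pre}\Mo$, i.e.\ $\Mo$ is pre‑processing clean.

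For the converse, assume $\Mo$ is pre‑processing clean and fix a probability measure $\mu\sim\Mo$ on $(\Omega,\Sigma)$. The canonical spectral measure $\Po_\mu$ on $L^2(\mu)$ (separable, as $\Sigma$ is countably generated) is a PVM with $\Po_\mu(X)=0\iff\mu(X)=0\iff\Mo(X)=0$, hence $\Po_\mu\sim\Mo$ and in particular $\Mo\ll\Po_\mu$. By Theorem~\ref{theor:JP-ExtPure}, $\Mo\leq_{\rm pre}\Po_\mu$, and since $\Po_\mu\sim\Mo$, maximality of $\Mo$ forces $\Po_\mu\leq_{\rm pre}\Mo$: there is a channel $\Phi:\lh\to\li(L^2(\mu))$ with $\Po_\mu(X)=\Phi(\Mo(X))$ for all $X\in\Sigma$. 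Let $R\in\ph$ be the support projection of this one channel $\Phi$ and put $\mathcal M:=R\hi$. For each $X\in\Sigma$, $\Phi(\Mo(X))=\Po_\mu(X)$ is a projection while $\Mo(X)$ is an effect, so Lemma~\ref{lemma:FiiR} yields $R\Mo(X)R\in\ph$, $R\Mo(X)=\Mo(X)R$, and $\Mo(X)=R\Mo(X)R+R^\perp\Mo(X)R^\perp$. Setting $\sfe(X):=R\Mo(X)R$ regarded on $\mathcal M$ and $\Fo(X):=R^\perp\Mo(X)R^\perp$ regarded on $\mathcal M^\perp$, the commutation makes the splitting block‑diagonal, i.e.\ $\Mo(X)=\sfe(X)\oplus\Fo(X)$; here $\sfe$ is projection‑valued, normalized ($\sfe(\Omega)=R|_{\mathcal M}=I_{\mathcal M}$) and countably additive, hence a genuine PVM, and $\Fo$ is visibly a POVM on $\mathcal M^\perp$. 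Finally, the support‑projection property $\Phi(A)=0\iff RAR=0$ for effects $A$ gives $\sfe(X)=0\iff\Phi(\Mo(X))=0\iff\Po_\mu(X)=0\iff\Mo(X)=0$, so $\Mo\sim\sfe$.

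The main thing to get right is the use of Lemma~\ref{lemma:FiiR}: one must observe that the support projection $R$ belongs to the fixed channel $\Phi$ and is independent of $X$, so a single orthogonal decomposition $\hi=\mathcal M\oplus\mathcal M^\perp$ works for all effects $\Mo(X)$ at once. Everything else is bookkeeping with absolute‑continuity relations, together with the mild observation that a normalized, countably additive, projection‑valued set function is automatically multiplicative and thus a spectral measure.
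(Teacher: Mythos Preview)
Your proof is correct and follows essentially the same route as the paper's: both directions hinge on Theorem~\ref{theor:JP-ExtPure}, and the forward implication uses the support projection of the channel $\Phi$ together with Lemma~\ref{lemma:FiiR} exactly as in the paper. Your write-up is in fact slightly more explicit in noting that $R$ is fixed by $\Phi$ and independent of $X$, and in observing that the hypothesis $\Mo\leq_{\rm pre}\Mo'$ is not actually used in the ``if'' direction (the paper states it but only uses $\Mo'\sim\Mo$).
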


\begin{proof}
Suppose that $\Mo$ is pre-processing clean. Then $\Mo$ can be pre-processed with a channel $\Phi:\lh\to\cal L\big(L^2(\mu)\big)$ into the observable $\Po_\mu$ of the first part of the proof of Theorem \ref{theor:eigenval1}, where $\mu\sim\Mo$. We have by Lemma \ref{lemma:FiiR} that there is an $R\in\ph$ such that $\Phi(R)=I_{L^2(\mu)}$ and $\Mo(X)=R\Mo(X)R+R^\perp\Mo(X)R^\perp$, where $R\Mo(X)R$ is a projection, for all $X\in\Sigma$. Suppose that $R\Mo(X)R=0$ for some $X\in\Sigma$. Now $\Po_\mu(X)=\Phi(R\Mo(X)R)=0$ implying that $\mu(X)=0$. Hence $\Mo\ll R\Mo(\,\cdot\,)R$; the contrary is, of course, automatically satisfied. Thus, we may choose $\mathcal M=R\hi$ and $\sfe(X)=R\Mo(X)R$ and $\Fo(X)=R^\perp\Mo(X)R^\perp$ for all $X\in\Sigma$.

Assume then that the decomposition of $\Mo$ into $\sfe$ and $\Fo$ of the claim exists. Clearly, $\sfe\leq_{\rm pre}\Mo$ (with a rank-1 channel defined by the projection from $\hi$ onto $\mathcal M$). 
If $\Mo':\Sigma\to\li(\hi')$ is an observable such that $\Mo'\sim\Mo\sim\sfe$ and $\Mo\leq_{\rm pre}\Mo'$ then, according to Theorem \ref{theor:JP-ExtPure}, $\Mo'\leq_{\rm pre}\sfe\leq_{\rm pre}\Mo$.
\end{proof}

As also seen in the proof of Theorem \ref{theor:eigenval1}, for $\Mo\in\O(\Sigma,\hi)$ to be pre-processing clean, all the nonzero effects $\Mo(X)$ must have the eigenvalue 1. This is clearly satisfied by the direct sum form \eqref{eq:suorsum} since $\sfe$ is a PVM and, whenever $\sfe(X)=0$, $\Mo(X)=0$ as well. However, the contrary is problematic: if all nonzero $\Mo(X)$ have the eigenvalue 1, does it follow that we have the decomposition \eqref{eq:suorsum} with a fixed subspace $\cal M$? This would mean that Theorem \ref{theor:eigenval1} extends plainly to general observables. We leave this as an open question.
Moreover, in the finite-dimensional case, norm-1 observables also have the eigenvalue-1 property, which for (discrete) POVMs implies post-processing maximality; recall that now norm-1 POVMs are discrete if their outcome spaces are regular enough. An important infinite-dimensional and continuous counter example is the canonical phase observable $\mathsf\Phi_{\rm can}$ which is of norm 1, but none of its effects $\mathsf\Phi_{\rm can}(\Theta)\ne I_\hi$ has the eigenvalue 1. Especially, $\mathsf\Phi_{\rm can}$ is not pre-processing maximal.

A different analysis of pre-processing can result in remarkably different characterizations of pre-processing clean observables. For instance, the authors of \cite{BuKeDPeWe2005} concentrate on finite-outcome observables on a {\it fixed} finite-dimensional Hilbert space $\hi$. In this setting, an $N$-valued observable $\Mo$ in $\hi$ is {\it clean} if for any $N$-valued observable $\Mo'$ in $\hi$ such that there exists a channel $\Phi:\lh\to\lh$ with $\Mo_i=\Phi(\Mo'_i)$, $i=1,\ldots,\,N$, there also exists a channel $\Psi:\lh\to\lh$ such that $\Mo'_i=\Psi(\Mo_i)$, $i=1,\ldots,\,N$. With this definition, the set of clean observables within the set of $N$-valued observables in $\hi$ are exactly those $\Mo$ such that $\|\Mo_i\|=1$ for all $i=1,\ldots,\,N$, in the case where $N\leq\dim\hi$. However, now also the case $N>\dim\hi$ is possible and, in general, any rank-1 observable is clean. The difference in the definition of post-processing and clean observables of \cite{BuKeDPeWe2005} and the corresponding definitions of this paper is that, in \cite{BuKeDPeWe2005} one is restricted to using a single system within which to carry out pre-processing whereas in our analysis one is free to use any systems for pre-processing (no limitation to dimensionality of the Hilbert space from which one pre-processes).

\begin{remark}\rm
Norm-1 observables, and hence also pre-processing maximal observables as eigenvalue-1 observables, are an example of so-called regular observables \cite[Section 11.3]{kirja}: An effect $E\in\lh$ is called {\it regular} if $E\not\leq I_\hi-E$ and $I_\hi-E\not\leq E$ or, equivalently, the spectrum of $E$ extends both above and below $1/2$. 
For example, a rank-1 effect $p\kb{\fii}{\fii}$, $p\in(0,1]$, $\fii\in\hi$, $\|\fii\|=1$, is regular if and only if $p>\frac12$ (if $\dim\hi>1$).
An observable $\Mo:\Sigma\to\lh$ is {\it regular} if $\Mo(X)$ is regular whenever $0\neq\Mo(X)\neq I_\hi$. There exist regular POVMs which are not of norm-1 (e.g.\ $\Mo_1=\frac13\kb11+\frac23\kb22$, $\Mo_2=\frac23\kb11+\frac13\kb22$ is regular but not norm-1).

It is simple to check that whenever $\Mo:\Sigma\to\lh$ is regular, its range ${\rm ran}\,\Mo$ equipped with the intersection $\wedge_{{\rm ran}\,\Mo}$,
$$
\Mo(X)\wedge_{{\rm ran}\,\Mo}\Mo(Y):=\inf\{\Mo(Z)\,|\,\Mo(Z)\leq\Mo(X),\,\Mo(Y)\},\qquad X,\,Y\in\Sigma,
$$
and the complementation $':\Mo(X)\mapsto\Mo(X)'=I_\hi-\Mo(X)$ is a Boolean algebra, i.e.,\ especially $\Mo(X)\wedge_{{\rm ran}\,\Mo}\Mo(X)'=0$ for all $X\in\Sigma$. In fact, the converse is true as well \cite{DvPu94}: if $\Mo:\Sigma\to\lh$ is an observable such that $({\rm ran}\,\Mo,\wedge_{{\rm ran}\,\Mo},')$ described above is a Boolean algebra, then $\Mo$ is regular. Hence, a regular POVM $\Mo$ preserves the `classical' Boolean logic between the Boolean algebras $\Sigma$ and ${\rm ran}\,\Mo$.

Whether a regular observable can be informationally complete remains to be seen. This is not possible in the finite-dimensional case. To see this, let us consider an $N$-valued observable $\Mo=(\Mo_i)_{i=1}^N$ in a $d$-dimensional ($d<\infty$) Hilbert space $\hi$. Taking the trace on both sides of the equation $I_\hi=\sum_{i=1}^N\Mo_i$ and assuming that $\Mo$ is informationally complete and regular, one arrives at $d=\sum_{i=1}^N\tr{\Mo_i}>N/2\geq d^2/2$, where the first inequality follows from regularity and the second from informational completeness. This is possible only if $d=1$.

The above no-go result can be alleviated by relaxing the requirement on informational completeness within the set of {\it all} states. Let us consider an example in the two-dimensional Hilbert space. Fix the Pauli matrices $\sigma_x$, $\sigma_y$, and $\sigma_z$ which in the eigenbasis of $\sigma_z$ take the form
$$
\sigma_x=\left(\begin{array}{cc}0&1\\1&0\end{array}\right),\quad\sigma_y=\left(\begin{array}{cc}0&-i\\i&0\end{array}\right),\quad\sigma_z=\left(\begin{array}{cc}1&0\\0&-1\end{array}\right).
$$
Also denote ${\bf b}\cdot\boldsymbol{\sigma}:=b_x\sigma_x+b_y\sigma_y+b_z\sigma_z$ for any ${\bf b}=(b_x,b_y,b_z)\in\R^3$. We now define the three-valued observable $\Mo=(\Mo_1,\Mo_2,\Mo_3)$ by $\Mo_i=3^{-1}(I+{\bf a}_i\cdot\boldsymbol\sigma)$, $i=1,\,2,\,3$, where
$$
{\bf a}_1=(1,0,0),\quad{\bf a}_2=\Big(-\frac{1}{2},\frac{\sqrt{3}}{2},0\Big),\quad{\bf a}_3=\Big(-\frac{1}{2},-\frac{\sqrt{3}}{2},0\Big).
$$
It is easily checked that $\Mo$ is an extreme rank-1 observable and the non-zero eigenvalue of $\Mo_i$ is 2/3 for each $i=1,\,2,\,3$. Thus, $\Mo$ is regular. Moreover, $\Mo$ is informationally complete within the restricted set of states $\rho$ such that $\tr{\rho\sigma_z}=0$. Thus, $\Mo$ is `more informationally complete' than any PVM can be in the two-dimensional case. Indeed, whenever $\Po=(\kb{d_1}{d_1},\kb{d_2}{d_2})$ is a PVM, the maximal subset of states where $\Po$ is informationally complete is the convex hull of the states $\kb{d_1}{d_1}$ and $\kb{d_2}{d_2}$ parametrized by a single parameter $p\in[0,1]$ whereas one needs two parameters for the set of states $\rho$ for which $\tr{\rho\sigma_z}=0$.
\end{remark}

\section{Conclusions}

In this paper we have identified some important optimality properties of a quantum observable represented mathematically as a POVM $\Mo$: Determination of the past (the pre-measurement state of the system), i.e.,\ informational completeness; freedom from dependence on more informational measurements from which the output data of $\Mo$ may be processed, i.e.,\ post-processing maximality; freedom from interference of different measurement schemes, i.e.,\ extremality; determination of values, i.e.,\ whether for any outcome set $X$ one can prepare the system in a state realizing a value from $X$ with arbitrarily high accuracy; determination of the future, i.e.,\ any measurement of $\Mo$ also works as a state preparator; and freedom from quantum noise, i.e.,\ pre-processing maximality. We have investigated these properties and generalized results known for discrete observables for more general observables. We have also found connections between these conditions: Pre-processing maximality and determination of future are equivalent and both are characterized by the rank-1 property of $\Mo$. Moreover, using a refinement procedure, one can replace an informationally complete (resp.\ extreme) $\Mo$ with an informationally complete (resp.\ extreme) rank-1 POVM $\Mo^{\bm1}$. Determination of values is equivalent with the norm-1 property (i.e.\ $\|\Mo(X)\|=1$ for all outcome sets $X$, $\Mo(X)\ne I_\hi$) and using our characterizations, we immediately see that, when $\Mo$ is preprocessing clean, it automatically defines its values.

We may conclude that there are two major lines of optimality for quantum observables: On one hand,
an observable may be informationally complete, and, at the same time, such an observable may also be free from all kinds of classical noise, i.e.,\ it may be extreme and post-processing clean simultaneously. On the other hand, an observable may define its values, i.e.,\ have the norm-1 property; especially, the observable subtype may be pre-processing clean. However, we are not aware of any norm-1 informationally complete observable; informational completeness requires properties that are strongly opposed to properties found in typical norm-1 observables: unsharpness, noncommutativity, and nonlocalizability, namely the inability to prepare systems into states yielding particular outcomes in the measurement with certainty. Thus the two main optimality criteria, informational completeness (determination of past) and determination of values, at its strongest in eigenvalue-1 observables, appear as complementary properties of a quantum observable. However, an observable may be free from classical noise (extreme rank-1) as well as from quantum noise (pre-processing maximality) simultaneously, in which case the observable is forced to be a rank-1 PVM, which automatically defines its values and the future of the quantum system (only measurements of such observables are preparative) but fails to determine the past of the system.

We have also discussed and reviewed results concerning joint and sequential measurements involving optimal observables. Especially, all the observables which are jointly measurable with a rank-1 observable $\Mo$ are smearings (post-processings) of $\Mo$, and for a jointly measurable pair $(\Mo,\Mo')$ of observables, where either one of the observables is extreme, there exists a unique joint observable giving $\Mo$ and $\Mo'$ as its margins.

\section*{Acknowledgements}

The authors would like to thank T.\ Heinosaari for his remarks on the manuscript and Y.\ Kuramochi for turning the authors' attention to Ref. \cite{Beukema2006}. E.\ H.\ also acknowledges financial support from the Japan Society for the Promotion of Science (JSPS) as an overseas postdoctoral fellow at JSPS.

\end{document}